\title{Fair Tree Connection Games with\\ Topology-Dependent Edge Cost\\{\small (full version)}}
\author{Davide Bil{\`{o}}\thanks{Department of Humanities and Social Sciences, University of Sassari, Italy, \texttt{davide.bilo@uniss.it}} \and Tobias Friedrich\thanks{Hasso Plattner Institute, University of Potsdam, Germany, \texttt{firstname.lastname@hpi.de}} \and Pascal Lenzner\footnotemark[2] \and  Anna Melnichenko\footnotemark[2] \and Louise Molitor\footnotemark[2]}
\newcommand{\SNCG}{TCG\xspace}
\newcommand{\OPT}{OPT\xspace} %social optimum
\newcommand{\cost}{{\mathit cost}}
\newcommand{\indeg}{\mathit indeg}
\newcommand{\dd}{{\mathit d}}
\newcommand{\stp}{\mathbf s} %strategy profile
\newcommand{\FR}{FR\xspace} %fairness ratio
\newtheorem{remark}{Remark}
\newtheorem{theorem}[remark]{Theorem}
\newtheorem{lemma}[remark]{Lemma}
\newtheorem{corollary}[remark]{Corollary}
\newtheorem{conjecture}{Conjecture}
\date{~}
\begin{document}

\maketitle
\begin{abstract}
	\noindent How do rational agents self-organize when trying to connect to a common target? We study this question with a simple tree formation game which is related to the well-known fair single-source connection game by Anshelevich et al.~(FOCS'04) and selfish spanning tree games by Gourvès and Monnot (WINE'08). In our game agents correspond to nodes in a network that activate a single outgoing edge to connect to the common target node (possibly via other nodes). Agents pay for their path to the common target, and edge costs are shared fairly among all agents using an edge. The main novelty of our model is dynamic edge costs that depend on the in-degree of the respective endpoint. This reflects that connecting to popular nodes that have increased internal coordination costs is more expensive since they can charge higher prices for their routing service.
	
	In contrast to related models, we show that equilibria are not guaranteed to exist, but we prove the existence for infinitely many numbers of agents. Moreover, we analyze the  structure of equilibrium trees and employ these insights to prove a constant upper bound on the Price of Anarchy as well as non-trivial lower bounds on both the Price of Anarchy and the Price of Stability. We also show that in comparison with the social optimum tree the overall cost of an equilibrium tree is more fairly shared among the agents. 
	Thus, we prove that self-organization of rational agents yields on average only slightly higher cost per agent compared to the centralized optimum, and at the same time, it induces a more fair cost distribution. Moreover, equilibrium trees achieve a beneficial trade-off between a low height and low maximum degree, and hence these trees might be of independent interest from a combinatorics point-of-view. We conclude with a discussion of promising extensions of our model.    
\end{abstract}

\section{Introduction}
Network Design is an important optimization problem where for a given weighted host graph and a given set of terminal pairs the cheapest subgraph which connects all terminal pairs has to be found. Besides an abundance of research works with an optimization point-of-view, e.g. see the survey by Magnanti and Wong~\cite{MW84}, a strategic version of the Network Design problem~\cite{ADTW08,ADKTWR} has kindled significant interest in recent years. In the \emph{connection game}, a weighted host graph $H$ is given and $n$ agents with given terminal node pairs $(s_i,t_i)$, for $1\leq i \leq n$, strategically select $s_i$-$t_i$-paths in~$H$ to connect their respective terminal nodes. The union of the selected paths forms a subgraph $G$ of~$H$ which constitutes the actually designed network.  
The usage cost of each edge of $H$ corresponds to its weight, and agents using some edge $e$ in $H$ have to pay this cost. If an edge $e$ is used by more than one agent, then a cost-sharing protocol determines how the usage cost of $e$ is split among its users. One of the most common cost-sharing protocols is Shapley cost-sharing where each agent pays a fair share of the edge cost, i.e., the cost-share is the edge cost divided by the number of users. This game-theoretic setting, called \emph{fair connection game}, was investigated by Anshelevich et al.~\cite{ADKTWR} and has since become an influential paper in Algorithmic Game Theory. An important special case is the setting in which all the strategic agents want to connect to a common source node. This variant, where $t_1 = \dots = t_n$ and where every other node is a terminal node of some agent, is usually denoted as the \emph{(fair) single-source connection game}, with the interpretation that all the agents want to connect to a common source node to receive broadcast messages and that the edge cost for connecting to the common source is paid by the downstream users.

A similar related game-theoretic setting are \emph{selfish spanning tree games}~\cite{GM08}. There a weighted complete host graph with $n+1$ nodes, consisting of a common target node $r$ and $n$ nodes which correspond to selfish agents, is given and every selfish agents now selects an incident edge to connect to the common target node $r$ either directly or indirectly via selected edges of other agents. The cost of an agent is then determined by its unique path to $r$. Thus, in any equilibrium the subgraph of all selected edges forms a spanning tree rooted at $r$.   

This paper sets out to investigate a game-theoretic Network Design model that is closely related to the fair single-source connection game and to selfish spanning tree games. The main novel feature of our model is the twist that the cost of the edges in the formed spanning tree depend on its topology. In particular, we consider dynamic edge costs which are proportional to the in-degree of the node they connect to. Network nodes with high in-degree can be considered as popular, and we assume that connecting to popular nodes is more expensive than connecting to unpopular nodes. These dynamic edge costs can also be understood as the internal cost of a node for coordinating data traffic coming from different connections. A node with many incoming edges and thus higher internal coordination cost can charge higher prices for serving each of the incoming edges.   

To the best of our knowledge, we define and analyze the first (game-theoretic) Network Design model where the edge costs depend on the topology of the formed network. We believe that this model sheds light on settings where the actual charges for establishing links are determined by supply and demand and the agents act strategically to optimize their cost for receiving their desired service. 

\subsection{Model, Definition, Notation}
We consider a strategic game called \emph{fair tree connection game with topology-dependent edge cost}, or \emph{tree connection game (\SNCG)} for short. In the \SNCG we will consider a given unweighted complete directed host graph $H = (V,E)$, where $V$ is the set of nodes and $E$ is the set of edges of~$H$. The host graph $H$ consists of $n+1$ nodes $V = \{r,v_1,\dots,v_n\}$ where node $r$ is the common target node, also called the root, and every node $v_i$, for $1\leq i \leq n$, corresponds to a selfish agent $i$ striving to be connected to the root $r$. For this, every agent $i$ strategically activates a single incident edge $(v_i,s_i)$, where $s_i\in V\setminus \{v_i\}$. Hence, the strategy space of each agent is the set of other nodes to connect to. Given a strategy profile $\stp = (s_1,\dots,s_n)$, i.e., an $n$-dimensional vector where the $j$-th entry corresponds to the node to which agent $j$ wants to activate her edge, we consider the directed network $T(\stp) = (V,E(\stp))$ which is induced by all the activated edges, i.e., $E(\stp) = \{(v_i,s_i) \mid 1\leq i \leq n\}$. We will see later that $T(\stp)$ is a spanning tree rooted at $r$ if $\stp$ is an equilibrium state of the \SNCG, hence the name. 

The cost of agent $i$ in the network $T(\stp)$ depends on its unique path $P_i$ in $T(\stp)$ to the root $r$ (if such a path exists). In case of existence, the path $P_i$ must be unique, since the out-degree of every node in $T(\stp)$ is at most $1$. More precisely, let $P_i$ be the directed path from $v_i$ to $r$ in $T(\stp)$, let $\indeg_{T(\stp)}(v)$ denote the number of edges with endpoint $v$ in $T(\stp)$, let $T(u)$ denote the subgraph of $T(\stp)$ rooted at node $u$, i.e., the subgraph of $T(\stp)$ induced by the nodes $u$ and every node which has a directed path to~$u$ and let $|T(u)|$ denote the number of nodes in $T(u)$. See Figure~\ref{fig:notation}. 
\begin{figure}[h]
 \centering
 \includegraphics[width=0.7\textwidth]{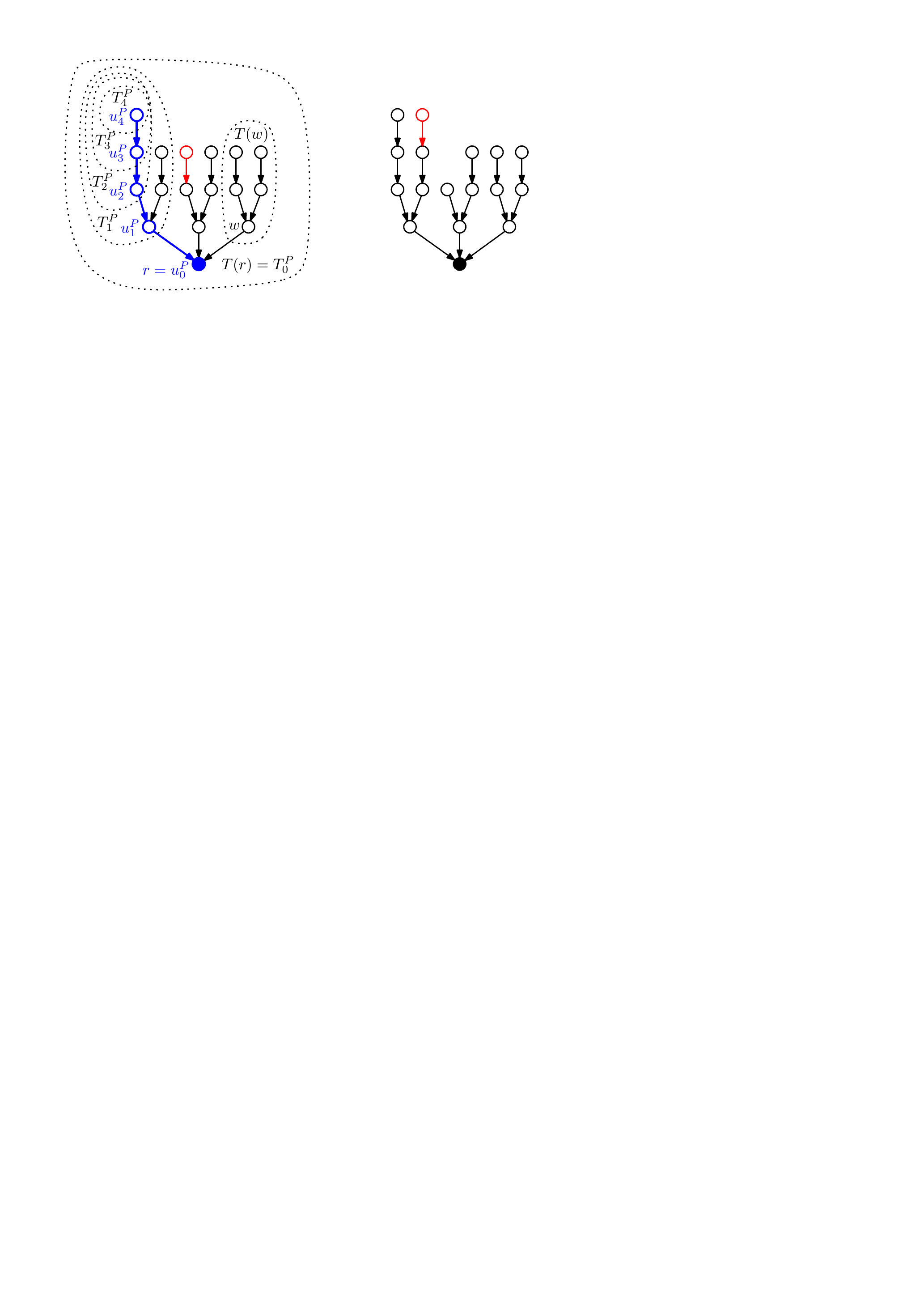}
 \caption{Left: $T(\stp)$ for $n=16$ agents. The path $P$ is colored blue and we have $d_0^P = 3, d_1^P = 2, d_2^P = d_3^P = 1, d_4^P = 0$ and $|T_1^P| = 6, |T_2^P| = 3, |T_3^P| = 2, |T_4^P| = 1, |T(w)| = 5$. Nodes $w$ and $u_1^P$ and also their corresponding agents are siblings.
 The shown network $T(\stp)$ is not stable since the agent colored red with cost $1 + \frac22 + \frac35 = \frac{13}{5}$ has an improving move. Right: the network after the agent colored red improved its cost to $1 + \frac12 + \frac23 + \frac37 = \frac{109}{42} < \frac{13}{5}$.}
 \label{fig:notation}
\end{figure}

\noindent The cost of agent $i$ in $T(\stp)$ then is 
$ \cost_{T(\stp)}(i) := \sum_{(u,v) \in P_i}\frac{\indeg_{T(\stp)}(v)}{|T(u)|}$, if $P_i$ exists and $\infty$ otherwise.
%\begin{equation*}
%	\cost_{T(\stp)}(i) := \begin{cases}
%		\sum_{(u,v) \in P_i}\frac{\indeg_{T(\stp)}(v)}{|T(u)|},	&	\text{if $P_i$ exists,}\\
%		\infty,													&	\text{otherwise.}
%	\end{cases}
%\end{equation*}
This cost function has the following very natural interpretation: the cost of activating edge $(u,v) $ from node $u$ to node $v$ is equal to node $v$'s in-degree, and this cost is fairly shared by all agents who use edge $(u,v)$ on their path towards the root $r$, i.e., by all agents in $T(u)$. We assume that each agent activates a single edge strategically to minimize its cost in the induced network $T(\stp)$. Clearly, since every agent $i$ can activate the edge $(v_i,r)$, $i$ can enforce finite cost by enforcing that the path $P_i$ exists. 

Consider a strategy profile $\stp = (s_1,\dots,s_{i-1},s_i,s_{i+1},\dots,s_n)$. We say that agent $i$ has \emph{an improving move in $\stp$} if $i$ has some alternative strategy $s_i' \neq s_i$ such that for the induced strategy profile $\stp' = (s_1,\dots,s_{i-1},s_i',s_{i+1},\dots,s_n)$ we have $cost_{T(\stp')}(i) < cost_{T(\stp)}(i)$, i.e., agent $i$ can strictly decrease its cost by activating a different outgoing edge. With this, we define the strategy profile $\stp$ to be in \emph{pure Nash equilibrium (NE)} or to be \emph{stable} if no agent has an improving move in $\stp$. If the context is clear, we use strategy profiles and their induced network interchangeably, i.e., we say that the network $T(\stp)$ is in NE or stable, if $\stp$ is in NE. Moreover, when we refer to some network $T(\stp)$ we will from now on omit the reference to the strategy profile $\stp$ and call the network simply $T$. Every stable network $T$ must be a spanning tree rooted at $r$, since every agent $i$ can activate the edge $(v_i,r)$ to achieve finite cost.   

The \emph{social cost} $SC(T)$ of a network $T$ is simply the sum over all agents' costs, i.e., 
$$SC(T)=\sum_{i=1}^n \cost_{T}(i) =\sum\limits_{v_i\in V}{\sum\limits_{(u,v)\in P_i}\frac{\indeg_{T}(v)}{|T(u)|}}
=\sum\limits_{(u,v)\in E}\frac{\indeg(v)}{|T(u)|}\cdot |T(u)|=\sum\limits_{v\in V }(\indeg(v))^2.$$ Note that $SC(T)$ nicely reflects the overall cost impact of the nodes' popularity or coordination costs which scales quadratically with the in-degree of a node. For a given number of agents $n$, let $OPT_n$ denote the network which minimizes the social cost. Moreover, if stable networks exist for~$n$ agents, we let $worstNE_n$ denote the stable network with the highest social cost and $bestNE_n$ the stable network with the lowest social cost. We define the \emph{Price of Anarchy (PoA)}~\cite{KP99} as $PoA = \sup_n \frac{SC(worstNE_n)}{OPT_n}$ and the \emph{Price of Stability (PoS)}~\cite{ADKTWR} as $PoS = \sup_n \frac{SC(bestNE_n)}{OPT_n}$, where the supremum is taken over all $n$ that admit a stable network. Besides the PoA and the PoS, that both focus on the overall cost and compare with the cost of a centrally designed social optimum network, we use a measure of the quality of networks which focuses on the cost distribution among the agents, called the \emph{Fairness Ratio (\FR)}, analogously to the \emph{utility uniformity} introduced in~\cite{feldman2008proportional}.
For a given network $T$, the $\FR(T)$ is the ratio between the maximum and the minimum cost incurred by any agent, i.e.,$\FR(T) \coloneqq \frac{\max_{v_i\in V}cost_T(i)}{\min_{v_i\in V}cost_T(i)}$.

Finally, we introduce some additional notation for arguing about the designed networks $T(\stp)$. (See Fig.~\ref{fig:notation} for an illustration). 
For our analysis we use directed paths in $T(\stp)$ which start at some non-root node $z$ and end at the root $r$. Let $P$ be such a path of length $\ell \in \mathbb{N}$. We denote by $u_j^P$ the node on $P$ which is at distance $j$ to $r$, hence the root $r$ is denoted by $u_0^P$ and node $z$ by $u_\ell^P$. Moreover, let $T_j^P \coloneqq T(u_j^P)$ and we use $\dd_j^P$ for the in-degree of a node with distance $j$ from the root $r$ on path $P$, hence, $\dd_{j}^P \coloneqq \indeg_{T(\stp)}(u_j^P)$.  
We omit the reference to path $P$ whenever it is clear from the context.

 \subsection{Related Work}
 Our model is closely related to several models that have been intensively studied. 
 
 We start with the (fair) single-source connection game~\cite{ADTW08,ADKTWR} which we already briefly discussed in the introduction. 
 The key feature of this game is that agents strategically select a set of edges to connect their respective terminals. The cost of each edge is shared among all the agents who selected the respected edge. While in~\cite{ADTW08} and later also in~\cite{H09} arbitrary cost sharing is considered, the paper~\cite{ADKTWR} focuses on fair cost sharing which can be derived from the Shapley value~\cite{MS01}. 
 For this Anshelevich et al.~\cite{ADKTWR} show that stable networks always exist since the game is a potential game~\cite{MS96}, additional they prove that the PoA is $n$ and the PoS is upper bounded by $H_n$, where~$H_n$ is the $n$-th harmonic number. For a given directed host graph this bound on the PoS is tight but the case for undirected host networks is still a major open problem. 
 More is known for single-source connection games on undirected networks. Chekuri et al.~\cite{CCLNO07} show that the PoA is in $\mathcal{O}(\sqrt{n}\log^2n)$ if the agents join the game sequentially and play their respective best response. A PoS in $\mathcal{O}(\log\log n)$ was proven by Fiat et al.~\cite{fiat2006price} for the special case where all nodes of the given network correspond to a terminal of some agent. Finally, Bilò et al.~\cite{bilo2014price} prove a constant PoS for the fair single-source connection game on undirected networks. Moreover, Albers and Lenzner~\cite{AL13} show that the optimum is a $H_n$-approximate Nash equilibrium for the fair single-source connection game. In contrast to our model, the cost of an edge in the (fair single-source) connection game is given via a positively weighted host network. Hoefer and Krysta~\cite{HK05} investigate a variant with edge weights derived from a geometry.      
 
 Also selfish spanning tree games~\cite{GM08} are close to our model and we already briefly discussed them in the introduction. The key difference to our model is that a weighted complete network is given and that the cost of an agent is defined differently. Gourvès and Monnot~\cite{GM08} define three variants of the agents' cost function: either it is the weight of the first edge on the path to the common root $r$, or the minimum or maximum weight edge on the entire path towards $r$. Cost sharing is not considered. The authors prove bounds on the PoA which vary from unbounded to 1 depending on the exact setting.
 The games in~\cite{GM08} are inspired by the classical problem of allocating the cost of a spanning tree among its nodes by Claus and Kleitman~\cite{CK73} and its variant from cooperative game theory considered by Bird~\cite{Bird76}. Later, Granot and Huberman~\cite{GH81,GH84} considered minimum cost spanning tree games and different cost allocation protocols for this have been considered by Escoffier et al.~\cite{EGMM12}. The key difference of all these models to our model is that a cooperative game is considered which is a stark contrast to our non-cooperative setting. Also game-theoretic topology control problems are related to spanning tree games and our model. Eidenbenz et al.~\cite{EKZ06} consider a setting where a set of agents which correspond to wireless devices want to connect terminal nodes, whereas Mittal et al.~\cite{MBS08} consider wireless access point selection by selfish agents. 
 
 Also classical network formation games~\cite{JW96,BG00,Fab03} are related to our model. There the agents correspond to nodes in a network and every agent buys a set of incident edges to connect to other agents. The goal of each agent is to create a connected network and to occupy a central position in this network. For the influential network creation game of Fabrikant et al.~\cite{Fab03}, that has a parameter $\alpha$ for the trade-off between edge cost and distance costs, the PoA was shown to be constant for almost all values of $\alpha$~\cite{De07,AM19}. For high values of $\alpha$ all equilibrium networks of these games are known to be trees~\cite{MS10,MMM13,BL20}. A variant of the network creation game where agents can only buy a single edge was considered by Ehsani et al.~\cite{Ehs15}. Most notably, the topology dependent edge costs that we employ in our model were proposed by Chauhan et al.~\cite{CLMM17} for the network creation game~\cite{Fab03}. To the best of our knowledge, this is the only setting where topology dependent edge costs have been considered.

\subsection{Our Contribution}
We study a novel game-theoretic model for the formation of a tree network which is related to the well-known fair single-source connection game by Anshelevich et al.~\cite{ADKTWR, ADTW08} and to selfish spanning tree games by Gourvès and Monnot~\cite{GM08}. The key difference of our model is that we consider dynamic edge costs which depend on the topology of the created spanning tree. In particular, the cost of an edge is equal to the in-degree of its endpoint. This specific choice was proposed in \cite{CLMM17} for the classical network creation game~\cite{Fab03} and we transfer this idea to the Network Design domain. Our analysis holds for any edge cost function of the form $\alpha$ times the in-degree of the target node, for any constant $\alpha$. However, our general approach is valid also for edge cost functions that depend non-linearly on the degrees of the involved nodes.   

Regarding the existence of stable trees we show that our model is in stark contrast to the models in~\cite{ADKTWR,Fab03,GM08} since in our model stable trees may not exist. In particular, we show that our game has no NE for $n=16$ and $n=18$ which implies that the \SNCG cannot admit a potential function. This is contrasted with a proof that for infinitely many $n$ stable trees do exist, and we conjecture that we have found all examples for NE non-existence. Towards investigating the quality of the equilibrium networks of our model, we first provide a rigorous study of the structural properties of stable trees. We show that every stable tree consists of stable subtrees and that the height of any stable tree is in $\mathcal{O}\left(\frac{\log n}{\log\log n}\right)$. For the root $r$, which turns out to be the node with the highest in-degree in any stable network, we show that its in-degree is between $\Omega\left(\frac{\log n}{\log\log n}\right)$ and $2^{\mathcal{O}\big(\sqrt{\log n}\big)}$. This shows that the maximum internal coordination overhead of a single node in any stable tree is rather small.

Our main results are on the quality of equilibrium trees. By using the established structural properties and a connection to the Riemann zeta function we obtain an upper bound on the PoA of~$8.62$ which is contrasted with a lower bound of $2.4317$. For the PoS we derive a lower bound of~$\frac{7}{5}-\varepsilon$. Moreover, we give for an infinite number of values for $n$ an upper bound of $2.83$ on the PoS. Regarding the Fairness Ratio, we first show that the socially optimal tree is rather unfair, i.e., having a Fairness Ratio of $n\cdot H_n$. In contrast, we prove that any equilibrium tree has a Fairness Ratio in $o(n)$. 

This shows that stable trees have only slightly higher social cost compared to the social optimum. In particular, on average every agent pays only a constant factor more than the trivial lower bound for any spanning tree. At the same time stable trees are more fair, have low height and low in-degrees.  

We conclude with a brief discussion of the path version extension of our model, where agents select paths as strategies as in~\cite{ADTW08,ADKTWR}. This extension seems promising for future work since we show that allowing a richer strategy space yields a larger set of equilibria and we give equilibria for $n=16$ and $n=18$. Hence, in the path-version equilibria may always exist, but the PoA could be higher.   

\section{Structure and Properties of Equilibrium Trees}\label{sec:structure}
It is clear that each agent can compute her best response in polynomial time as the number of possible strategies for an agent is $n$, and the agent can easily compute her cost in linear time. In the following we show that any stable tree consists of stable subtrees, we prove an upper bound of $\mathcal{O}\left(\frac{\log n}{\log \log n}\right)$ to the number of edges of any leaf-to-root path of any stable network, and in the end, we provide bounds on the degree of the root. We start with the statement that any stable tree consists of stable subtrees.

\begin{lemma}\label{thm:subtreeNE}
	If $T$ is stable, then any subtree $T(x)$ is stable in the corresponding subgame. 
\end{lemma}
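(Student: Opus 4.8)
The plan is to prove the contrapositive by a cost-decomposition argument: I will show that every improving move available to an agent of $T(x)$ in the subgame is also an improving move for that agent in the original game $T$, so that stability of $T$ forces stability of the subgame. First I would fix the subgame precisely. The subtree $T(x)$ is rooted at $x$, its agents are the non-root nodes of $T(x)$, and the induced host graph is the complete directed graph on the node set of $T(x)$ with target $x$. The restriction of $\stp$ to these agents is a legitimate profile of the subgame, since for every $v_i\in T(x)\setminus\{x\}$ the activated node $s_i$ lies on the directed path from $v_i$ to $x$ and is therefore contained in $T(x)$.

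The central observation is a decomposition of each agent's cost. For any agent $i$ with $v_i\in T(x)\setminus\{x\}$, the path $P_i$ to $r$ passes through $x$, so it factors as a path from $v_i$ to $x$ followed by the common path $P_x$ from $x$ to $r$. I would record two invariances: first, for every node $v\in T(x)$ (including $x$) we have $\indeg_{T(x)}(v)=\indeg_T(v)$, because any edge entering $v$ originates at a node with a directed path to $v$, hence to $x$, and thus lies inside $T(x)$; second, for every $u\in T(x)$ the subtree $T(u)$ is entirely contained in $T(x)$, so $|T(u)|$ denotes the same quantity in both games. Consequently the contribution of the edges between $v_i$ and $x$ to $\cost_T(i)$ equals precisely the subgame cost $\cost_{T(x)}(i)$, and we may write
\[
\cost_T(i) \;=\; \cost_{T(x)}(i) \;+\; C_x,
\]
where $C_x := \sum_{(u,v)\in P_x}\frac{\indeg_T(v)}{|T(u)|}$ is the cost of the tail path $P_x$ and is identical for every agent of $T(x)$.

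Now suppose for contradiction that the restricted profile is not stable in the subgame: some agent $i$ of $T(x)$ has a deviation $s_i'\in T(x)\setminus\{v_i\}$ strictly decreasing $\cost_{T(x)}(i)$. The same $s_i'$ is admissible in the original game, and the key point is that it leaves $C_x$ unchanged. Rerouting $i$ within $T(x)$ only rearranges the descendants of $x$ among themselves: the node set of $T(x)$ and hence $|T(x)|$ is preserved, and every node $u$ on $P_x$ keeps its in-degree and its subtree size, since $T(u)$ contains all of $T(x)$ as an unchanged block together with material outside $T(x)$ that the deviation never touches (note also that $\indeg(x)$ itself never appears in $C_x$, as $x$ is only a tail on $P_x$). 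One must additionally observe that an improving move cannot disconnect $i$, as disconnection yields infinite cost, so $s_i'\notin T(v_i)$ and $i$ stays connected to $x$ and thus to $r$. Applying the decomposition before and after the move, the constant $C_x$ cancels and
\[
\cost_{T'}(i) - \cost_T(i) \;=\; \cost_{T'(x)}(i) - \cost_{T(x)}(i) \;<\; 0,
\]
contradicting the stability of $T$; here $T'$ denotes the tree after the deviation. Hence no such subgame deviation exists and $T(x)$ is stable.

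The main obstacle is exactly the invariance of $C_x$: I must argue carefully that an internal deviation inside $T(x)$ alters subtree sizes and in-degrees only for nodes strictly below $x$, never for $x$ or for any ancestor of $x$, so that the externally paid portion of every agent's cost is untouched. Once this is pinned down, the rest is bookkeeping. The degenerate cases are immediate: if $x=r$ the subgame is the whole game, and if $x$ is a leaf the subgame has no agents.
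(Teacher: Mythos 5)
Your proof is correct and follows essentially the same route as the paper's: both arguments show that a deviation by an agent inside $T(x)$ changes her cost by exactly the same amount in the subgame and in the full game, because the cost terms along the path from $x$ to the root $r$ (in-degrees and subtree sizes of $x$'s ancestors) are untouched by any rewiring internal to $T(x)$, so stability of $T$ transfers to every subtree. Your write-up is somewhat more explicit than the paper's about the invariances involved (that $\indeg(x)$ never appears in the tail cost, and that improving moves cannot disconnect), but the underlying decomposition and cancellation are identical.
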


\begin{proof}
	Consider $T$. Assume to the contrary that there is a subtree $T(x)$ which is not stable. Then there is an agent $y\in T(x)$ which can improve her strategy by swapping her edge $(u,u_1)$ with an edge $(u,v)$. Let $u=u_0,u_1,\ldots,u_k=x$ be the path from $u$ to $x$ and $v=v_0,\ldots,v_m=x$ be the path from~$v$ to $x$. Let $T'(x)$ be the subtree obtained after $u$ changed her strategy towards $(u,v)$. Then the new strategy implies the following difference of the costs equals to
	\begin{align*}
	&\ \cost_{T'(x)}(u) - \cost_{T(x)}(u)  \\
	 = &\ \frac{\indeg(v_0)+1}{|T(u)|}+\frac{\indeg(v_1)}{|T(u)|+|T(v)|}+\ldots +\frac{\indeg(x)}{|T(v_{m-1})|+|T(u)|}- \frac{\indeg(u_1)}{|T(u)|}-\ldots -\frac{\indeg(x)}{|T(u_{k-1})|} .
	\end{align*}
	Since $T$ is stable, agent $u$ cannot improve her strategy by the same swap. Let $T'$ be a tree obtained from $T$ after $u$ changed her strategy towards $(u,v)$, and let $x=x_0, x_1,\ldots,x_l=r$ be the path from $x$ to the root $r$. Then we have
	\begin{align*}
	0 &\leq \cost_{T'}(u) - cost_{T}(u)\\
	&=\frac{\indeg(v_0)+1}{|T(u)|}+\frac{\indeg(v_1)}{|T(u)|+|T(v)|}+\ldots +\frac{\indeg(x)}{|T(v_{m-1})|+|T(u)|} +\frac{\indeg(x_1)}{|T(x)|}+\ldots+\frac{\indeg(r)}{|T(x_{l-1})|}\\
	&- \frac{\indeg(u_1)}{|T(u)|}-\ldots -\frac{\indeg(x)}{|T(u_{k-1})|}-\frac{\indeg(x_1)}{T(x)}-\ldots-\frac{\indeg(r)}{|T(x_{l-1})|}\\
	&=\cost_{T'(x)}(u) - cost_{T(x)}(u).
	\end{align*}
	Hence, an agent obtains the same cost improvement in a restricted game as in the original game because a strategy change does not affect the load and the in-degree of edges outside of the considering subtree. Therefore, since $T$ is stable, every $T(x)$ is stable as well in the corresponding subgame. 
\end{proof}

\noindent Next, we will consider the height of a stable network and need  the following technical lemmas.
\begin{lemma}\label{lemma:degree_formula}
	Let $k \in \mathbb{N}$ be the length of a fixed leaf-to-root-path $P$ in a stable network $T$. Then, for every $1 < i < k$, $\displaystyle{\dd_{i-1} \geq \frac{|T_i|}{|T_i|-|T_{i+1}|}(\dd_{i}-1)}$.
\end{lemma}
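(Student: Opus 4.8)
The plan is to read the inequality directly off the equilibrium condition by testing a single, well-chosen ``shortcut'' deviation. Since $T$ is stable, no agent can strictly lower her cost, and I would apply this to the agent located at the node $u_{i+1}$, the child of $u_i$ on $P$. The hypotheses $i<k$ and $i>1$ guarantee that both $u_{i+1}$ and $u_{i-1}$ lie on $P$, so the following deviation is a legal, cycle-free move: in the current tree this agent activates the edge $(u_{i+1},u_i)$, and I would consider the alternative in which she instead activates $(u_{i+1},u_{i-1})$, i.e.\ she bypasses the high-in-degree node $u_i$ and reattaches her whole subtree directly to the ancestor $u_{i-1}$.

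First I would carefully record which quantities the rerouting changes. The entire subtree $T_{i+1}$ detaches from $u_i$ and reattaches to $u_{i-1}$, so: (i)~$\indeg(u_i)$ drops from $\dd_i$ to $\dd_i-1$ and $|T(u_i)|$ drops from $|T_i|$ to $|T_i|-|T_{i+1}|$; (ii)~$\indeg(u_{i-1})$ rises from $\dd_{i-1}$ to $\dd_{i-1}+1$, while $|T(u_{i-1})|=|T_{i-1}|$ is unchanged because the reattached nodes already belonged to it; (iii)~$|T(u_{i+1})|=|T_{i+1}|$ is unchanged, and so are all in-degrees and subtree sizes from $u_{i-1}$ upward, since the rerouting happens entirely inside $T(u_{i-1})$. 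Consequently the shared tail of the path from $u_{i-1}$ to $r$ contributes identically before and after, and those terms cancel.

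Next I would write the two costs of the deviating agent and subtract. Before the move her share of the first two path edges is $\frac{\dd_i}{|T_{i+1}|}+\frac{\dd_{i-1}}{|T_i|}$, whereas after the move her single new first edge $(u_{i+1},u_{i-1})$ contributes $\frac{\dd_{i-1}+1}{|T_{i+1}|}$ and the tail is unchanged; hence the cost difference reduces to
\[
\frac{\dd_{i-1}+1}{|T_{i+1}|}-\frac{\dd_i}{|T_{i+1}|}-\frac{\dd_{i-1}}{|T_i|}.
\]
Stability forces this to be nonnegative, and clearing denominators (both positive) and collecting the $\dd_{i-1}$ terms gives $\dd_{i-1}\bigl(|T_i|-|T_{i+1}|\bigr)\ge(\dd_i-1)\,|T_i|$. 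Dividing by $|T_i|-|T_{i+1}|>0$ yields exactly the stated bound.

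I expect the only delicate point to be the bookkeeping in the second step: one must verify that the deviation alters no in-degree or subtree size except the three items listed, so that the long shared part of the path telescopes away cleanly and the difference collapses to the three-term expression above. The strict positivity $|T_i|-|T_{i+1}|>0$, needed both to divide and to preserve the inequality's direction, is immediate since $u_i$ itself lies in $T_i$ but not in $T_{i+1}$, so $|T_i|\ge|T_{i+1}|+1$.
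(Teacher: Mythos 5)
Your proof is correct and takes essentially the same approach as the paper: the paper likewise tests the single deviation of agent $u_{i+1}$ from edge $(u_{i+1},u_i)$ to $(u_{i+1},u_{i-1})$, obtains the same stability inequality $\frac{\dd_i}{|T_{i+1}|}+\frac{\dd_{i-1}}{|T_i|}\leq\frac{\dd_{i-1}+1}{|T_{i+1}|}$, and rearranges it into the claimed bound. Your extra bookkeeping (the cancellation of the shared tail of the path and the positivity of $|T_i|-|T_{i+1}|$) merely spells out details the paper leaves implicit.
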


\begin{proof}
	As agent $u_{i+1}$ has no incentive to swap the edge $(u_{i+1},u_i)$ with the edge $(u_{i+1},u_{i-1})$, it follows that
	$
	\frac{\dd_i}{|T_{i+1}|} + \frac{\dd_{i-1}}{|T_i|} \leq \frac{\dd_{i-1}+1}{|T_{i+1}|},
$
	the claim follows.
\end{proof}

\noindent Since $|T_{i+1}| > 0$, Lemma~\ref{lemma:degree_formula} yields that the sequence $\dd_0$, $\dd_1,\dots, \dd_k$, is monotonically decreasing. 
\begin{corollary}\label{corollary:degree_seguence_monotonic}
	Let $k \in \mathbb{N}$ be the length of a leaf-to-root-path $P$ in a stable network $T$. Then, for every $1 < i < k$, $\dd_{i} \geq d_{i+1}$.
\end{corollary}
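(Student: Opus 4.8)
The plan is to derive the corollary directly from Lemma~\ref{lemma:degree_formula} by a short contradiction argument that crucially exploits the integrality of in-degrees. Rewriting the corollary's claim in the form used by the lemma, it suffices to show that $\dd_{i-1} \geq \dd_i$ for every admissible index, since shifting the index then yields $\dd_i \geq \dd_{i+1}$ as stated.

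First I would record the sign facts that make the lemma's coefficient meaningful. Since $u_{i+1}^P$ is a descendant of $u_i^P$ along $P$, the subtree $T_{i+1} = T(u_{i+1}^P)$ is properly contained in $T_i = T(u_i^P)$, so $|T_i| > |T_{i+1}| > 0$. In particular the denominator $|T_i| - |T_{i+1}|$ is positive, and the factor $\frac{|T_i|}{|T_i| - |T_{i+1}|}$ is strictly larger than $1$. I would also note that $\dd_{i-1} \geq 1$, because $u_{i-1}^P$ has $u_i^P$ as a child on $P$ and hence in-degree at least one.

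The core step is the contradiction. Suppose, toward a contradiction, that $\dd_i > \dd_{i-1}$ for some index $i$ with $1 < i < k$. Because in-degrees are integers, this is equivalent to $\dd_i - 1 \geq \dd_{i-1}$. Substituting this into Lemma~\ref{lemma:degree_formula} gives $\dd_{i-1} \geq \frac{|T_i|}{|T_i|-|T_{i+1}|}(\dd_i - 1) \geq \frac{|T_i|}{|T_i|-|T_{i+1}|}\,\dd_{i-1}$. Dividing through by $\dd_{i-1} \geq 1$ yields $1 \geq \frac{|T_i|}{|T_i|-|T_{i+1}|}$, which contradicts the strict inequality $\frac{|T_i|}{|T_i|-|T_{i+1}|} > 1$ established above. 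Hence $\dd_i \leq \dd_{i-1}$ for all such $i$, which after reindexing is exactly $\dd_i \geq \dd_{i+1}$.

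The one subtlety worth flagging is that the bound from Lemma~\ref{lemma:degree_formula} alone, using only the crude estimate that the coefficient is $\geq 1$, merely gives $\dd_{i-1} \geq \dd_i - 1$, which is an off-by-one away from monotonicity. The integrality of the in-degrees is precisely what closes this gap: the passage from $\dd_i > \dd_{i-1}$ to $\dd_i - 1 \geq \dd_{i-1}$ is what converts the strict inequality $\frac{|T_i|}{|T_i|-|T_{i+1}|} > 1$ into a contradiction. I expect this integrality observation, rather than any computation, to be the only real content of the argument.
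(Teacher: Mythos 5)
Your proof is correct and follows essentially the same route as the paper: the paper derives the corollary in a single remark from Lemma~\ref{lemma:degree_formula}, using exactly the facts you spell out, namely that $|T_{i+1}|>0$ makes the coefficient $\frac{|T_i|}{|T_i|-|T_{i+1}|}$ strictly larger than $1$, which together with the integrality of the in-degrees closes the off-by-one gap. Your contradiction phrasing is just a rearrangement of that direct argument, and your flagged subtlety is precisely the content the paper leaves implicit.
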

The next lemma shows that the in-degree of nodes strictly decreases after a constant number of hops.
\begin{lemma}
	Let $k \in \mathbb{N}$ be the length of a fixed leaf-to-root-path $P$ in a stable network $T$. Then, for every subtree $T(v)$ of $T$ with $|T(v)| > 4$ and for every $1 < i < k-2$ we have $d_{i-1} > d_{i+1}$.
	\label{thm:same_degree}
\end{lemma}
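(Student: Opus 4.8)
The plan is to argue by contradiction, ruling out a run of three equal in-degrees on $P$. By Corollary~\ref{corollary:degree_seguence_monotonic} the sequence $d_{i-1},d_i,d_{i+1}$ is non-increasing, so $d_{i-1}>d_{i+1}$ can fail only if $d_{i-1}=d_i=d_{i+1}=:d$; I would assume this and seek an impossibility. Two preliminary observations fix the setup: since $1<i<k-2$, the node $u_{i+2}$ lies on $P$ and is \emph{internal} (it still has the on-path child $u_{i+3}$), so $d=d_{i+1}\ge 1$ and, crucially, $|T_{i+2}|\ge 2$. This last inequality is the fact the whole argument will eventually contradict, and the hypothesis $|T(v)|>4$ is what guarantees the path is long enough for such a non-leaf $u_{i+2}$ to exist.

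The heart of the proof is to sandwich $|T_i|$ between two bounds extracted from stability at the two levels $u_i$ and $u_{i+1}$. For the upper bound I would use that, since $d_{i-1}=d_i=d$, no child $c$ of $u_i$ gains by rerouting one level up to $u_{i-1}$; the same cost comparison as in Lemma~\ref{lemma:degree_formula} gives $|T(c)|\le |T_i|/d$ for every child $c$ of $u_i$. Applying this uniformly to all children is vacuous, so the trick is to single out the \emph{known} child $u_{i+1}$ of size $|T_{i+1}|$ and bound only the remaining $d-1$ children by $|T_i|/d$ each. From $|T_i|\le 1+|T_{i+1}|+(d-1)\,|T_i|/d$ this yields the non-trivial upper bound $|T_i|\le d\,(|T_{i+1}|+1)$.

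For the matching lower bound I would look one level deeper. Because all three in-degrees equal $d$, no child $c$ of $u_{i+1}$ gains by the two-level reroute straight to $u_{i-1}$ (skipping $u_{i+1}$ and $u_i$); comparing the new edge share $\tfrac{d+1}{|T(c)|}$ against the three removed shares $\tfrac{d}{|T(c)|},\tfrac{d}{|T_{i+1}|},\tfrac{d}{|T_i|}$ gives $|T(c)|\le \tfrac{|T_{i+1}||T_i|}{d\,(|T_i|+|T_{i+1}|)}$ for each such child. The decisive step is to \emph{sum} this over all $d$ children of $u_{i+1}$, whose subtree sizes add up to $|T_{i+1}|-1$: the factor $d$ cancels, and rearranging $|T_{i+1}|-1\le \tfrac{|T_{i+1}||T_i|}{|T_i|+|T_{i+1}|}$ produces $|T_i|\ge |T_{i+1}|(|T_{i+1}|-1)$.

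Combining the two bounds gives $|T_{i+1}|(|T_{i+1}|-1)\le d\,(|T_{i+1}|+1)$, a quadratic inequality in $|T_{i+1}|$ that forces $|T_{i+1}|\le d+1$. But $u_{i+1}$ has exactly $d$ children, so $|T_{i+1}|\ge d+1$, whence $|T_{i+1}|=d+1$ and every child of $u_{i+1}$ must be a leaf, contradicting $|T_{i+2}|\ge 2$. I expect the main obstacle to be discovering this pair of deviations, and in particular recognizing that only the \emph{summed} two-level reroute yields a usable lower bound while the single-level reroute (which underlies Lemma~\ref{lemma:degree_formula}) only ever gives the upper side; once both bounds are in hand the contradiction is immediate and uniform in $d$, so no separate treatment of chains ($d=1$) versus higher degrees is required.
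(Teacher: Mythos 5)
Your proof is correct, and it takes a genuinely different route from the paper's. Both arguments start the same way: Corollary~\ref{corollary:degree_seguence_monotonic} reduces the claim to ruling out three consecutive equal in-degrees $d_{i-1}=d_i=d_{i+1}=d$ on $P$, and both derive a contradiction from stability against upward swaps whose cost comparison only involves shares below $u_{i-1}$ (shares above the swap target are unaffected, as in Lemma~\ref{lemma:degree_formula}). From there the paper leaves the fixed path: it re-chooses $x$ as the root of the \emph{largest} subtree of $T(y)$ (which requires a separate deviation argument showing this root also has in-degree $d$), takes $u$ as the root of the largest subtree of $T(x)$, and uses a single three-level-up swap of $u$ together with the size bounds $|T(x)|\geq d|T(u)|+1$, $|T(y)|\geq d|T(x)|+1$; this forces a case split on $|T(u)|$, and the degenerate case $|T(u)|=1$ is where the hypothesis $|T(v)|>4$ and yet another deviation (swap to a leaf sibling) are invoked. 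You instead stay on $P$ and sandwich $|T_i|$: the one-level-up swaps of the children of $u_i$ give $|T_i|\leq d(|T_{i+1}|+1)$ after singling out $u_{i+1}$, and the two-level-up swaps of the children of $u_{i+1}$, \emph{summed} so that the factor $d$ cancels against the number of children, give $|T_i|\geq |T_{i+1}|(|T_{i+1}|-1)$; the resulting quadratic forces $|T_{i+1}|=d+1$, i.e., all children of $u_{i+1}$ are leaves, contradicting that $u_{i+2}$ has the on-path child $u_{i+3}$. Your aggregation trick buys uniformity in $d$ (no separate treatment of chains) and removes both the auxiliary largest-subtree existence claim and the case analysis; one small inaccuracy is your attribution of the internality of $u_{i+2}$ to the hypothesis $|T(v)|>4$ — it actually follows from $i<k-2$ alone — but this is harmless, since $i<k-2$ also forces $|T(u_{i-1})|\geq 5>4$, so your proof in fact establishes the statement without needing that hypothesis at all.
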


\begin{proof}
	By Corollary~\ref{corollary:degree_seguence_monotonic} we know that $\dd_{i} \geq d_{i+1}$. Hence, the in-degree sequence cannot decrease towards the root. Assume to the contrary that there is a subtree $T(v)$ of $T$ with $|T(v)| > 4$ and a subpath $P' = (u,x)$, $(x,y)$, $(y,v)$ with $P' \subseteq P$ and $\indeg(x) = \indeg(y) = \indeg(v) = d$. Assume that $T(u)$ is the largest subtree of $T(x)$ and $T(x)$ the largest subtree of $T(y)$. We will later show that there exists always such a node $x$ with $\indeg(x) = d$. Since $T$ is stable $u$ cannot improve by swapping her edge $(u,x)$ with the edge $(u,v)$. Therefore, 
	\begin{eqnarray*}
		0  & \geq &\left( \frac{d}{|T(u)|} +  \frac{d}{|T(x)|} + \frac{d}{|T(y)|} \right) - \frac{d+1}{|T(u)|}  \\
		& \geq  &   \frac{d}{|T(u)| \cdot d + 1}  + \frac{d}{|T(x)| \cdot d + 1} - \frac{1}{|T(u)|}  \\
		& \geq  &   \frac{d}{|T(u)| \cdot d + 1}  + \frac{d}{(|T(u)| \cdot d + 1) \cdot d + 1} - \frac{1}{|T(u)|} \\
		& = &  \frac{|T(u)| \cdot d}{|T(u)| \cdot \left(|T(u)| \cdot d + 1\right)}  + \frac{d}{|T(u)| \cdot d^2 + d + 1} - \frac{|T(u)| \cdot d + 1}{|T(u)| \cdot \left(|T(u)| \cdot d + 1\right) }  \\
		& = &  \frac{d}{|T(u)| \cdot d^2 + d + 1} - \frac{ 1}{|T(u)|^2 \cdot d + |T(u)| } \\
		& = &  \frac{d \cdot (|T(u)| - 1) \cdot (|T(u)| \cdot d + 1) -1}{|T(u)| \cdot (|T(u)| \cdot d + 1) \cdot (|T(u)| \cdot d^2 + d + 1)} .
	\end{eqnarray*}
	Since $|T(u)|$ and $d$ are non-negative integer values the denominator is positive, hence, we have to show that $d \cdot (|T(u)| - 1) \cdot (|T(u)| \cdot d + 1) -1 \leq 0$.
	However, the inequality does not hold if $|T(u)| \geq 2$ since every multiplier is larger than $1$, and therefore the product $d \cdot (|T(u)| - 1) \cdot (|T(u)| \cdot d + 1)$ is strictly larger than $1$. 
	
	If $|T(u)| = 1$ it implies $d > 1$ since $|T(v)| > 4$. Since $T(u)$ is the largest subtree of $T(x)$ there has to be a leaf node $u'$ with an edge $(u',x)$. However, the costs of agent $u$ in $T$ are equal to $d + \alpha$ for $\alpha \geq 0$, while the cost of $u$ swapping her edge $(u,x)$ with $(u,u')$ is equal to $1+ \frac{d-1}{2} + \alpha$, which is an improvement for $d > 1$.	
	Hence, $T$ cannot be stable.
	
	To show that there exists a node $x$ with $\indeg(x) = d$ as the root of the largest subtree of~$T(y)$, we assume to the contrary that there is another subtree $T(x')$ of $T(y)$ with $\indeg(x') < d$ and $|T(x')| > |T(x)|$. Since $T$ is stable $u$ cannot improve by swapping her edge $(u,x)$ with the edge $(u,x')$. However,  
	$
	0  \geq   \frac{d}{|T(u)|} +  \frac{d}{|T(x)|} -  \frac{\indeg(x')}{|T(u)|}-  \frac{d}{|T(x')|}
	$
	does not hold in this case. This completes the proof.
\end{proof}

\noindent In the following we investigate upper and lower bounds on the in-degree of the root in stable trees. More precisely, we show an upper bound of $2^{\mathcal{O}(\sqrt{\log n})}$ and a lower bound of $\Omega(\log n/\log\log n)$. 

\begin{theorem}\label{thm:LB_deg_of_the_root}
	The in-degree of the root in a stable network $T$ is at least $\frac{\ln\left(4\sqrt{n/5}\right)}{\ln\ln\left(4\sqrt{n/5}\right)}$.
\end{theorem}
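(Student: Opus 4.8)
The plan is to show that a stable tree whose root in-degree is $D := \dd_0 = \indeg_T(r)$ cannot contain too many nodes, and then to invert this size bound. Concretely, I aim to establish an inequality of the form $4\sqrt{n/5} \le D!$, from which the stated bound drops out by a standard inversion.

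First I would set up a distinguished leaf-to-root path. By Corollary~\ref{corollary:degree_seguence_monotonic} the root has the largest in-degree in $T$, so every node has at most $D$ children. I construct a path $P\colon r = u_0, u_1, u_2, \dots$ top-down, always descending into the \emph{largest} child-subtree; this is precisely the configuration ($T(u_{i+1})$ largest subtree of $T(u_i)$) in which Lemma~\ref{thm:same_degree} guarantees a strict decrease. Let $s$ be the first index with $|T_s| \le 4$. For every $i < s$ the subtrees are large enough ($>4$) that Lemma~\ref{thm:same_degree} applies, and combined with monotonicity (Corollary~\ref{corollary:degree_seguence_monotonic}) it yields $\dd_{2t}, \dd_{2t+1} \le D - t$, i.e. $\dd_i \le D - \lfloor i/2\rfloor$, since each in-degree value is attained on at most two consecutive positions of $P$.

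Next I would derive the size bound. As $u_{i+1}$ is the largest of the $\dd_i$ children of $u_i$, every child-subtree of $u_i$ has at most $|T_{i+1}|$ nodes, so $|T_i| \le 1 + \dd_i\,|T_{i+1}|$. Telescoping this from $i=0$ down to $i=s$ and using $|T_s|\le 4$ bounds $n+1 = |T_0|$ by a geometric-type sum dominated by a constant times $\prod_{i<s}\dd_i$. The degree estimate from the previous step gives $\prod_{i<s}\dd_i \le \prod_{t\ge 0}(D-t)^2 = \bigl(D!/(D-T)!\bigr)^2 \le (D!)^2$. Chasing the constants carefully (the threshold $|T(v)|>4$ in Lemma~\ref{thm:same_degree} is the source of the $5$) then produces $n \le \tfrac{5}{16}(D!)^2$, which is exactly $4\sqrt{n/5} \le D!$. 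For the final inversion set $m := 4\sqrt{n/5}$, so $D! \ge m$; since $D! \le D^D$ this gives $D\ln D \ge \ln m$. As $x \mapsto x\ln x$ is increasing it suffices to verify $L\ln L \le \ln m$ for $L := \tfrac{\ln m}{\ln\ln m}$, and this follows from $\ln L \le \ln\ln m$, which holds once $\ln\ln m \ge 1$ (that is, for all sufficiently large $n$); hence $D \ge L$, as claimed.

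The main obstacle will be the size bound rather than the inversion. The delicate points are the bottom of $P$, where subtrees have at most $4$ nodes and Lemma~\ref{thm:same_degree} no longer applies, and possible degree-$1$ stretches along $P$ (which can only reduce the node count but complicate the clean indexing $\dd_i \le D - \lfloor i/2\rfloor$ and the geometric domination, which wants $\dd_i \ge 2$). Pinning down the exact constants $4$ and $5$ so that the tidy inequality $4\sqrt{n/5} \le D!$ emerges is where the real bookkeeping lies; once it is in hand, the passage to $\tfrac{\ln(4\sqrt{n/5})}{\ln\ln(4\sqrt{n/5})}$ is routine modulo the mild largeness assumption on $n$.
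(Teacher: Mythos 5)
Your proposal is correct and takes essentially the same route as the paper's proof: both rely on Lemma~\ref{thm:same_degree} and Corollary~\ref{corollary:degree_seguence_monotonic} to force the in-degrees along a leaf-to-root path to drop by at least one every two hops, and both bound $n$ by the product of these degrees via subtree sizes (the paper gets $n \le \frac{5}{16}\dd_0^{2\dd_0}$; your telescoping gives the slightly sharper $n \le 5(\dd_0!)^2$, which recovers the paper's inequality and your missing factor of $4$ since $\dd_0! \le \dd_0^{\dd_0}/4$ for $\dd_0 \ge 3$). Your final inversion via monotonicity of $x\ln x$ is an elementary replacement for the paper's Lambert-$W$ step, but the structure of the argument is otherwise the same.
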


\begin{proof}
	Consider a stable network $T$ of height $h$. Consider the case when $h\geq 4$, otherwise, $T$ can be a path, i.e., the minimal in-degree of the root is 1. For $h\geq 4$, we observe that the root of $T$ has a minimal in-degree if the sequence of the in-degrees $\dd^P_h, \ldots, \dd^P_0$ of the nodes in the longest leaf-to-root path $P=v^P_h,\ldots, v^P_0$ is minimally increasing on the way to the root, i.e., $\dd^P_i\leq \dd^P_{i-1} +a$ where $a\geq 0$ is the smallest possible value. Hence, by Lemma~\ref{thm:same_degree}, $\dd^P_0$ is minimal if $\dd^P_h=0, \dd^P_{h-1}=\dd^P_{h-2}=\dd^P_{h-3} =1$ and, for $0\leq i\leq h-4$, $\dd^P_i= \dd^P_{i-1}+1$ if $h-i$ is even, and $\dd^P_i= \dd^P_{i-1}$ if $h-i$ is odd.  Then $\dd^P_0\geq \lfloor\frac{h-2}{2}\rfloor+1$, and we need to get a lower bound for $h$ with respect to $n=|T|$ to finish the proof.
	
	Note that the size of the tree $T$ is maximal if every subtree $T_{i+1}$ rooted at a child of  a node $v^P_i$ in the $v^P_h$-$v^P_0$ path has maximal size. Hence, the in-degree of the root of $T_{i+1}$ is maximal, i.e., it equals $\max\{\dd^P_i-1, \dd^P_{i-1}-1\}$. This implies that $T$ is a balanced tree $BT=(0,1,1,1,2,2,3,3,\ldots,\dd_0)$, i.e., each leaf-to-root path in $T$ corresponds to the in-degree sequence  $\dd_h:=\dd^P_h,\ldots, \dd_0^P:=\dd_0$. Without loss of generality assume $h-4$ is even. Then we have:
	\begin{align*}
		n =|T|\leq |BT| = \sum\limits_{i=0}^{h-1}\prod\limits_{j=0}^i \dd_j
		&= \dd_0 + \dd_0^2 + \dd_0^2(\dd_0-1)+\ldots + \dd_0^2(\dd_0-1)^2\cdot\ldots\cdot\left(\dd_0 -\frac{h-4}{2}\right)^2\cdot 4\\
		& < \dd_0 + \dd_0^2 + \dd_0^3+\ldots + \dd_0^{h-5}+4\cdot \dd_0^{h-4}= \frac{\dd_0(\dd_0^{h-4}-1)}{\dd_0-1}+3\cdot \dd_0^{h-4}\\
		&< \dd_0^{h-4}\left(\frac{\dd_0}{\dd_0-1}+3\right) \leq 5\dd_0^{h-4} \leq \frac{5}{16}\dd_0^{2\dd_0}.
	\end{align*}
	Therefore, $\dd_0 > \frac{\ln(16n/5)}{2W\left(\frac{\ln(16n/5)}{2}\right)}>\frac{\ln\left(4\sqrt{n/5}\right)}{\ln\ln\left(4\sqrt{n/5}\right)}$, where $W(x)$ is the Lambert function.
\end{proof}

\noindent To give an upper bound on the in-degree of the root, we first have to provide the following technical lemmas. The first technical lemma bounds the in-degree of the parent of any leaf.

\begin{lemma}\label{lem:deg_parent_of_leaf}
	In a stable network $T$ the in-degree of the parent of any leaf is 1.
\end{lemma}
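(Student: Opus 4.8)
The plan is to prove the contrapositive by exhibiting a single, purely local improving move — performed not by the leaf itself, but by one of its siblings. Suppose, for contradiction, that $z$ is a leaf whose parent $p$ has $\indeg(p) = d \geq 2$. Since $p$ receives at least two incoming edges, besides $z$ it has a second child $w \neq z$. I would consider the deviation in which agent $w$ rewires its single edge from $(w,p)$ to $(w,z)$, re-attaching the whole subtree $T(w)$ below $z$, whose own path to the root remains $z \to p \to \dots \to r$. Let $T'$ denote the tree after this deviation.

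The heart of the argument is to track precisely which quantities change. The rewiring is confined to $T(p)$: it leaves $|T(p)|$, as well as every subtree size and every in-degree strictly above $p$, untouched, so $w$'s cost-shares for all edges above $p$ are identical in $T$ and $T'$ and will cancel. Exactly two quantities inside $T(p)$ change, both in $w$'s favour: $\indeg(z)$ rises from $0$ to $1$, while $\indeg(p)$ drops from $d$ to $d-1$ because $w$ is no longer among its children. Hence $w$'s new first edge $(w,z)$ costs $\tfrac{1}{|T(w)|}$ instead of the old $\tfrac{d}{|T(w)|}$, and the edge entering $p$ — now $(z,p)$ rather than $(w,p)$ — costs $d-1$ and is shared by the $|T(w)|+1$ agents of the enlarged subtree $T(z)$. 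After the cancellation of everything above $p$ one is left with
\[
\cost_{T'}(w) - \cost_{T}(w) = \frac{1}{|T(w)|} + \frac{d-1}{|T(w)|+1} - \frac{d}{|T(w)|} = (d-1)\left(\frac{1}{|T(w)|+1} - \frac{1}{|T(w)|}\right),
\]
which is strictly negative for every $d \geq 2$. Thus $w$ has an improving move, contradicting the stability of $T$; since $p$ is a parent and therefore $\indeg(p)\ge 1$, we conclude $\indeg(p)=1$.

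The step I expect to be the real pitfall is the bookkeeping of $\indeg(p)$. It is tempting to leave the edge into $p$ at cost $d$, but because $w$ detaches from $p$ this in-degree must be decremented to $d-1$; it is precisely this decrement, together with the drop of the first-edge cost from $d$ to $1$ once $z$ is no longer a leaf, that renders the move profitable — keeping the naive value $d$ would make the move appear neutral or even losing. I would therefore state the invariance of all data outside $T(p)$ explicitly and justify the two local changes term by term. It is worth noting that this argument uses neither Lemma~\ref{thm:subtreeNE} nor the monotonicity results of Lemma~\ref{lemma:degree_formula} and Corollary~\ref{corollary:degree_seguence_monotonic}: the sibling $w$ exists merely because $\indeg(p)\ge 2$, and the deviation is improving irrespective of the shape of $T(w)$ or of the portion of the tree lying above $p$.
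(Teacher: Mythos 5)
Your proof is correct and takes essentially the same approach as the paper: there, too, the sibling $v_2$ of the leaf $v_1$ swaps its edge onto the leaf, and the resulting cost change $(\indeg(v)-1)\bigl(\tfrac{1}{|T(v_2)|}-\tfrac{1}{|T(v_2)|+1}\bigr)>0$ is exactly your expression with the opposite sign. Your explicit bookkeeping of the decrement of $\indeg(p)$ to $d-1$ and the invariance of all shares above $p$ is precisely what the paper's one-line computation implicitly relies on.
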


\begin{proof}
	Consider a node $v$ in $T$ which has two children $v_1$ and $v_2$ such that $v_1$ is a leaf node. 
	Then~$v_2$ can swap to the leaf $v_1$ and improve its cost by at least \[\frac{\indeg(v)}{|T(v_2)|} - \left(\frac{1}{|T(v_2)|}+\frac{\indeg(v)-1}{|T(v_2)|+1}\right)=(\indeg(v)-1)\left(\frac{1}{|T(v_2)|}-\frac{1}{|T(v_2)|+1}\right)>0.\qedhere\]
\end{proof}

\noindent The second technical lemma shows how the in-degrees of two sibling nodes are related.

\begin{lemma}\label{thm:deg_up} 
	Consider a subtree $T(x)$ of a stable network $T$. 
	Then $\indeg(x)\leq \indeg(v)\cdot\left(1+\frac{|T(u)|}{|T(v)|}\right)+1$, where $v$ and $u$ are different children of $x$. 
\end{lemma}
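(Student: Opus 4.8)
The plan is to use the same single-swap argument that proves Lemma~\ref{lemma:degree_formula}, but now applied to the agent sitting at the child $u$. Concretely, I would consider the deviation in which agent $u$ replaces its activated edge $(u,x)$ by the edge $(u,v)$, where $v$ is the other named child of $x$. Since $u$ and $v$ are siblings in $T$, the node $v$ has no directed path through $u$, so $(u,v)$ is a legal move that creates no cycle, and it is precisely the kind of deviation that stability of $T$ must rule out.

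Next I would carefully track how this swap changes the relevant in-degrees, subtree sizes, and cost shares. After the swap, the new edge $(u,v)$ raises the in-degree of $v$ to $\indeg(v)+1$, and its cost is shared by the $|T(u)|$ agents of the (size-unchanged) subtree rooted at $u$; the edge $(v,x)$ is now additionally traversed by all of $T(u)$, so it is shared by $|T(u)|+|T(v)|$ agents, while the in-degree of $x$ drops to $\indeg(x)-1$; and the entire path from $x$ up to the root is left untouched, since $T(x)$ still contains exactly the same set of nodes. Hence the only change in $u$'s cost is
$$\left(\frac{\indeg(v)+1}{|T(u)|} + \frac{\indeg(x)-1}{|T(u)|+|T(v)|}\right) - \frac{\indeg(x)}{|T(u)|},$$
and stability of $T$ forces this quantity to be nonnegative.

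Finally I would just rearrange this inequality. Writing $a=|T(u)|$, $b=|T(v)|$, $D=\indeg(x)$, and $d=\indeg(v)$, the stability condition reads $\frac{d+1}{a}+\frac{D-1}{a+b}\ge\frac{D}{a}$, which (after multiplying by $a(a+b)>0$ and cancelling the $aD$ terms) is equivalent to $(D-d-1)\,b\le d\,a$, hence to $D\le d\bigl(1+\tfrac{a}{b}\bigr)+1$, exactly the claimed bound.

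The only place requiring care — and the step I would double-check most — is the bookkeeping in the second paragraph: one must observe that detaching $u$ from $x$ decreases $\indeg(x)$ by exactly one while increasing $\indeg(v)$ by exactly one, and that the new share of edge $(v,x)$ is computed with the enlarged subtree of size $|T(u)|+|T(v)|$. Once these three changes are recorded correctly, the remainder is routine algebra, so I do not anticipate a genuine obstacle.
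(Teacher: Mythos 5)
Your proposal is correct and follows exactly the paper's own argument: the swap of agent $u$ from $(u,x)$ to its sibling $v$, the same three bookkeeping changes ($\indeg(v)$ up by one, $\indeg(x)$ down by one, the edge into $x$ now shared by $|T(u)|+|T(v)|$ agents), and the same rearrangement of the resulting stability inequality. Nothing to add.
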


\begin{proof}
	Consider two children $u$ and $v$ of the root $x$ in the subtree $T(x)$. %W.l.o.g. assume $|T_u|\leq |T_v|$. 
	Since $T$ is stable, $u$ cannot improve her strategy by swapping the edge $(u,x)$ with the edge $(u,v)$. Let $T'(x)$ be the subtree obtained after $u$ changed her strategy towards $(u,v)$. This implies that 
	\begin{align*}
	0 &\leq \cost_{T'(x)}(u) - cost_{T(x)}(u)=\frac{\indeg(v)+1}{|T(u)|}+\frac{\indeg(x)-1}{|T(u)|+|T(v)|}-\frac{\indeg(x)}{|T(u)|}\\
	&= \frac{1}{|T(u)|\left(|T(u)|+|T(v)|\right)}\left(\left(1+\frac{|T(u)|}{|T(v)|}\right)\cdot \indeg(v)+1-\indeg(x)\right)
	\end{align*}
	Therefore,  $\indeg(x)\leq \indeg(v)\cdot\left(1+\frac{|T_u|}{|T_v|}\right)+1$.
\end{proof}

\noindent From Lemma~\ref{thm:deg_up}, we derive the following remark and corollary.

\begin{remark}
	Consider a subtree $T(x)$ in a stable network $T$. Then $\indeg(x)\leq 2\cdot \indeg(v)+1$, where~$v$ is a root of the second smallest subtree of $T(x)$. 
\end{remark}

\begin{corollary}\label{cor:min_deg_of_child}
	If $T$ is a stable network, then every node $u$ in $T$ has at least $\indeg(u)-1$ children of in-degree at least $(\indeg(u)-1)/2$.  
\end{corollary}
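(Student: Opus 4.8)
The plan is to read off the corollary directly from Lemma~\ref{thm:deg_up} by a simple pairing argument. Fix the node called $u$ in the statement and set $d := \indeg(u)$, so that $u$ has exactly $d$ children, say $c_1,\dots,c_d$. I would order them by the sizes of the subtrees they root, so that $|T(c_1)| \le |T(c_2)| \le \dots \le |T(c_d)|$, and single out $c_1$ as a child rooting a smallest subtree. (If $d \le 1$ the claim is vacuous, so I may assume $d \ge 2$.)

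For each index $j \in \{2,\dots,d\}$ I would apply Lemma~\ref{thm:deg_up} to the subtree $T(u)$, taking the two distinct children in the lemma to be $c_j$ (in the role of the lemma's ``$v$'') and $c_1$ (in the role of the lemma's ``$u$''). Since $c_1$ roots a smallest subtree we have $|T(c_1)| \le |T(c_j)|$, hence $|T(c_1)|/|T(c_j)| \le 1$, and the lemma gives
\[
\indeg(u) \;\le\; \indeg(c_j)\Bigl(1+\tfrac{|T(c_1)|}{|T(c_j)|}\Bigr)+1 \;\le\; 2\,\indeg(c_j)+1,
\]
which rearranges to $\indeg(c_j) \ge (\indeg(u)-1)/2$. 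Crucially, this bound holds \emph{simultaneously} for every $j \ge 2$, because the same small child $c_1$ can be paired against each of the other children. Since there are $d-1 = \indeg(u)-1$ children with index $j \ge 2$, I obtain $\indeg(u)-1$ children of $u$, each of in-degree at least $(\indeg(u)-1)/2$, which is exactly the claim.

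This is essentially the same pairing that produces the preceding Remark; the difference is only in what one extracts from it. The Remark keeps a single child, namely $c_2$ of second-smallest subtree, to bound $\indeg(u)$ from above, whereas here I exploit that the single smallest child $c_1$ has a subtree no larger than that of \emph{any} other child, so the size ratio stays $\le 1$ in all $d-1$ comparisons at once.

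I expect no genuine obstacle here, since the corollary is an immediate consequence of Lemma~\ref{thm:deg_up}. The only points requiring care are bookkeeping: resolving the notation clash between the node $u$ in the statement and the child named $u$ in the lemma, and verifying that choosing $c_1$ to root a smallest subtree is exactly what makes the bounded ratio hold for every one of the remaining $d-1$ children rather than just one.
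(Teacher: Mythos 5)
Your proof is correct and follows essentially the same route the paper intends: the paper states this corollary (without explicit proof) as a direct consequence of Lemma~\ref{thm:deg_up}, and your pairing of the smallest-subtree child $c_1$ against each of the other $\indeg(u)-1$ children, using $|T(c_1)|/|T(c_j)|\le 1$ to get $\indeg(c_j)\ge(\indeg(u)-1)/2$, is exactly the intended derivation. The bookkeeping you flag (role of $u$ versus $v$ in the lemma, and the vacuous case $\indeg(u)\le 1$) is handled correctly.
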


\noindent Now we can prove an upper  bound to the in-degree of the root of any stable tree.

\begin{theorem}\label{thm:UB_deg_of_the_root}
	The in-degree of the root in a stable network $T$ is $2^{\mathcal{O}\left(\sqrt{\log{n}}\right)}$.
\end{theorem}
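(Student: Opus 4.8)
The plan is to use Corollary~\ref{cor:min_deg_of_child} as a branching rule to exhibit a large subtree of $T$ rooted at $r$, and then to convert a lower bound on its number of nodes into an upper bound on $\indeg(r)$. Write $d := \indeg(r)$. Starting at the root, the corollary guarantees at least $d-1$ children of in-degree at least $(d-1)/2$; applying it again to each such child yields at least $(d-1)/2-1$ grandchildren of in-degree at least $((d-1)/2-1)/2$, and so on down the tree (the corollary applies to every node of $T$, so the expansion can be iterated). Concretely I would define degree lower bounds by $b_0 := d$ and $b_{i+1} := (b_i-1)/2$, with a branching factor of at least $b_i-1$ from level $i$ to level $i+1$. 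Since all selected nodes lie in $T$ and those at a fixed depth sit in pairwise disjoint subtrees, the number of selected nodes at depth $k$ is a lower bound on $|T| = n+1$, giving
\[
 n+1 \;\geq\; \prod_{i=0}^{k-1}(b_i-1).
\]

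Next I would solve the recursion explicitly. Shifting by its fixed point $-1$ gives $b_i+1 = (d+1)/2^i$, hence $b_i-1 = (d+1)/2^i - 2$. These factors stay within a constant factor of the ideal geometric decay $(d+1)/2^i$ as long as $(d+1)/2^i \geq 4$, since in that regime $b_i-1 \geq \tfrac12\,(d+1)/2^i$. I would therefore cut the expansion off at depth $k := \lfloor \log_2(d+1)\rfloor - 2$, which ensures $(d+1)/2^i \geq 4$ for all $0 \leq i < k$ and in particular keeps every factor positive. Taking base-$2$ logarithms of the product and using $b_i-1 \geq (d+1)/2^{i+1}$,
\[
 \log_2\!\prod_{i=0}^{k-1}(b_i-1) \;\geq\; \sum_{i=0}^{k-1}\bigl(\log_2(d+1)-i-1\bigr) \;=\; k\log_2(d+1) - \frac{k(k+1)}{2}.
\]

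Finally, writing $L := \log_2(d+1) = \Theta(\log d)$ and substituting $k = L - O(1)$, the right-hand side evaluates to $\tfrac12 L^2 - O(L)$, so that $\log_2(n+1) \geq \tfrac12(\log_2 d)^2 - O(\log d)$. Rearranging gives $(\log_2 d)^2 \leq 2\log_2 n + O(\log d)$, hence $\log_2 d = \mathcal{O}(\sqrt{\log n})$ and $\indeg(r) = 2^{\mathcal{O}(\sqrt{\log n})}$, as claimed. I expect the only real obstacle to be the bookkeeping around the $-1$ terms coming from Corollary~\ref{cor:min_deg_of_child}: they must not erode the degrees faster than the geometric factor $2^{-i}$, which is precisely why I stop at $k = \lfloor\log_2(d+1)\rfloor - 2$ so that every retained factor is at least half of $(d+1)/2^i$. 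With this single precaution the logarithm of the product is genuinely quadratic in $\log d$, and the inversion to the stated bound goes through without further difficulty.
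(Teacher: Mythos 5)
Your proof is correct, and it takes a genuinely more direct route than the paper. The paper's proof is a two-stage argument: first it bounds the root's in-degree exponentially in the height $h$ of the tree, by arguing that the worst case is an extremal degree sequence $(0,1,2,5,11,\dots)$ along a leaf-to-root path satisfying $\dd_{i-1}=2\dd_i+1$ (this uses Lemma~\ref{thm:deg_up}, Lemma~\ref{lem:deg_parent_of_leaf}, and a separate swap argument showing nodes at distance $h-2$ from the root have in-degree at most $2$); second, it applies Corollary~\ref{cor:min_deg_of_child} as a branching rule to show $n \geq 2^{(h-1)(h-2)/2}$, whence $h = \mathcal{O}(\sqrt{\log n})$ and then $\dd_0 = 2^{\mathcal{O}(h)} = 2^{\mathcal{O}(\sqrt{\log n})}$. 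You skip the entire first stage: by parametrizing the branching expansion directly by the root degree $d$ rather than by the height, you get $n+1 \geq \prod_{i=0}^{k-1}(b_i-1) \geq 2^{\frac12(\log_2 d)^2 - \mathcal{O}(\log d)}$ with $k = \lfloor\log_2(d+1)\rfloor - 2$, and invert. This buys two things: (i) your argument needs only Corollary~\ref{cor:min_deg_of_child} and is shorter; (ii) it is arguably more rigorous, since the paper's first stage rests on the informal assertion that the root degree is maximized when every node on the path has maximal in-degree, whereas your induction (each node of in-degree at least $b_i$ has at least $b_i-1$ children of in-degree at least $b_{i+1}=(b_i-1)/2$, and children of distinct nodes are distinct) applies verbatim to an arbitrary stable tree. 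What the paper's route buys in exchange is the explicit extremal degree recursion of equation~(\ref{eq:recurr_deg}) and concrete constants tied to the height, which reappear in its subsequent structural discussion; your cutoff at $k=\lfloor\log_2(d+1)\rfloor-2$ handles the erosion from the $-1$ terms exactly as needed, and the final inversion to $\log d = \mathcal{O}(\sqrt{\log n})$ is sound.
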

\begin{proof}
	Consider a stable tree $T$ of height $h$. Let $v_h,\ldots,v_0$ be a path from a leaf to the root. Note that the in-degree of the root $v_0$ is maximal if the in-degree of each node in the $v_h$-$v_0$-path is maximal, i.e., by Lemma~\ref{thm:deg_up} and~\ref{lem:deg_parent_of_leaf}, it corresponds to the in-degree sequence $D :=(0,1,\dd_{h-2}, \ldots,\dd_0)$, where $\dd_{i-1}=2\dd_i+1$. %By Lemma, $\dd_{h-1}=1$.
	
	Next, we show that nodes at distance $h-2$ from the root can have an in-degree of at most $2$. Assume to the contrary that there is a node $u$ having an edge to a node $x$ such that $\indeg(x)=3$ and $x$ is at distance $h-2$ from the root $v_0$. As we have proved above, the in-degree of all children of $x$ is at most~$1$. Thus, $u$ can swap to any leaf node of the subtree $T(x)$. Let $T'$ be the tree obtained after $u$ swapped. If $u$ swaps to a child of $x$, it decreases its cost by $cost_{T}(u) - cost_{T'}(u) = \frac{3}{2}-\frac{1}{2}-\frac{2}{3} > 0$, i.e., it is an improving move. The swap to a leaf node at distance $2$ from $x$ implies an improvement by $cost_{T}(u) - cost_{T'}(u) = \frac{3}{2}-\frac{1}{2}-\frac{1}{3} - \frac{2}{4}> 0$, i.e., it is an improvement. Since $T$ is stable, we get a contradiction. Thus, $D=(0,1,2,5,11,\ldots,\dd_0)$, i.e., \begin{equation}\dd_i=3\cdot 2^{h-i-2}-1 \text{ for } i\leq h-3, \text{ where } \dd_h=0, \dd_{h-1}=1, \dd_{h-2}=2.\label{eq:recurr_deg}\end{equation}  
	We now estimate the minimum possible number of nodes in the tree $T$. 
	By Corollary~\ref{cor:min_deg_of_child} if the in-degree of a node is equal to $k$, then it has at least $k-1$ children with an in-degree of at least $(k-1)/2$. Thus, starting from the root, the in-degrees of the nodes on each level decrease no more than twice. 
	Hence, the total size of the tree is  at least \[\sum\limits_{i=1}^{h-1}\left(\dd_i\prod\limits_{j=0}^{i-1} (\dd_j-1)\right) > \sum\limits_{i=1}^{h-1}\prod\limits_{j=0}^{i-1} 2^{h-j-2} > 2^{\sum_{j=0}^{h-3}(h-j-2)}=2^{\frac{(h-1)(h-2)}{2}},\] where $h$ is the height of $T$. Thus, $h<\frac{3+\sqrt{1+8\log n}}{2}$. With equation~(\ref{eq:recurr_deg}), this implies $\dd_0\in 2^{O\left(\sqrt{\log n}\right)}.$ \qedhere
	
\end{proof}
Now we are able to show that the length of any node-to-root path is $\mathcal{O}\left(\frac{\log n}{\log \log n}\right)$.
\begin{theorem}
	If $T$ is a stable network, then its height $h\in \mathcal{O}\left(\frac{\log n}{\log \log n}\right)$.
\end{theorem}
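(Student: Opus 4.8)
The plan is to show that a stable tree of height $h$ must contain super-exponentially many nodes, namely $\log n = \Omega(h\log h)$, which rearranges immediately to $h=\mathcal{O}(\log n/\log\log n)$. The whole argument runs along a single longest leaf-to-root path $P=u_h^P,\dots,u_0^P$ of length $h$ and tracks how fast the subtree sizes $|T_i^P|$ shrink as one walks from the leaf towards the root. I abbreviate $|T_i^P|$ as $|T_i|$ and $\dd_i^P$ as $\dd_i$.

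First I would collect the degree information along $P$ that the earlier lemmas already provide. By \Cref{lem:deg_parent_of_leaf} the parent of the leaf has in-degree $1$, i.e. $\dd_{h-1}=1$; by \Cref{corollary:degree_seguence_monotonic} the sequence $\dd_0,\dots,\dd_{h-1}$ is non-increasing towards the leaf; and by \Cref{thm:same_degree} the in-degree strictly increases by at least one every two hops on the way up, so that $\dd_j\ge (h-j)/2-\mathcal{O}(1)$ for every $j$ (the $\mathcal{O}(1)$ correction absorbs the bounded bottom part where the subtrees involved have at most four nodes). In particular this reproves $\dd_0\ge\lfloor(h-2)/2\rfloor+1$ exactly as in \Cref{thm:LB_deg_of_the_root}.

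The engine of the proof is \Cref{lemma:degree_formula}. Rearranging its inequality $\dd_{i-1}\ge\frac{|T_i|}{|T_i|-|T_{i+1}|}(\dd_i-1)$ gives a bound on the shrinkage of consecutive subtrees,
\begin{equation*}
	\frac{|T_i|}{|T_{i+1}|}\ \ge\ \frac{\dd_{i-1}}{\dd_{i-1}-\dd_i+1},
\end{equation*}
valid for $1<i<h$ (the denominator is positive since $\dd_{i-1}\ge\dd_i$). Since $|T_h|=1$, telescoping from $i=2$ to $i=h-1$ yields
\begin{equation*}
	n\ \ge\ |T_2|\ =\ \prod_{i=2}^{h-1}\frac{|T_i|}{|T_{i+1}|}\ \ge\ \prod_{i=2}^{h-1}\frac{\dd_{i-1}}{\dd_{i-1}-\dd_i+1}\ =\ \frac{\prod_{j=1}^{h-2}\dd_j}{\prod_{i=2}^{h-1}(\dd_{i-1}-\dd_i+1)}.
\end{equation*}
This captures exactly the tension I would exploit: whenever the degree stays constant for one step the corresponding factor equals the full degree $\dd_{i-1}$, so a long path forces either many large constant-degree factors (slow decrease) or, if the degrees drop quickly, a correspondingly large numerator. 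To turn this into a bound I would estimate the numerator with $\dd_j\ge(h-j)/2-\mathcal{O}(1)$, giving $\prod_{j=1}^{h-2}\dd_j\ge (h-1)!/2^{\mathcal{O}(h)}$ and hence a numerator whose logarithm is $\Omega(h\log h)$ by Stirling; and I would control the denominator by setting $b_i=\dd_{i-1}-\dd_i\ge 0$, noting $\sum_i b_i=\dd_1-1$, and using $\log(1+b_i)\le b_i$ to get $\sum_i\log(\dd_{i-1}-\dd_i+1)\le\dd_1-1$. When $\dd_1=\mathcal{O}(h)$ this leaves a surplus of $\Omega(h\log h)$ in $\log n$, which finishes the proof.

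The step I expect to be the main obstacle is the uniform control of the denominator against the numerator when the root in-degree is large. If $\dd_1$ is only $\mathcal{O}(h)$ the telescoped product is cleanly dominated by the $(h-1)!$ in the numerator, but when $\dd_1$ is much larger the degrees may fall off fast, the crude bound $\sum_i\log(1+b_i)\le\dd_1$ becomes useless, and the surplus is no longer evident. To handle this I would either argue that the sparsest stable tree of height $h$ is attained in the slowly-decreasing balanced regime already analysed in \Cref{thm:LB_deg_of_the_root}, so that the product is minimized there, or split off the large-root-degree case and close it with the size-versus-degree estimate underlying \Cref{thm:UB_deg_of_the_root}, namely that a node of in-degree $m$ roots a subtree of at least $2^{\Omega((\log m)^2)}$ nodes; for $\dd_0\ge 2^{\sqrt{h\log h}}$ this alone gives $\log n=\Omega(h\log h)$, and the remaining moderate range is covered by the telescoped bound.
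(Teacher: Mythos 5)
Your rearrangement of Lemma~\ref{lemma:degree_formula} and the telescoping identity are correct, and the case $\dd_1 = \mathcal{O}(h)$ does go through; but there is a genuine gap in exactly the regime you flag, and neither of your two rescues closes it. Concretely, consider a degree sequence along the path that grows roughly quadratically, $\dd_j \approx ((h-j)/2)^2$. This is monotone, has $\dd_{h-1}=1$, strictly increases every hop (so it is consistent with Lemma~\ref{thm:same_degree}, Corollary~\ref{corollary:degree_seguence_monotonic}, Lemma~\ref{lem:deg_parent_of_leaf}, and every estimate you invoke), yet the total drop is $\dd_1 - 1 = \Theta(h^2)$, so your bound $\sum_i \log(1+b_i)\le \dd_1-1$ swamps the $\Theta(h\log h)$ numerator and the telescoped inequality becomes vacuous. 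At the same time $\dd_0 = \Theta(h^2) \ll 2^{\sqrt{h\log h}}$, so the size-versus-degree estimate yields only $\log n = \Omega\left((\log h)^2\right)$, far short of $\Omega(h\log h)$. Hence the window $\omega(h\log h) \le \dd_1 \le 2^{\sqrt{h\log h}}$ is covered by neither case. Note that the failure lies in your estimate, not in the telescoping itself: for this very sequence each term is $\log\frac{\dd_{i-1}}{b_i+1}\approx \log\frac{((h-i)/2)^2}{(h-i)/2} = \log\frac{h-i}{2}$, so the true sum is still $\Theta(h\log h)$ --- the bound $\log(1+b_i)\le b_i$ is simply far too lossy once the drops are super-constant. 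Your alternative rescue, that the sparsest stable tree of height $h$ is the minimally increasing balanced one, is an unproven extremal claim: Theorem~\ref{thm:LB_deg_of_the_root} analyses the opposite extremal problem (the \emph{densest} tree of a given height, in order to lower-bound $\dd_0$), and proving the sparsest-tree characterization is essentially the theorem you are trying to prove.

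The paper sidesteps this trap by doing the bookkeeping on subtree sizes rather than on degree drops. It classifies the indices of a leaf-to-root path according to whether $|T^P_i|-|T^P_{i+1}| \geq \sqrt[3]{\log n}\cdot |T^P_{i+1}|$ or not: large-shrinkage indices multiply the subtree size by $\sqrt[3]{\log n}$, so there are $\mathcal{O}(\log n/\log\log n)$ of them; for small-shrinkage indices with $\dd^P_i \geq 4\sqrt[3]{\log n}$, Lemma~\ref{lemma:degree_formula} forces the \emph{degree} to grow by the multiplicative factor $\sqrt{(1+\sqrt[3]{\log n})/\sqrt[3]{\log n}}$ per step, and Theorem~\ref{thm:UB_deg_of_the_root} ($\dd_0 \le 2^{\mathcal{O}(\sqrt{\log n})}$) caps the number of such steps at $\mathcal{O}\left((\log n)^{5/6}\right)$; the remaining indices (small shrinkage, small degree) number $\mathcal{O}(\sqrt[3]{\log n})$ by Lemma~\ref{thm:same_degree}. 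The crucial difference is that ``small shrinkage implies multiplicative degree growth'' holds no matter how the degree drops are distributed, so no intermediate regime ever arises. If you want to keep your route, you need an analogous per-step dichotomy --- e.g., either $b_i+1 \le \dd_{i-1}^{1-\gamma}$, so the term contributes at least $\gamma\log \dd_{i-1}$, or the degree grows by a factor $1+\dd_{i-1}^{-\gamma}$, which after $\Omega(h)$ such steps forces $\dd_1 = 2^{h^{1-o(1)}}$ and lets the size bound take over --- and then balance $\gamma$ against both counts. That balancing is precisely the technical content currently missing from your write-up.
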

\begin{proof}
	Consider a leaf-to-root path $P$ in $T$. We show that there are $\mathcal{O}\left(\frac{\log n}{\log \log n}\right)$ indices such that $|T^P_i|-|T^P_{i+1}| \geq \sqrt[3]{\log n} \cdot |T^P_{i+1}|$ and $\mathcal{O}\left(\frac{\log n}{\log \log n}\right)$ indices such that $|T^P_i|-|T^P_{i+1}| < \sqrt[3]{\log n} \cdot |T^P_{i+1}|$.
	
	Let $k$ be the number of indices $i$ that satisfy $|T^P_{i}| - |T^P_{i+1}| \geq \sqrt[3]{\log n} \cdot |T^P_{i+1}|$. 
	Then we have that $|T^P_{i}| = |T^P_{i}|-|T^P_{i+1}|+|T^P_{i+1}| >  \sqrt[3]{\log n} \cdot |T^P_{i+1}|$.
	Since $|T^P_i| > |T^P_{i+1}|$ for every $i$, and because $|T^P_i| \leq n$, we have that $|T^P_0| > (\log n)^{k/3}$ and $|T^P_0| = n+1$, from which we derive $(\log n)^{k/3} \leq n$, i.e., $k = \mathcal{O}\left(\frac{\log n}{\log \log n}\right)$.
	
	By Lemma~\ref{thm:same_degree} and Corollary~\ref{corollary:degree_seguence_monotonic}, there are $\mathcal{O}(\sqrt[3]{\log n})=\mathcal{O}\left(\frac{\log n}{\log \log n}\right)$ indices $i$ such that $\dd^P_i \leq 4 \sqrt[3]{\log n}$.
	We now prove that there are $\mathcal{O}\left(\frac{\log n}{\log \log n}\right)$ indices such that $|T^P_{i}| - |T^P_{i+1}| < \sqrt[3]{\log n} \cdot |T^P_{i+1}|$
	and such that $\dd^P_i \geq 4 \sqrt[3]{\log n}$. 
	By Lemma~\ref{lemma:degree_formula} and using the fact that $\dd^P_i \geq 4 \sqrt[3]{\log n}$ and $n\geq 2$, we have that
	\[
	\dd^P_{i-1}\geq \frac{|T^P_i|}{|T^P_i|-|T^P_{i+1}|}(\dd^P_{i}-1) \geq \frac{1+\sqrt[3]{\log n}}{\sqrt[3]{\log n}}(\dd^P_i-1) \geq \sqrt{\frac{1+\sqrt[3]{\log n}}{\sqrt[3]{\log n}}}\dd^P_i.
	\]
	By Corollary~\ref{corollary:degree_seguence_monotonic} we have that $\dd^P_{i-1} \geq \dd^P_i$ for every $i$. Hence $d_0 \geq \left(\frac{1+\sqrt[3]{\log n}}{\sqrt[3]{\log n}}\right)^{k/2}$. 
	Moreover, by Theorem~\ref{thm:UB_deg_of_the_root}, $\dd^P_0 \leq 2^{\alpha \sqrt{\log n}}$ for some constant $\alpha > 0$. 
	As a consequence, we have that 
	$\left(\frac{1+\sqrt[3]{\log n}}{\sqrt[3]{\log n}}\right)^{k/2} \leq 2^{\alpha \sqrt{\log n}}$, i.e.,  $2^{\frac{k}{2}\log \frac{1+\sqrt[3]{\log n}}{\sqrt[3]{\log n}}} \leq 2^{\alpha \sqrt{\log n}}$, which implies,
	$k \leq 2\alpha \frac{\sqrt{\log n}}{\log \left(1+1/\sqrt[3]{\log n}\right)}.
	$
	
	We complete the proof by showing that 
	$
	\frac{\sqrt{\log n}}{\log \frac{1+\sqrt[3]{\log n}}{\sqrt[3]{\log n}}} \leq 3 \frac{\log n}{\log \log n}$, for large enough $n$,
	i.e., we have to show that 
	$\log\log n \leq 3 \sqrt{\log n} \cdot \log \frac{1+\sqrt[3]{\log n}}{\sqrt[3]{\log n}}.
	$
	Let $M=\sqrt[3]{\log n}$. We have to prove that
	\begin{equation}\label{eq:log_formula}
	\log M \leq M^{3/2} \log \frac{1+M}{M}=\log \left( \frac{1+M}{M}\right)^{M^{3/2}}.
	\end{equation}
	By Bernoulli's inequality, $\left(1+\frac{1}{M}\right)^{M^{3/2}}\geq 2^{M^{1/2}}\geq M$ for $M\geq 16$. Thus, inequality (\ref{eq:log_formula}) is satisfied.\qedhere
	
\end{proof}

\section{Existence of Equilibrium Trees}\label{sec:existence}
In this section we analyze whether the \SNCG admits equilibrium trees for all agent numbers $n$. We first show that in general equilibrium existence is not guaranteed since for $n=16$ and $n=18$ no stable tree exists. We contrast this negative result with a NE existence proof for infinitely many agent numbers $n$. This positive result is achieved for so-called balanced trees, i.e., trees where all nodes with the same distance to the root have the same in-degree. We believe that our positive results can be strengthened to proving that stable trees exist for all $n$ except $n=16$ and $n=18$, and we leave this as an intriguing open problem. 
Figure~\ref{fig:sample_trees} shows sample equilibrium trees for small $n$. 
\begin{figure}[h]
 \centering
 \includegraphics[width=0.9\textwidth]{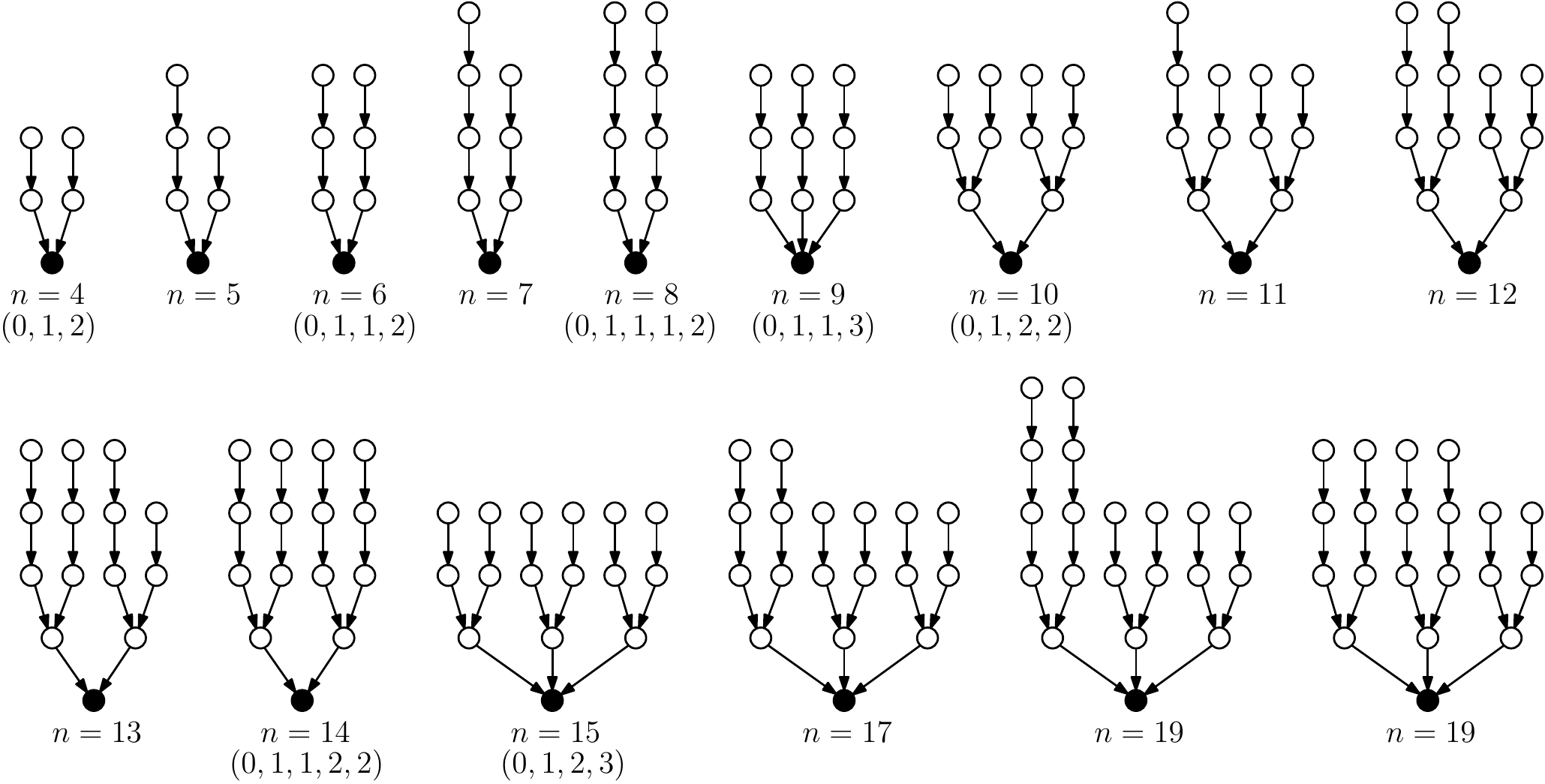}
 \caption{Sample equilibrium trees for $n=4$ to $n=19$. All depicted trees for $n<19$ are the unique equilibria for the respective $n$. For $n=19$ two equilibrium trees exist. No stable tree exists for $n=16$ and $n=18$.
 The stable trees for $n=4,6,8,9,10,14,15$ are balanced trees and are annotated with their identifying in-degree sequence of all leaf-to-root paths. (See Section~\ref{sec:balanced_trees} for definitions.) }
 \label{fig:sample_trees}
\end{figure}

\begin{theorem}\label{thm:non_existence_stable_tree_for_n_equal_16}
	For $n = 16$ there exists no stable network. 
\end{theorem}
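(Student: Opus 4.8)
The plan is to reduce a hypothetical stable tree $T$ on the $17$ nodes to finitely many candidate shapes and then exhibit an improving move in each. I start from the global invariant $\sum_{v}\indeg(v)=16$ (each of the $16$ agents activates exactly one edge) and from Corollary~\ref{corollary:degree_seguence_monotonic}, which makes the root $r$ the node of maximum in-degree; write $d_0:=\indeg(r)$. The first goal is to confine $d_0$ to $\{2,3\}$. If $d_0=1$, then every in-degree is at most $1$ and $T$ is a single path on $17$ nodes; but such a path is not stable, since its leaf can re-attach directly to $r$, lowering its cost from $H_{16}>2$ to exactly $2$. For an upper bound I combine Corollary~\ref{cor:min_deg_of_child} with the fact, read off Lemma~\ref{lem:deg_parent_of_leaf}, that any stable subtree whose root has in-degree at least $2$ spans at least $5$ nodes (such a root has no leaf child, so it has at least two children, each spanning at least $2$ nodes). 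If $d_0=4$, Corollary~\ref{cor:min_deg_of_child} forces at least three children of in-degree at least $2$, contributing at least $3\cdot5=15$ nodes, while the fourth child is internal and contributes at least $2$, for a total of at least $17>16$ non-root nodes; the same count excludes every $d_0\ge4$. Hence $d_0\in\{2,3\}$.

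Next I enumerate the admissible branch structures. By Lemma~\ref{thm:subtreeNE} each subtree $T(c_j)$ rooted at a child $c_j$ of $r$ is itself stable, so the $d_0$ branch sizes form a composition of $16$ into sizes of stable subtrees, with each branch-root in-degree bounded by $d_0$ (Corollary~\ref{corollary:degree_seguence_monotonic}) and tied to the branch sizes through Lemma~\ref{thm:deg_up} and Lemma~\ref{lemma:degree_formula}. Applying the structural lemmas recursively --- Lemma~\ref{lem:deg_parent_of_leaf} (every leaf hangs below an in-degree-$1$ parent), together with Corollary~\ref{corollary:degree_seguence_monotonic} and Lemma~\ref{thm:same_degree} (in-degrees are monotone toward $r$ and strictly drop within two hops on any subtree of size exceeding $4$) --- collapses the stable trees of each size at most $16$ to a short, essentially forced list: paths for the smallest sizes and the balanced shapes of Figure~\ref{fig:sample_trees} thereafter. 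Assembling these into the $d_0$ branches of total size $16$ leaves only finitely many $17$-node candidates that are even locally consistent with stability.

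I then refute each surviving candidate by one profitable deviation. The prototype is the move in Figure~\ref{fig:notation}: a leaf or light-branch agent abandons its current ancestor and re-attaches either directly to $r$, or into a lighter sibling branch, or (as shown there) via a longer detour; the fair-share formula $\cost_T(i)=\sum_{(u,v)\in P_i}\indeg(v)/|T(u)|$ certifies the drop in cost. Since no candidate survives, the \SNCG admits no stable network for $n=16$.

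The main obstacle is this last step, not the reduction. The number of near-balanced candidates --- splits such as $8+8$ and $9+7$ for $d_0=2$, and their analogues for $d_0=3$ --- is sizeable, the beneficial deviation is frequently not the naive jump to $r$ but a multi-edge reroute (in Figure~\ref{fig:notation} the agent improves from $1+\frac{2}{2}+\frac{3}{5}$ to $1+\frac{1}{2}+\frac{2}{3}+\frac{3}{7}$), and the governing inequalities are tight, so each case demands exact arithmetic with the shared edge costs. A secondary difficulty is making the recursive classification of admissible subtree shapes provably exhaustive, so that no locally consistent $17$-node candidate is overlooked.
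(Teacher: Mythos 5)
Your first step---confining $\indeg(r)$ to $\{2,3\}$---is correct, and in fact cleaner than the paper's own treatment of that step. The paper excludes $\indeg(r)\geq 4$ through a case analysis on small subtrees hanging off the root (showing agents attached to the root would deviate into any subtree of size at most $3$, then grinding through the shapes of size-$4$ subtrees), whereas your counting argument---Corollary~\ref{cor:min_deg_of_child} gives at least $\indeg(r)-1\geq 3$ children of in-degree at least $2$, Lemma~\ref{lem:deg_parent_of_leaf} forces each such child's subtree to span at least $5$ nodes and the remaining child to span at least $2$, totalling at least $17>16$ non-root nodes---is shorter and self-contained. Your exclusion of $\indeg(r)=1$ via the instability of the Hamiltonian path is also fine.

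After this reduction, however, the proposal stops being a proof and becomes a plan, and the part you defer is exactly where all of the content of the paper's proof lies. The paper kills $\indeg(r)=2$ by a concrete argument: Lemma~\ref{thm:same_degree} bounds runs of in-degree-$1$ nodes, which forces both children of the root to have in-degree $2$ and subtree size at most $9$, which in turn forces a leaf-to-root path of length $5$ whose middle agent pays at least $\frac{1}{3}+\frac{2}{4}+\frac{2}{9}=\frac{19}{18}>1$ and hence improves by re-attaching directly to the root. It kills $\indeg(r)=3$ by enumerating the ten admissible size compositions $(|T(x)|,|T(y)|,|T(z)|)$ of $16$, namely $(5,5,6)$, $(2,5,9)$, $(2,6,8)$, $(2,7,7)$, $(3,3,10)$, $(3,5,8)$, $(3,6,7)$, $(4,4,8)$, $(4,5,7)$, $(4,6,6)$, and exhibiting for each a specific deviating agent together with exact before/after costs (e.g.\ $\frac{65}{24}\to\frac{52}{20}$-type comparisons). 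None of this appears in your write-up; you explicitly label it ``the main obstacle'' and leave it undone. Moreover, your assertion that the stable trees of each size at most $16$ collapse to ``a short, essentially forced list'' is itself a nontrivial ingredient: the paper leans on the uniqueness of the stable trees for $n\in\{4,\dots,9\}$ (via Lemma~\ref{thm:subtreeNE}) to pin down the internal shape of each branch, and you neither prove this uniqueness nor indicate how you would. Since the theorem \emph{is} the outcome of this finite but delicate case analysis---the inequalities are tight, and several candidates are only refuted by multi-edge reroutes rather than jumps to the root---the proposal as written does not establish the statement.
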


\begin{proof}
	Assume for the contrary that there is a stable network $T$. We first show that the in-degree of the root $r$ in $T$ is equal to $2$ or $3$. By Lemma \ref{thm:same_degree} we already know that $\indeg(r) \geq 2$ in~$T$.
	
	Assume that $\indeg(r) \geq 4$ and that there is a subtree $T(x)$ with an edge $(x,r)$ and $|T(x)| \leq 3$. Hence, apart from $x$, there can be at most two additional nodes, $x'$ and $x''$, in $T(x)$. By Lemma \ref{thm:subtreeNE} we know that $T(x)$ is a path and therefore $\indeg(x) < 2$. Otherwise $T(x)$ is not stable since $x'$ can swap the edge $(x',x)$ with $(x',x'')$ and decrease her cost from $2$ to $\frac32$. Hence, $\indeg(x) = 0$ or $\indeg(x) = 1$.
	
	Consider an agent $y \neq x$ with an edge $(y,r)$ and a strategy change from $(y,r)$ to $(y,x)$. The current costs of $y$ are $\frac{\indeg(r)}{|T(y)|}$. If $\indeg(x) = 0$ the swap results in cost of $\frac{1}{|T(y)|} + \frac{\indeg(r)-1}{|T(y)|+1}$. This is an improvement for $y$. If $\indeg(x) = 1$ the costs for $y$ would be at most $\frac{2}{|T(y)|} + \frac{\indeg(r)-1}{|T(y)|+2}$ which is an improvement if $|T(y)| \leq 4$. However, since $\indeg(r) \geq 4$ and $|T(x)| \geq 2$ there has to be an agent $y$ with an edge $(y,r)$ and $|T(y)| \leq 4$. Therefore, for all agents $x$ with $(x,r)$ it holds that $|T(x)| \geq 4$ which is only possible if $\indeg(r) = 4$ and $|T(x)| = 4$. Let $u$, $v$, $w$ and $x$ be the nodes which belong to $T(x)$.
	
	There are three different remaining cases how the subtrees $T(x)$ can look like: \begin{itemize}
		\item $\indeg(x) = 3$: There are two leaf nodes $u$ and  $v$ with an edge to $x$. Swapping $(u,x)$ to $(u,v)$ decreases the cost of $u$ from $4$ to $3$.
		\item $\indeg(x) = 2$: $T(x)$ includes the edges $(u,x)$, $(v,x)$ and $(w,u)$. Agent $v$ can reduce her cost from $3$ to $\frac{17}{6}$ by swapping $(v,x)$ to $(v,w)$.
		\item $\indeg(x) = 1$: Because of Corollary \ref{corollary:degree_seguence_monotonic},  $T$ includes two paths $(u_1, v_1)$, $(v_1, w_1)$, $(w_1, x_1)$, $(x_1, r)$ and $(u_2, v_2)$, $(v_2, w_2)$, $(w_2, x_2)$, $(x_2, r)$. Agent $x_1$ can improve by swapping her edge to $(x_1, x_2)$ which reduces her costs from $1$ to $\frac78$.
	\end{itemize}
	Hence, $T$ with $\indeg(r) \geq 4$ cannot be stable.
	
	We first consider the case $\indeg(r) = 2$.
	By Corollary \ref{corollary:degree_seguence_monotonic} and Lemma \ref{thm:same_degree} we know that besides the two nodes $x'$ and $x''$ directly connected to the root with $(x',r)$ and $(x'',r)$, all agents~$i$ have an in-degree $\indeg(i) \leq 1$. Since Lemma \ref{thm:same_degree} bounds the maximum length of a simple path where all edges costs are equal $1$ by $4$, it holds that $\indeg(x') = \indeg(x'') = 2$ and $|T(x')| \leq 9$ and  $|T(x'')| \leq 9$, respectively. Otherwise it would be impossible to place all agents in the equilibrium tree~$T$. For the same reason there is at least one path of length $4$, $(t,u)$, $(u,v)$, ($v,w)$, with $(w,x')$ or $(w,x'')$ with $t$ as a leaf node with $\indeg(t) = 0$. The current costs of agent $v$ are at least $\frac13 + \frac24 + \frac29 = \frac {19}{18}$. However, agent~$v$ can improve by swapping her edge towards the root to $(v,r)$ and gains costs equal $1$.
	
	Therefore $r$ has an in-degree $\indeg(r) = 3$. For every subtree $T(x)$ with a root $x$ and an edge $(x,r)$ it holds that $|T(x)| \geq 2$. Otherwise another agent $x'$ with $(x',r)$ can improve by swapping her edge towards $x$ and reducing her costs from $\frac{3}{|T(x')|}$ to $\frac{1}{|T(x')|} + \frac{2}{|T(x')| + 1}$.
	
	Assume that for every subtree $T(x)$ with $x$ having the edge $(x,r)$ it holds that $|T(x)| \geq 5$, which implies that there are two subtrees of size $5$ and one subtree of size $6$. Because of Theorem \ref{thm:subtreeNE} there is only one possible tree $T$, see Figure~\ref*{img:n=16}. However, this is not stable since agent~$9$ can improve by swapping her edge towards $(9,6)$ and reduces her costs from $1+1+\frac35 = \frac{13}{5}$ to $1+\frac12+\frac23+\frac37 = \frac{109}{42}$.
	
	\begin{figure}
		\centering
		\begin{subfigure}[c]{0.3\textwidth}		
			\includegraphics[width=0.7\textwidth]{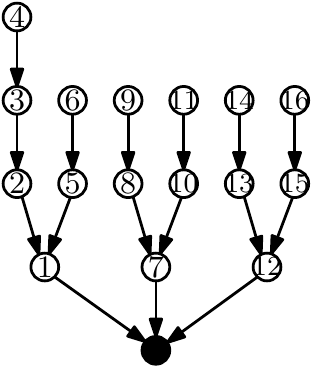}
			\subcaption{}
			\label{img:n=16}
		\end{subfigure}
		%\centering
		\hspace*{2em}
		\begin{subfigure}[c]{0.3\textwidth}
			\includegraphics[width=0.2\textwidth]{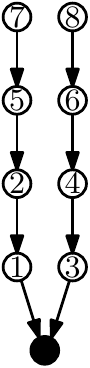}
			\subcaption{}
			\label{img:stablestate}
		\end{subfigure}
		%\centering
		\begin{subfigure}[c]{0.3\textwidth}
			\vspace*{1.5em}
			\includegraphics[width=0.35\textwidth]{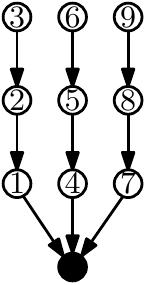}
			\subcaption{}
			\label{img:stablestate_10}
		\end{subfigure}
		\caption{(a) The only possible tree $T$ due to Theorem~\ref{thm:subtreeNE} where every subtree $T(x)$ with $x$ having an edge $(x,r)$ has size $|T(x)| \geq 5$. (b) The stable subtrees $T(x)$, $T(y)$ and $T(z)$ for $5 \leq T(i) \leq 9$ for $i \in \{x,y,z\}$. (c) The stable subtree $T(z)$ for $|T(z)| = 10$.  }
	\end{figure}

	Hence, there are three remaining cases. Let $T(x)$ be the smallest subtree with $x$ having the edge~$(x,r)$, i.e., with a direct edge to the root, $T(y)$ be the second smallest subtree with $(y,r)$ and~$T(z)$ be the largest subtree with $(z,r)$. It holds that $|T(x)| \leq 4$ and $T(x)$ is therefore a path.
	\begin{itemize}
		\item $|T(x)| = 2$: It holds that $|T(y)| > 3$ since otherwise agent $y$ can improve by swapping her edge towards the leaf node of $T(x)$ and having costs of $\frac{1}{|T(y)|} + \frac{1}{|T(y)|+1} +  \frac{2}{|T(y)|+2}$ instead of $\frac{3}{|T(y)|}$. Considering now the possible swap from $x$ towards $y$ leads to $\indeg(y) > 1$ and hence, $|T(y)| > 4$. Remember that due to Lemma~\ref{thm:subtreeNE} we know how the subtrees $T(x)$, $T(y)$ and $T(z)$ look like. We have three remaining cases:
		\begin{itemize}
			\item $|T(y)| = 5$ and $|T(z)| = 9$: This is not a stable tree since agent $8$, cf. Figure~\ref*{img:stablestate}, from $T(z)$ can improve by swapping her edge towards the leaf node of $T(x)$ and reduces her cost from $\frac83$ to $\frac52$.
			\item $|T(y)| = 6$ and $|T(z)| = 8$: This is not a stable tree since agent $7$, cf. Figure~\ref*{img:stablestate}, from $T(z)$ can improve by swapping her edge towards the leaf node of $T(x)$ and reduces her cost from $\frac{65}{24}$ to~$\frac52$.
			\item $|T(y)| = 7$ and $|T(z)| = 7$: This is not a stable tree since agent $6$, cf. Figure~\ref*{img:stablestate}, from $T(z)$ can improve by swapping her edge towards the leaf node of $T(x)$ and reduces her cost from $\frac{109}{42}$ to~$\frac52$.
		\end{itemize}
		\item $|T(x)| = 3$: It holds that $|T(y)| \neq 4$ since otherwise agent $x$ can reduce her current costs of $1$ to $\frac{20}{21}$ by swapping her edge towards $y$ and choose $(x,y)$. We have three remaining cases:
		\begin{itemize}
			\item $|T(y)| = 3$ and $|T(z)| = 10$: This is not a stable tree since agent $9$, cf. Figure~\ref*{img:stablestate_10}, from $T(z)$ can improve by swapping her edge towards the leaf node of $T(x)$ and reduces her cost from $\frac{14}{5}$ to~$\frac{31}{12}$.
			\item $|T(y)| = 5$ and $|T(z)| = 8$: This is not a stable tree since agent $7$, cf. Figure~\ref*{img:stablestate}, from $T(z)$ can improve by swapping her edge towards the leaf node of $T(x)$ and reduces her cost from $\frac{65}{24}$ to~$\frac{31}{12}$.
			\item $|T(y)| = 6$ and $|T(z)| = 7$: This is not a stable tree since agent $6$, cf. Figure~\ref*{img:stablestate}, from $T(z)$ can improve by swapping her edge towards the leaf node of $T(x)$ and reduces her cost from $\frac{109}{42}$ to~$\frac{31}{12}$.
		\end{itemize}
		\item $|T(x)| = 4$: We have three remaining cases:
		\begin{itemize}
			\item $|T(y)| = 4$ and $|T(z)| = 8$: This is not a stable tree since agent $7$, cf. Figure~\ref*{img:stablestate}, from $T(z)$ can improve by swapping her edge towards agent $x$ and reduces her cost from $\frac{65}{24}$ to $\frac{13}{5}$.
			\item $|T(y)| = 5$ and $|T(z)| = 7$: This is not a stable tree since agent $6$, cf. Figure~\ref*{img:stablestate}, from $T(z)$ can improve by swapping her edge towards agent $x$ and reduces her cost from $\frac{41}{14}$ to $\frac{13}{5}$.
			\item $|T(y)| = 6$ and $|T(z)| = 6$: This is not a stable tree since agent $5$ from $T(z)$ can improve by swapping her edge towards agent $4$ from $T(y)$ and reduces her cost from $\frac83$ to $\frac{109}{42}$, cf. Figure~\ref*{img:stablestate}.
		\end{itemize}
	\end{itemize}
	Observe that there are unique stable states for $n \in \{4, \ldots, 9\}$ and, together with Theorem~\ref{thm:subtreeNE}, this shows that there exists no stable tree $T$ for $n = 16$. \qedhere
\end{proof}

\noindent The non-existence of a stable tree for $n=16$ directly implies that the \SNCG cannot have the finite improvement property, which states that every sequence of improving moves must be finite, i.e., reaches a Nash equilibrium. Thus, since the finite improvement property is equivalent to the game admitting a potential function~\cite{MS96} this implies the following statement.
\begin{corollary}
	The $\SNCG$ is not a potential game.
\end{corollary}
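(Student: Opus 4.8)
The plan is to derive a contradiction from the assumed existence of a potential function, using the finite improvement property together with the non-existence result of Theorem~\ref{thm:non_existence_stable_tree_for_n_equal_16}. First I would observe that the \SNCG with a fixed number $n$ of agents is a \emph{finite} strategic game: each of the $n$ agents has exactly $n$ available strategies (the choice of endpoint for her single outgoing edge), so the set of strategy profiles is finite. I would then invoke the classical characterization of Monderer and Shapley~\cite{MS96}: a finite game admits a (generalized ordinal) potential function if and only if it has the finite improvement property, i.e., every sequence of unilateral improving moves is finite.

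The second step is to argue that the finite improvement property forces the existence of a pure Nash equilibrium. Starting from any strategy profile and repeatedly letting some agent with an improving move switch her strategy, the resulting improvement path must terminate by the finite improvement property; but the only profiles at which no agent can continue are precisely those in which no agent has an improving move, which are by definition the pure Nash equilibria. Hence any finite potential game possesses at least one pure Nash equilibrium, and this holds in particular for the \SNCG for every fixed $n$.

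Finally I would specialize to $n=16$. By Theorem~\ref{thm:non_existence_stable_tree_for_n_equal_16} the instance with $16$ agents has no stable tree, and since every pure Nash equilibrium of the \SNCG induces a stable tree (as noted in the model section, any stable network must be a spanning tree rooted at $r$), the instance with $n=16$ has no pure Nash equilibrium. This contradicts the previous step, so the \SNCG cannot admit a potential function.

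The entire argument is a direct application of standard potential-game theory, and the only genuine content resides in Theorem~\ref{thm:non_existence_stable_tree_for_n_equal_16}, which supplies the single instance witnessing the failure. I therefore expect no real obstacle; the one point worth stating explicitly is that every flavor of potential (exact, weighted, ordinal, or generalized ordinal) implies the finite improvement property on a finite game, so exhibiting a single instance without a pure equilibrium simultaneously rules out all of them.
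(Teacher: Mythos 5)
Your proof is correct and follows essentially the same route as the paper: both arguments combine the Monderer--Shapley equivalence between potential functions and the finite improvement property with the fact that, in a finite game, the finite improvement property forces improvement paths to terminate at a pure Nash equilibrium, contradicting Theorem~\ref{thm:non_existence_stable_tree_for_n_equal_16}. Your explicit remark that every flavor of potential (exact, weighted, ordinal, generalized ordinal) implies the finite improvement property in a finite game is a welcome precision that the paper leaves implicit.
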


\begin{remark}
	By computational experiments we have obtained equilibrium trees for the $\SNCG$ for $1\leq n \leq 100$, except for $n=16$ and $n=18$. For $n=18$ we have verified via a brute-force search over all possible trees that no stable tree exists. Interestingly, for $n\geq 19$ equilibrium trees are no longer unique and in general the number of non-isomorphic equilibrium trees grows as $n$ grows.
\end{remark}

\subsection{Balanced Trees}\label{sec:balanced_trees}
Despite the negative result of the non-existence of a stable tree for $n=16$, in this section we prove the existence of NE's for infinitely many values of $n$. We prove this result by showing an interesting set of conditions for ruling out potential edge swaps; the proved conditions altogether allow us to show that there are infinitely many (balanced) trees that are stable. More precisely, we say that $T$ is {\em balanced} if any two nodes at the same distance from the root $r$ have equal in-degrees. Note, that any balanced tree $T$ of height $h$ can be uniquely encoded by a sequence of node degrees $(0,\dd_{h-1},\ldots,\dd_0)$, where $\dd_i$ is an in-degree of nodes at level $i$, i.e., at distance $i$ from the root. In this section we show that all the balanced trees of the form $(0,1,2,4,\dd_{h-4},\dots,\dd_0)$ such that $\dd_i < \dd_{i-1}\leq 2\dd_i +1$, for every $1 \leq i \leq h-4$ are stable. (See Fig.~\ref{fig:9_4_2_1_apx} for an example.)
\begin{figure}[h]
 \centering
 \includegraphics[width=\textwidth]{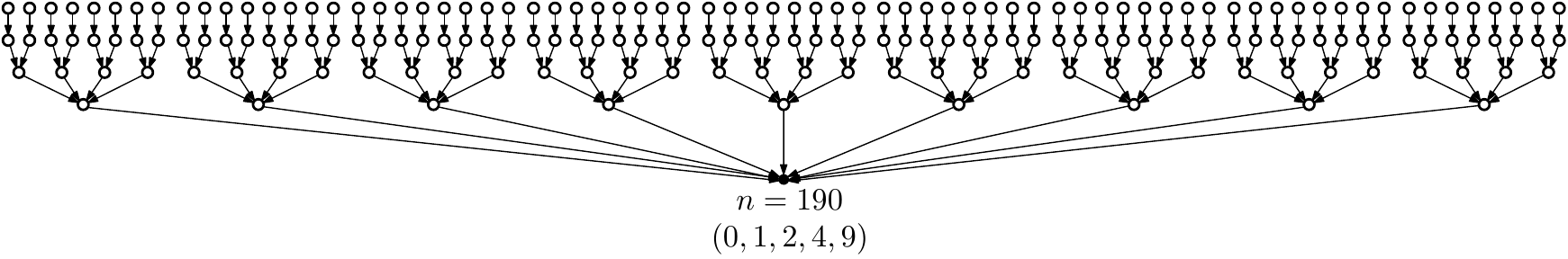}
 \caption{Sample of an extremal balanced tree with degree sequence $(0,1,2,4,9)$.}
 \label{fig:9_4_2_1_apx}
\end{figure}
Some of the provided conditions hold only for such balanced trees, while some other conditions hold for any tree (not necessarily balanced) that satisfies all the stated constraints.

The first two conditions (Condition 1 and Condition 2) rule out the case in which an agent swaps her edge towards her proper ancestors under the assumption that the tree satisfies some properties. In the rest of this section, we use the subscript $i$ to a node to emphasize that we are talking about a node that is at distance $i$ from the root of $T$.

\begin{lemma}[Condition 1]\label{lm:condition_1}
Consider a tree $T$ and a leaf-to-root path $P$ in $T$.
	Let $i \geq 2$. If $\dd^P_{i-2} \geq \dd^P_{i-1}$ and $|T^P_{i-1}| \geq \frac{\dd^P_{i-2}}{\dd^P_{i-2}+1-\dd^P_{i-1}}{|T^P_{i}|}$, then for $u^P_{i}$ it is not profitable to swap towards her ancestor of level $i-2$.
\end{lemma}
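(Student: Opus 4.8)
The plan is to establish this as a direct no-improving-move inequality, exactly in the style of the earlier lemmas (Lemma~\ref{thm:deg_up}, Lemma~\ref{lemma:degree_formula}). The statement asserts that agent $u^P_i$ has no incentive to swap her edge $(u^P_i, u^P_{i-1})$ towards her grandparent $u^P_{i-2}$ on the path. So first I would write down the cost of $u^P_i$ before and after this swap. Before the swap, the two relevant terms on the path towards the root are the cost of the edge into $u^P_{i-1}$ and the edge into $u^P_{i-2}$, namely $\frac{\dd^P_{i-1}}{|T^P_{i}|} + \frac{\dd^P_{i-2}}{|T^P_{i-1}|}$, since $u^P_i$ pays a share of the edge at level $i-1$ proportional to $|T^P_i|$ and a share of the edge at level $i-2$ proportional to $|T^P_{i-1}|$. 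All edges strictly above $u^P_{i-2}$ are unaffected by the swap, because the load on them is unchanged (the subtree $T^P_i$ merely relocates within $T^P_{i-2}$), so I can ignore them.

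After $u^P_i$ swaps directly to $u^P_{i-2}$, node $u^P_{i-1}$ loses one incoming edge while $u^P_{i-2}$ keeps the same in-degree (it loses the edge from $u^P_{i-1}$'s subtree share but gains $u^P_i$ as a direct child — here I must be careful: the in-degree of $u^P_{i-2}$ does \emph{not} increase, because the total number of edges entering it is unchanged only if $u^P_{i-1}$ was the former carrier; in fact $u^P_{i-2}$ gains a new direct edge from $u^P_i$, so $\indeg(u^P_{i-2})$ increases by $1$). The key calculation is therefore the new single term $\frac{\dd^P_{i-2}+1}{|T^P_{i}|}$ that $u^P_i$ now pays for connecting straight to $u^P_{i-2}$, compared with the two old terms. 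Hence the non-improvement condition becomes
\[
\frac{\dd^P_{i-2}+1}{|T^P_{i}|} \;\geq\; \frac{\dd^P_{i-1}}{|T^P_{i}|} + \frac{\dd^P_{i-2}}{|T^P_{i-1}|}.
\]
I would then simplify: moving terms gives $\frac{\dd^P_{i-2}+1-\dd^P_{i-1}}{|T^P_{i}|} \geq \frac{\dd^P_{i-2}}{|T^P_{i-1}|}$, and the hypothesis $\dd^P_{i-2}\geq \dd^P_{i-1}$ guarantees the numerator $\dd^P_{i-2}+1-\dd^P_{i-1}$ is positive (indeed at least $1$), so cross-multiplying is safe and yields precisely $|T^P_{i-1}| \geq \frac{\dd^P_{i-2}}{\dd^P_{i-2}+1-\dd^P_{i-1}}|T^P_{i}|$, which is exactly the stated size hypothesis.

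The main obstacle — and the step I would double-check most carefully — is getting the cost accounting right for the swap target, specifically whether connecting to $u^P_{i-2}$ raises its in-degree from $\dd^P_{i-2}$ to $\dd^P_{i-2}+1$ and whether the denominators for the relocated subtree are correct. Since this lemma considers only a swap to a proper ancestor on the same path $P$, the subtree $T^P_i$ is simply lifted to hang directly off $u^P_{i-2}$, so after the swap $u^P_i$ pays $\frac{\indeg(u^P_{i-2})+1}{|T^P_i|}$ for its single new edge, and there are no intermediate nodes on a detour path (unlike the more general sibling-swap setting of Lemma~\ref{thm:deg_up}). This makes the denominator $|T^P_i|$ rather than some sum of subtree sizes, which is what keeps the final inequality clean. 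I would present the proof as: state the no-improving-move inequality for $u^P_i$ swapping to $u^P_{i-2}$, expand both sides using the cost function and the definitions of $\dd^P_j$ and $T^P_j$, rearrange using $\dd^P_{i-2}\geq\dd^P_{i-1}$ to justify the sign, and read off the size condition.
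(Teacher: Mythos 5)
Your proposal is correct and follows essentially the same route as the paper's proof: both compare the old cost terms $\frac{\dd^P_{i-1}}{|T^P_i|}+\frac{\dd^P_{i-2}}{|T^P_{i-1}|}$ against the new term $\frac{\dd^P_{i-2}+1}{|T^P_i|}$ and use the size hypothesis (with $\dd^P_{i-2}\geq \dd^P_{i-1}$ guaranteeing a positive denominator) to conclude non-profitability; the only difference is presentational, in that you rearrange the no-improvement inequality into the stated condition while the paper substitutes the condition directly into the cost difference and bounds it by zero. Your careful resolution of the in-degree accounting (that $\indeg(u^P_{i-2})$ rises to $\dd^P_{i-2}+1$ after the swap, while edges above level $i-2$ are unaffected) matches the cost expression the paper uses implicitly.
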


\begin{proof}
	Consider a tree $T$ for which the premises of the theorem are satisfied. If $u^P_i$ swaps towards her ancestor of level $i-2$, the $u_i$'s cost decreases by 
	\[
	\frac{\dd_{i-2}}{|T_{i-1}|}+\frac{\dd_{i-1}}{|T_{i}|} - \frac{\dd_{i-2}+1}{|T_{i}|}\leq \frac{\dd_{i-2}+1-\dd_{i-1}}{|T_i|}+\frac{\dd_{i-1}}{|T_{i}|} - \frac{\dd_{i-2}+1}{|T_{i}|}=0.
	\] 
	Hence, for $u^P_{i}$ it is not profitable to swap towards her ancestor of level $i-2$. 
\end{proof}

\begin{lemma}[Condition 2]\label{lm:condition_2}
Consider a tree $T$ and a leaf-to-root path $P$ in $T$.
	 Let $i \geq j+3$ and $u^P_i$ and $u^P_{j+2}$ two nodes such that $u^P_{j+2}$ is a proper ancestor of $u^P_i$. Furthermore, assume that:
	 \begin{enumerate}
	 	\item $|T^P_{j+2}| \geq 2|T^P_{i}|$;
	 	\item $\dd^P_{j} \geq \dd^P_{j+1}+1$;
	 	\item it is not profitable for $u^P_i$ to swap towards its ancestor of level $j+1$;
	 	\item it is not profitable for $u^P_{j+2}$ to swap towards its ancestor of level $j$.
	 \end{enumerate}
	 Then, it is not profitable for $u^P_i$  to swap towards her ancestor of level $j$.
\end{lemma}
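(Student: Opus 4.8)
The plan is to reduce the whole statement to a single, reusable formula for the cost change an agent incurs when it rewires its outgoing edge to one of its proper ancestors on $P$, and then to combine the two hypothesised non-profitability conditions termwise. First I would record that general formula. Suppose a node $u^P_p$ rewires from its parent $u^P_{p-1}$ to an ancestor $u^P_q$ with $q \le p-2$, producing a tree $T'$. Since $u^P_p$ stays a descendant of every $u^P_s$ with $s \le q$, the subtrees $T^P_0,\dots,T^P_q$ keep their sizes, the moved subtree $T^P_p$ keeps its size, all in-degrees below level $q$ are untouched, and only $\indeg(u^P_q)$ grows by one. Hence
\begin{equation*}
\cost_{T'}(u^P_p) - \cost_{T}(u^P_p) = \frac{\dd^P_q + 1}{|T^P_p|} - \sum_{t=q+1}^{p}\frac{\dd^P_{t-1}}{|T^P_t|},
\end{equation*}
and ``not profitable'' means this quantity is nonnegative.

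Second, I would instantiate this formula three times so that the same quantities appear in all of them. The target inequality (swap of $u^P_i$ to level $j$) is the formula with $p=i$, $q=j$; splitting off the $t=j+1$ term of its sum rewrites it as $\frac{\dd^P_j+1}{|T^P_i|} - \frac{\dd^P_j}{|T^P_{j+1}|} - \sum_{t=j+2}^{i}\frac{\dd^P_{t-1}}{|T^P_t|}$. Condition~3 (swap of $u^P_i$ to level $j+1$) is the formula with $p=i$, $q=j+1$, and yields exactly the bound $\sum_{t=j+2}^{i}\frac{\dd^P_{t-1}}{|T^P_t|} \le \frac{\dd^P_{j+1}+1}{|T^P_i|}$ on that tail sum. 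Condition~4 (swap of $u^P_{j+2}$ to level $j$) is the formula with $p=j+2$, $q=j$, and rearranges to $\frac{\dd^P_j}{|T^P_{j+1}|} \le \frac{\dd^P_j + 1 - \dd^P_{j+1}}{|T^P_{j+2}|}$, controlling the single local term. Substituting both bounds into the split target expression gives
\begin{equation*}
\cost_{T'}(u^P_i) - \cost_{T}(u^P_i) \ge \frac{\dd^P_j - \dd^P_{j+1}}{|T^P_i|} - \frac{\dd^P_j + 1 - \dd^P_{j+1}}{|T^P_{j+2}|}.
\end{equation*}

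Third, writing $\delta := \dd^P_j - \dd^P_{j+1}$, Condition~2 gives $\delta \ge 1$, so the right-hand side is nonnegative precisely when $\frac{|T^P_{j+2}|}{|T^P_i|} \ge \frac{\delta+1}{\delta} = 1 + \frac{1}{\delta}$; since $\delta \ge 1$ this threshold is at most $2$, and Condition~1 supplies $|T^P_{j+2}| \ge 2|T^P_i|$, which closes the argument and shows the swap to level $j$ is not profitable.

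The individual steps are mechanical once the general swap formula is established, so the main obstacle I anticipate is the bookkeeping that underlies that formula: verifying that rewiring to an ancestor leaves all in-degrees below the target and all of $|T^P_0|,\dots,|T^P_j|,|T^P_i|$ unchanged, so that the three instantiations genuinely share the same terms and the tail sums cancel term-by-term, together with tracking the direction of every inequality. The role of each hypothesis is then transparent: Condition~3 absorbs the entire tail of the path above level $j+1$, Condition~4 absorbs the single term $\dd^P_j/|T^P_{j+1}|$, Condition~2 guarantees a strictly positive numerator $\delta$, and Condition~1 provides the factor-two slack needed to dominate $1+1/\delta$.
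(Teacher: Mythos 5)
Your proposal is correct and follows essentially the same argument as the paper: both split the target cost difference so that hypothesis (3) absorbs the entire tail $\sum_{\ell=j+1}^{i-1}\dd^P_\ell/|T^P_{\ell+1}|$, hypothesis (4) absorbs the single term $\dd^P_j/|T^P_{j+1}|$, and hypotheses (1) and (2) close the resulting inequality $\frac{\dd^P_j-\dd^P_{j+1}}{|T^P_i|} \geq \frac{\dd^P_j+1-\dd^P_{j+1}}{|T^P_{j+2}|}$. The only difference is cosmetic: you isolate the final inequality in the clean form $|T^P_{j+2}|/|T^P_i| \geq 1+1/\delta$, whereas the paper reaches the same conclusion by chaining substitutions through $\frac{\dd^P_{j+1}+\dd^P_j+3}{2|T^P_i|} \leq \frac{\dd^P_j+1}{|T^P_i|}$.
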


\begin{proof}
	Let $A=\sum_{\ell=j+2}^{i-1}\frac{\dd_\ell}{|T_{\ell+1}|}$. Since for $u^P_i$ it is not profitable to swap towards her ancestor of level $j+1$, we have that $A+\frac{\dd_{j+1}}{|T_{j+2}|} \leq \frac{\dd_{j+1}+1}{|T_{i}|}$. 
	Similarly, since for $u^P_{j+2}$ it is not profitable to swap towards her ancestor of level $j$ we have that $\frac{\dd_{j+1}}{|T_{j+2}|}+\frac{\dd_j}{|T_{j+1}|} \leq  \frac{\dd_{j}+1}{|T_{j+2}|}$, i.e., $\frac{\dd_j}{|T_{j+1}|} \leq  \frac{\dd_{j}+1-\dd_{j+1}}{|T_{j+2}|}$. Let $T'$ be the tree obtained after the swap. Therefore,
	\begin{align*}
		cost_T(u_i) - \sum\limits_{k < j}\frac{\dd_{k}}{|T_{k+1}|} &= A+\frac{\dd_{j+1}}{|T_{j+2}|}+\frac{\dd_j}{|T_{j+1}|} \leq 
		\frac{\dd_{j+1}+1}{|T_{i}|} + \frac{\dd_{j}+1-\dd_{j+1}}{|T_{j+2}|}\\
		& \leq \frac{\dd_{j+1}+1}{|T_{i}|}+\frac{\dd_j+1-\dd_{j+1}}{2|T_{i}|} = \frac{2\dd_{j+1}+2+\dd_j+1-\dd_{j+1}}{2|T_{i}|}\\
		& =\frac{\dd_{j+1}+\dd_j+3}{2|T_{i}|} \leq \frac{2\dd_j+2}{2|T_{i}|} = \frac{\dd_{j}+1}{|T_{i}|}\\
		& = cost_{T'}(u_i)- \sum\limits_{k < j}\frac{\dd_{k}}{|T_{k+1}|}.
	\end{align*} 
	Hence, for $u^P_i$ it is not profitable to swap towards her ancestor of level $j$.
\end{proof}

\noindent The next four conditions (Condition 3, Condition 4, Condition 5, and Condition 6) rule out the case in which an agent $v_i$ swaps her edge towards some internal node $u_j$, with $u_j$ being neither a proper ancestor of $v_i$, nor a sibling of $v_i$. All the four conditions capture trees with some specific structure; furthermore, Condition 6 is suitable tailored for balanced trees $T$.

\begin{lemma}[Condition 3]\label{lm:condition_3}
	Let $v^{P_v}_i\in P_v$ and $u^{P_u}_j\in P_u$ be two nodes from two distinct leaf-to-root paths $P_v, P_u$ in $T$ such that $u^{P_u}_{j-1}$, the child of $u^{P_u}_j$, is not in the same branch of $v^{P_v}_i$. If for $v^{P_v}_i$ it is not profitable to swap towards $u^{P_u}_{j-1}$ and $|T^{P_v}_{i}| \geq \frac{\dd^{P_u}_{j-1}-\dd^{P_u}_j}{\dd^{P_u}_j}|T^{P_u}_j|$, then for $v^{P_v}_i$ it is not profitable to swap towards $u^{P_u}_j$.
\end{lemma}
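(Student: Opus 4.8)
The plan is to pit the two relevant relocations of $v^{P_v}_i$'s subtree against each other directly: connecting to $u^{P_u}_{j}$ versus connecting to $u^{P_u}_{j-1}$ (the adjacent node on $P_u$ that is one step closer to the root, for which unprofitability is assumed). Write $S := |T^{P_v}_i|$ for the size of the relocated subtree. First I would use the hypothesis that $u^{P_u}_{j-1}$ lies in a different branch than $v^{P_v}_i$ to conclude that $v^{P_v}_i$ is neither an ancestor nor a descendant of $u^{P_u}_{j-1}$, so that in \emph{both} swaps the entire subtree $T^{P_v}_i$ ends up hanging below $u^{P_u}_{j-1}$ (directly in one case, through $u^{P_u}_j$ in the other). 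Consequently, in both resulting trees the in-degrees of the proper ancestors of $u^{P_u}_{j-1}$ are unchanged and every subtree size strictly above $u^{P_u}_{j-1}$ grows by exactly $S$; hence the portion of $v^{P_v}_i$'s new path-cost accrued on the edges from $u^{P_u}_{j-1}$ up to $r$ is identical in the two swaps and cancels in the comparison.

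Next I would reduce everything to the local contributions on the edges at or below $u^{P_u}_{j-1}$. Connecting to $u^{P_u}_{j-1}$ contributes $\frac{\dd^{P_u}_{j-1}+1}{S}$ (the new edge $(v_i,u_{j-1})$ raising the target's in-degree by one, shared by the $S$ agents of $T^{P_v}_i$), while connecting to $u^{P_u}_{j}$ contributes $\frac{\dd^{P_u}_j+1}{S}+\frac{\dd^{P_u}_{j-1}}{|T^{P_u}_j|+S}$ (the new edge into $u_j$, plus the edge $(u_j,u_{j-1})$ whose endpoint keeps in-degree $\dd^{P_u}_{j-1}$ but is now shared by $S$ additional agents). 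Subtracting the $u_{j-1}$-contribution from the $u_j$-contribution gives
\[
\frac{\dd^{P_u}_j+1}{S}+\frac{\dd^{P_u}_{j-1}}{|T^{P_u}_j|+S}-\frac{\dd^{P_u}_{j-1}+1}{S}=\frac{\dd^{P_u}_j-\dd^{P_u}_{j-1}}{S}+\frac{\dd^{P_u}_{j-1}}{|T^{P_u}_j|+S},
\]
and a one-line clearing of denominators shows this is nonnegative exactly when $S\,\dd^{P_u}_j\geq(\dd^{P_u}_{j-1}-\dd^{P_u}_j)\,|T^{P_u}_j|$, i.e.\ precisely the assumed size bound $|T^{P_v}_i|\geq\frac{\dd^{P_u}_{j-1}-\dd^{P_u}_j}{\dd^{P_u}_j}|T^{P_u}_j|$. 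Note that Corollary~\ref{corollary:degree_seguence_monotonic} guarantees $\dd^{P_u}_{j-1}\geq\dd^{P_u}_j$, so the right-hand side of the bound is well defined and nonnegative and the division is legitimate.

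Combining the two steps, the cost $v^{P_v}_i$ would obtain by swapping to $u^{P_u}_j$ is at least the cost it would obtain by swapping to $u^{P_u}_{j-1}$; since the latter is by assumption at least $v^{P_v}_i$'s current cost, the swap to $u^{P_u}_j$ is not profitable either, which is the claim. I expect the main obstacle to be not the algebra but the bookkeeping in the first step: one must verify that relocating $T^{P_v}_i$ alters only the target's in-degree (by $+1$) and augments the subtree sizes along the shared upper path by exactly $S$ in the same way for both targets, so that the upper contributions genuinely cancel. This is exactly where the ``different branch'' hypothesis enters, and getting it right is what licenses the reduction to the clean local inequality above.
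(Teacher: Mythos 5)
Your proposal is correct and takes essentially the same route as the paper's proof: there, too, the swap to $u^{P_u}_{j}$ is compared against the swap to $u^{P_u}_{j-1}$ after isolating the common contribution (an explicit term $A$) of the edges from $u^{P_u}_{j-1}$ up to the least common ancestor, and the size hypothesis is converted, by exactly your clearing of denominators, into $\frac{\dd^{P_u}_{j-1}}{|T^{P_v}_{i}|} \leq \frac{\dd^{P_u}_{j-1}}{|T^{P_u}_{j}|+|T^{P_v}_{i}|}+\frac{\dd^{P_u}_{j}}{|T^{P_v}_{i}|}$, yielding the chain (current cost) $\leq$ (cost after swapping to $u^{P_u}_{j-1}$) $\leq$ (cost after swapping to $u^{P_u}_{j}$). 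One remark on your bookkeeping step: when $v^{P_v}_i$ is a child of the least common ancestor, that ancestor's in-degree does drop by one, and subtree sizes at or above the least common ancestor do not grow by $S$, so your literal claims there are inaccurate; however, both effects are identical under either swap, so the cancellation you rely on remains valid --- the paper makes this precise through the case distinction ($i \geq k+2$ versus $i = k+1$) in its definition of $A$.
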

\begin{proof}
	Let $k$ be the level of the intersection of two paths $P_v$ and $P_u$. 
	Let
	\begin{equation*}
		A=
		\begin{cases}
			\sum\limits_{\ell=k}^{j-2} \frac{\dd^{P_u}_\ell}{|T^{P_u}_{\ell+1}|+|T^{P_v}_{i}|}	&	\text{if $i \geq k+2$;}\\
			\frac{\dd_{k}-1}{|T_{k+1}|+|T^{P_v}_{i}|}+\sum\limits_{\ell=k+1}^{j-2} \frac{\dd^{P_u}_\ell}{|T^{P_u}_{\ell+1}|+|T^{P_v}_{i}|}	&	\text{otherwise (i.e., $i=k+1$).}
		\end{cases}
	\end{equation*}
	Since for $v^{P_v}_i$ it is not profitable to swap towards $u^{P_u}_{j-1}$, we have that the cost incurred by $v^{P_u}_i$ in $T$ is at most $	A+\frac{\dd_{j-1}+1}{|T^v_{i}|}.$
	
	From $|T^{P_v}_{i}| \geq \frac{\dd^{P_u}_{j-1}-\dd^{P_u}_j}{\dd^{P_u}_j}|T^{P_u}_{j}|$ we derive $\left(\dd^{P_u}_{j-1}-\dd^{P_u}_j\right)\left(|T^{P_u}_{j}|+|T^{P_v}_{i}|\right)\leq \dd^{P_u}_{j-1}\cdot|T^{P_v}_{i}|$, i.e.,
	\[
	\frac{\dd^{P_u}_{j-1}}{|T_{i}^{P_v}|} \leq \frac{\dd^{P_u}_{j-1}}{|T_{j}^{P_u}|+|T^{P_v}_{i}|}
	+\frac{\dd^{P_u}_j}{|T_{i}^{P_v}|}.
	\]
	As a consequence, the cost incurred by $v_i$ in $T$ is at most
	\[
	A+\frac{\dd^{P_u}_{j-1}+1}{|T_{i}^{P_v}|} = A + \frac{\dd^{P_u}_{j-1}}{|T_{i}^{P_v}|}+\frac{1}{|T_{i}^{P_v}|} \leq  A + \frac{\dd^{P_u}_{j-1}}{|T_{j}^{P_u}|+|T_{i}^{P_v}|}+\frac{\dd^{P_u}_j}{|T_{i}^{P_v}|} +\frac{1}{|T_{i}^{P_v}|} = A + \frac{\dd^{P_u}_{j-1}}{|T_{j}^{P_u}|+|T_{i}^{P_v}|}+\frac{\dd^{P_u}_j+1}{|T_{i}^{P_v}|}.
	\]
	Hence, for $v^{P_v}_i$ it is not profitable to swap towards $u^{P_u}_j$.
\end{proof}

\begin{lemma}[Condition 4]\label{lm:condition_4}
	Let $v^{P_v}_i\in P_v$ and $u^{P_u}_j\in P_u$ be two nodes from two distinct leaf-to-root paths $P_v, P_u$ in $T$ such that $u^{P_u}_{j-1}$, the child of $u^{P_u}_j$, is not in the same branch of $v^{P_v}_i$. If for $v^{P_v}_i$ it is not profitable to swap towards $u^{P_u}_{j}$ and $|T_{i}^{P_v}| \leq \frac{\dd_{j-1}^{P_u}-\dd^{P_u}_j}{\dd^{P_u}_j}|T_{j}^{P_u}|$, then for $v^{P_v}_i$ it is not profitable to swap towards $u^{P_u}_{j-1}$.
\end{lemma}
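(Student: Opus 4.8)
The plan is to prove Condition~4 as the exact mirror image of Condition~3 (Lemma~\ref{lm:condition_3}), since the two statements differ only in the direction of the size inequality and in which of the two candidate swaps plays the role of hypothesis and which the role of conclusion. I would reuse the setup of Lemma~\ref{lm:condition_3} verbatim: let $k$ be the level at which the paths $P_v$ and $P_u$ first meet, and define the quantity $A$ exactly as there (with the two cases $i \geq k+2$ and $i = k+1$, the latter accounting for the fact that detaching $v^{P_v}_i$ from its parent at level $k$ lowers that node's in-degree by one). With this, the cost that $v^{P_v}_i$ would incur after swapping towards $u^{P_u}_{j-1}$ equals $A + \frac{\dd^{P_u}_{j-1}+1}{|T^{P_v}_i|}$ (plus the suffix from level $k$ to the root, which is common to all configurations involved and therefore cancels), while the cost after swapping towards $u^{P_u}_j$ equals $A + \frac{\dd^{P_u}_{j-1}}{|T^{P_u}_j|+|T^{P_v}_i|} + \frac{\dd^{P_u}_j+1}{|T^{P_v}_i|}$.

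Next I would translate the reversed size hypothesis into a reversed core inequality. From $|T^{P_v}_i| \leq \frac{\dd^{P_u}_{j-1}-\dd^{P_u}_j}{\dd^{P_u}_j}|T^{P_u}_j|$, cross-multiplying exactly as in Lemma~\ref{lm:condition_3} but with the inequality flipped yields $(\dd^{P_u}_{j-1}-\dd^{P_u}_j)(|T^{P_u}_j|+|T^{P_v}_i|) \geq \dd^{P_u}_{j-1}\,|T^{P_v}_i|$, which after dividing by $|T^{P_v}_i|\,(|T^{P_u}_j|+|T^{P_v}_i|)$ is equivalent to
\[
\frac{\dd^{P_u}_{j-1}}{|T^{P_v}_i|} \;\geq\; \frac{\dd^{P_u}_{j-1}}{|T^{P_u}_j|+|T^{P_v}_i|} + \frac{\dd^{P_u}_j}{|T^{P_v}_i|}.
\]

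Finally I would combine the two ingredients. The reversed core inequality shows that the swap-towards-$u^{P_u}_j$ cost is at most the swap-towards-$u^{P_u}_{j-1}$ cost, since it reduces precisely to $\frac{\dd^{P_u}_{j-1}}{|T^{P_u}_j|+|T^{P_v}_i|}+\frac{\dd^{P_u}_j}{|T^{P_v}_i|}\leq \frac{\dd^{P_u}_{j-1}}{|T^{P_v}_i|}$ after splitting $\frac{\dd^{P_u}_j+1}{|T^{P_v}_i|}=\frac{\dd^{P_u}_j}{|T^{P_v}_i|}+\frac{1}{|T^{P_v}_i|}$. By hypothesis, swapping towards $u^{P_u}_j$ is not profitable, i.e.\ $\cost_T(v^{P_v}_i)$ is at most the swap-towards-$u^{P_u}_j$ cost; chaining the two bounds gives $\cost_T(v^{P_v}_i) \leq A + \frac{\dd^{P_u}_{j-1}+1}{|T^{P_v}_i|}$, which is exactly the cost of swapping towards $u^{P_u}_{j-1}$, so that swap is not profitable either.

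I expect no genuine obstacle: the argument is a direct dualization of Condition~3 and introduces no new idea. The only points requiring care are bookkeeping — keeping the definition of $A$ and the cancellation of the common root-ward suffix consistent across the two candidate swaps — and making sure the single flipped inequality is applied in the correct direction, so that it is the swap towards $u^{P_u}_j$ that is dominated by the swap towards $u^{P_u}_{j-1}$ rather than the reverse.
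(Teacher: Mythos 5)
Your proposal is correct and takes essentially the same route as the paper's own proof: the same definition of $A$ (including the $i=k+1$ case), the same identification of the two swap costs, the same flipped cross-multiplication yielding $\frac{\dd^{P_u}_{j-1}}{|T^{P_v}_i|} \geq \frac{\dd^{P_u}_{j-1}}{|T^{P_u}_j|+|T^{P_v}_i|}+\frac{\dd^{P_u}_j}{|T^{P_v}_i|}$, and the same chaining of the hypothesis with this inequality to bound the cost in $T$ by the cost of the swap towards $u^{P_u}_{j-1}$. Your explicit remark that the root-ward suffix from level $k$ is common to all configurations and cancels is a minor clarification the paper leaves implicit, but the argument is identical.
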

\begin{proof}
	Let $v^{P_v}_k=u^{P_u}_k$ be the least common ancestor of $v^{P_v}_i$ and $u^{P_u}_j$. Let
	\begin{equation*}
		A=
		\begin{cases}
			\sum\limits_{\ell=k}^{j-2} \frac{\dd^{P_u}_\ell}{|T_{\ell+1}^{P_u}|+|T_{i}^{P_v}|}	&	\text{if $i \geq k+2$;}\\
			\frac{\dd_{k}-1}{|T_{k+1}^{P_u}|+|T_{i}^{P_v}|}+\sum\limits_{\ell=k+1}^{j-2} \frac{\dd_\ell^{P_u}}{|T_{\ell+1}^{P_u}|+|T_{i}^{P_v}|}	&	\text{otherwise (i.e., $i=k+1$).}
		\end{cases}
	\end{equation*}
	Since for $v^{P_v}_i$ it is not profitable to swap towards $u^{P_u}_{j}$, we have that the cost incurred by $v^{P_v}_i$ in $T$ is at most 
	\[
	A + \frac{\dd_{j-1}}{|T_{j}^{P_u}|+|T_{i}^{P_v}|}+\frac{\dd^{P_u}_j+1}{|T_{i}^{P_v}|}.
\]
	From $|T_{i}^{P_v}| \leq \frac{\dd_{j-1}-\dd_j}{\dd_j}|T_{j}^{P_u}|$ we derive $\left(\dd^{P_u}_{j-1}-\dd_j^{P_u}\right)\left(|T_{j}^{P_u}|+|T_{i}^{P_v}|\right)\geq \dd^{P_u}_{j-1}|T_{i}^{P_u}|$, i.e.,
		\[
	\frac{\dd_{j-1}^{P_u}}{|T_{i}^{P_v}|} \geq \frac{\dd^{P_u}_{j-1}}{|T_{j}^{P_u}|+|T^{P_v}_{i}|}+\frac{\dd_j}{|T_{i}^{P_v}|}.
	\]
	As a consequence, the cost incurred by $v^{P_v}_i$ in $T$ is at most
		\[
	A + \frac{\dd^{P_u}_{j-1}}{|T_{j}^{P_u}|+|T_{i}^{P_v}|}+\frac{\dd^{P_u}_j+1}{|T_{i}^{P_v}|} = 
	A + \frac{\dd^{P_u}_{j-1}}{|T^{P_u}_{j}|+|T^{P_v}_{i}|}+\frac{\dd^{P_u}_j}{|T_{i}^{P_v}|} +\frac{1}{|T_{i}^{P_v}|} \leq 
	A + \frac{\dd^{P_u}_{j-1}}{|T_{i}^{P_v}|}+\frac{1}{|T_{i}^{P_v}|}
	\leq A+\frac{\dd^{P_u}_{j-1}+1}{|T^{P_v}_{i}|}.
		\]
	Hence, for $v^{P_v}_i$ it is not profitable to swap towards $u^{P_u}_{j-1}$.
\end{proof}

\begin{lemma}[Condition 5]\label{lm:condition_5}
	Let $T$ be a balanced tree of height $h$ such that $\dd_j \leq 2\dd_{j+1}+1$ for every $j<h$. Let $v_i$ and $u_{i}$ be two sibling nodes. Then, it is not profitable for $v_i$ to swap towards $u_i$.
\end{lemma}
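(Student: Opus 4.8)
The plan is to directly compute the cost difference incurred by $v_i$ when she swaps her edge from their common parent $w$ (at level $i-1$) to her sibling $u_i$. Since $v_i$ and $u_i$ are siblings, they share the entire path from $w$ to the root, so the only part of $v_i$'s cost that changes is the first edge of her path. Before the swap, $v_i$ pays $\frac{\dd_i}{|T_i|}$ for the edge $(v_i, w)$ (recall that in a balanced tree all nodes at level $i$ have the same in-degree $\dd_i$ and the same subtree size $|T_i|$). After the swap, $v_i$ connects to $u_i$: she now pays $\frac{\dd_i + 1}{|T_i|}$ for the edge $(v_i, u_i)$, since $u_i$'s in-degree rises by one, and her cost-share for the edge $(u_i, w)$ is now $\frac{\dd_i}{|T_i| + |T_i|} = \frac{\dd_i}{2|T_i|}$, because the subtree below $u_i$ grows from $|T_i|$ to $|T_i| + |T_i|$ nodes.

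The key step is therefore to verify that the swap is not profitable, i.e.,
\begin{align*}
\frac{\dd_i + 1}{|T_i|} + \frac{\dd_i}{2|T_i|} - \frac{\dd_i}{|T_i|} &= \frac{1}{|T_i|}\left(\dd_i + 1 + \frac{\dd_i}{2} - \dd_i\right) = \frac{1}{|T_i|}\left(1 + \frac{\dd_i}{2}\right) \geq 0.
\end{align*}
Since $\dd_i \geq 0$ and $|T_i| > 0$, this quantity is strictly positive, so the swap always strictly increases $v_i$'s cost. I would be careful here to use the balancedness of $T$ twice: once to assert that $u_i$ has the same subtree size $|T_i|$ as $v_i$ (so the denominator of the shared edge doubles to exactly $2|T_i|$), and once to assert that $u_i$'s in-degree is exactly $\dd_i$ before the swap. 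The remaining edges on the path from $w$ to the root contribute identically before and after the swap, since neither their in-degrees nor the subtree sizes hanging below them change, so they cancel and need not appear in the computation.

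The main subtlety — and the only place where the hypothesis $\dd_j \leq 2\dd_{j+1}+1$ might be expected to enter — is to confirm that swapping toward a sibling genuinely leaves all ancestor edges untouched. Because $v_i$ moves from being a child of $w$ to being a child of $u_i$, the subtree $T(u_i)$ grows and the subtree $T(w)$ is unchanged in size (it still contains exactly the same set of nodes), so the in-degree of $w$ stays $\dd_{i-1}$ and every subtree size $|T_\ell|$ for $\ell \leq i-1$ is preserved. I expect this bookkeeping, rather than the arithmetic, to be the part most prone to error; once it is settled the inequality is immediate and does not even require the balancedness bound on consecutive degrees, which presumably is carried along for use in the companion conditions.
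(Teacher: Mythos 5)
Your bookkeeping of the swap is wrong in two places, and the two errors conspire to hide exactly the point of the lemma. First, the edge $(v_i,w)$ to the common parent $w$ at level $i-1$ costs $\indeg(w)=\dd_{i-1}$ (shared over $|T_i|$ nodes), not $\dd_i$: in the paper's notation $\dd_j$ is the in-degree of nodes at distance $j$ from the root, so the parent's in-degree is $\dd_{i-1}$, and in general $\dd_{i-1}\neq\dd_i$. Second, when $v_i$ redirects her edge away from $w$, the in-degree of $w$ drops by one (you explicitly assert it ``stays $\dd_{i-1}$'', which is false and moreover inconsistent with your own formula, where you wrote $\dd_i$ for it): after the swap the edge $(u_i,w)$ costs $\dd_{i-1}-1$, shared over the enlarged subtree of size $2|T_i|$. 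The correct cost difference is therefore
\[
\left(\frac{\dd_i+1}{|T_i|}+\frac{\dd_{i-1}-1}{2|T_i|}\right)-\frac{\dd_{i-1}}{|T_i|}
=\frac{2\dd_i+1-\dd_{i-1}}{2|T_i|},
\]
not $\frac{1}{|T_i|}\left(1+\frac{\dd_i}{2}\right)$.

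This matters because the corrected expression is non-negative if and only if $\dd_{i-1}\leq 2\dd_i+1$ --- precisely the hypothesis of the lemma, which you declared to be unnecessary (``does not even require the balancedness bound on consecutive degrees''). That conclusion is false: if $\dd_{i-1}>2\dd_i+1$ the swap is strictly profitable and the statement fails, and in the extremal balanced trees with $\dd_{i-1}=2\dd_i+1$ (which the paper relies on for its PoA lower bound) the swap is exactly cost-neutral, contradicting your claim that it ``always strictly increases'' the cost. The paper proves the lemma as an immediate consequence of its Lemma~\ref{thm:deg_up}, whose proof is exactly this one-line computation carried out with the correct in-degrees; your overall strategy (only the first two edges change, ancestor edges cancel) is the right one, but the arithmetic must register both the level shift and the decrement of the abandoned parent's in-degree.
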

\begin{proof} Follows directly from Lemma~\ref{thm:deg_up}.
\end{proof}

\noindent The next corollary, which holds only for some balanced trees, is implied by Lemma~\ref{lm:condition_3} and Lemma~\ref{lm:condition_4}.

\begin{corollary}\label{cor:balanced_trees_lateral_swaps_towards_internal_vertices}
	Let $T$ be a balanced tree of height $h$ such that $\dd_{i+1} < \dd_i \leq 2\dd_{i+1}$ for every $i \leq h-3$. Let $v_i$ and $u_{i-1}$ be such that $u_{i-1}$ is not an ancestor of $v_i$ in $T$. Let $k$ be the distance from the root to the least common ancestor of $v_i$ and $u_{i-1}$. If it is not profitable for $v_i$ to swap towards $u_{i-1}$, then it is not profitable for $v_i$ to swap towards $u_j$  for every $j > k$ with $\dd_j > 0$.
\end{corollary}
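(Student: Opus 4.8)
The plan is to reduce the corollary to an inductive application of Condition~3 and Condition~4, walking from the node $u_{i-1}$ up toward the least common ancestor and then back down any target branch. Recall that Condition~3 (Lemma~\ref{lm:condition_3}) lets us convert ``not profitable to swap towards $u^{P_u}_{j-1}$'' into ``not profitable to swap towards $u^{P_u}_j$'' provided the size inequality $|T^{P_v}_{i}| \geq \frac{\dd^{P_u}_{j-1}-\dd^{P_u}_j}{\dd^{P_u}_j}|T^{P_u}_j|$ holds, and Condition~4 (Lemma~\ref{lm:condition_4}) gives the reverse implication under the opposite size inequality. Because $T$ is balanced, the in-degrees $\dd_j$ depend only on the level $j$, so the size inequalities simplify to comparisons purely in terms of levels. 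First I would fix the hypothesis that it is not profitable for $v_i$ to swap towards $u_{i-1}$, and observe that $u_{i-1}$ is the natural base case ($j=i-1$) of an iteration in which Condition~3 is applied repeatedly to move the target up one level at a time, each application converting non-profitability towards level $j-1$ into non-profitability towards level $j$.

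The key step is to verify that the size hypothesis of Condition~3 is met at every level $j$ encountered along the way up to the least common ancestor $u_k$. Here the structural assumption $\dd_{i+1} < \dd_i \leq 2\dd_{i+1}$ for $i \leq h-3$ becomes essential: it bounds $\frac{\dd_{j-1}-\dd_j}{\dd_j} \leq 1$, so the required inequality reduces to $|T^{P_v}_i| \geq |T^{P_u}_j|$ (up to the precise constant), which holds because $v_i$ sits deep in its branch while $u_j$, being an ancestor encountered on the way toward the root, roots a subtree whose size must be controlled by balancedness. I would make this size comparison precise by expressing $|T_j|$ for a balanced tree as a product of the level-in-degrees $\prod_{\ell < j}\dd_\ell$ (plus lower-order terms), and then checking that the ratio condition translates into the stated degree constraint. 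Once the target has been lifted to the level of the least common ancestor, I would descend into whichever branch contains the desired $u_j$, applying Condition~3 again on the way down to any $u_j$ with $j > k$ and $\dd_j > 0$, using that all such nodes at a fixed level are interchangeable in a balanced tree.

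The main obstacle I anticipate is the bookkeeping at the least common ancestor, where the two leaf-to-root paths $P_v$ and $P_u$ branch: Condition~3 and Condition~4 are stated with an explicit case distinction (the term $A$ splits according to whether $i \geq k+2$ or $i = k+1$), reflecting that the edge incident to the common ancestor carries a shifted load once $v_i$'s subtree is grafted in. Handling the transition from ``climbing above $u_{i-1}$ toward $u_k$'' to ``the levels strictly above $k$'' requires checking that the non-profitability of the swap towards $u_{k}$ (or its relevant child) is correctly inherited, rather than assumed. I would isolate this transition as a separate observation and dispatch the remaining degree and size inequalities by the balancedness of $T$ together with the hypothesis $\dd_{i+1} < \dd_i \leq 2\dd_{i+1}$, which guarantees both that the degree ratios stay bounded by $1$ and that every level above the base case still carries a strictly positive in-degree, so that Condition~3 remains applicable throughout.
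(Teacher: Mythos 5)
Your proposal has a genuine directional error that breaks the half of the argument dealing with targets lying strictly between the least common ancestor and level $i-1$. You plan to climb from $u_{i-1}$ towards the LCA using Condition~3, justifying its size hypothesis $|T_{i}| \geq \frac{\dd_{j-1}-\dd_j}{\dd_j}|T_j|$ by the claim that $|T_i| \geq |T_j|$ holds for any $u_j$ ``encountered on the way toward the root.'' This is backwards: in a balanced tree a node at level $j<i$ roots a subtree that is strictly --- indeed geometrically --- larger than $T_i$, since $|T_j| > \dd_j |T_{j+1}| > \dd_j\dd_{j+1}|T_{j+2}| > \cdots$, so Condition~3's size hypothesis fails at every such level. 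There is also a logical mismatch independent of sizes: Condition~3's implication runs from the shallower target $u_{j-1}$ to the deeper target $u_j$ (level index increasing, away from the root), so no amount of iterating it can ever produce non-profitability at levels \emph{shallower} than your base case $i-1$. For that direction you need Condition~4 (Lemma~\ref{lm:condition_4}), which you cite in passing but never actually deploy.

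The paper's proof splits into exactly the two complementary cases that the two lemmas are built for. For deep targets ($j \geq i$) it iterates Condition~3 starting from the hypothesis at level $i-1$; there the size condition does reduce, via $\dd_{j-1} \leq 2\dd_j$, to $|T_i| \geq |T_j|$, which holds precisely because $j \geq i$. For shallow targets ($k+1 < j \leq i-1$) it iterates Condition~4, whose hypothesis is the \emph{opposite} inequality $|T_i| \leq \frac{\dd_{j-1}-\dd_j}{\dd_j}|T_j|$; this holds for the opposite reason you invoke, namely that shallow subtrees are huge: $|T_i| < \frac{1}{\dd_{i-1}}|T_{i-1}| \leq \frac{\dd_{j-1}-\dd_j}{\dd_j}|T_j|$, using $\dd_{j-1}-\dd_j \geq 1$ (the degree sequence is strictly decreasing away from the root) together with the geometric growth of subtree sizes. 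Finally, your ``climb to the LCA, then descend a different branch'' detour is unnecessary and its base case is never established: Conditions~3 and~4 are stated for arbitrary pairs of leaf-to-root paths, so propagating up and down along the single path through $u_{i-1}$ already covers every $u_j$ with $j>k$.
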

\begin{proof}
	We divide the proof into two complementary cases, according to whether $j\geq i$ or not.
	In the former case, for every $j \geq i$, we have that
	\[
	|T_{i}| = \frac{2\dd_j-\dd_j}{\dd_j} |T_{i}| \geq  \frac{2\dd_j-\dd_j}{\dd_j} |T_{j}| \geq  \frac{\dd_{j-1}-\dd_j}{\dd_j} |T_{j}|.
	\]
	Therefore, thanks to Lemma~\ref{lm:condition_3}, it is not profitable for $v_i$ to swap towards $u_j$ for every $j \geq i$.
	In the latter case, for every $k+1 < j \leq i-1$, we have that
	\[
	|T_{i}| < \frac{1}{\dd_{i-1}} |T_{i-1}| \leq \frac{\dd_{j-1}-\dd_j}{\dd_j}|T_{j}|.
	\]
	Therefore, thanks to Lemma~\ref{lm:condition_4}, it is not profitable for $v_i$ to swap towards $u_{j-1}$.
\end{proof}

\noindent The next corollary, similar to Corollary \ref{cor:balanced_trees_lateral_swaps_towards_internal_vertices} allows us to capture the case of extremal balanced trees that are stable.
\begin{corollary}\label{cor:extremal_balanced_trees_lateral_swaps_towards_internal_vertices}
	Let $T$ be a balanced tree of height $h$ such that $\dd_{i+1} < \dd_i \leq 2\dd_{i+1}+1$ for every $i \leq h-3$. Let $v_i$ and $u_{i-1}$ be such that $u_{i-1}$ is not an ancestor of $v_i$ in $T$. Let $k$ be the distance from the root to the least common ancestor of $v_i$ and $u_{i-1}$. 
	If it is not profitable for $v_i$ to swap towards $u_{i-1}$ and it is not profitable for $v_i$ to swap towards $u_i$, then it is not profitable for $v_i$ to swap towards $u_j$ for every $j > k$ with $\dd_j > 0$.
\end{corollary}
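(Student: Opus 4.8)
The plan is to follow the two-case structure in the proof of Corollary~\ref{cor:balanced_trees_lateral_swaps_towards_internal_vertices} almost verbatim, splitting the target levels $j>k$ according to whether $j\geq i$ or $k+1\leq j\leq i-1$, and to isolate the single place where the relaxed bound $\dd_i\leq 2\dd_{i+1}+1$ forces a change. First I would record the two facts about sizes in a balanced tree that I will need: the identity $|T_\ell|=1+\dd_\ell|T_{\ell+1}|$ at every level $\ell$, and the monotonicity $|T_a|\geq|T_b|$ whenever $a\leq b$.

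For the case $j\geq i$, the level $j=i$ is exactly the second hypothesis of the statement, so nothing has to be shown there; this is the role of the extra assumption. For $j\geq i+1$ I would verify the size premise of Lemma~\ref{lm:condition_3}: since $i\leq j-1$ we have $|T_i|\geq|T_{j-1}|=1+\dd_{j-1}|T_j|>\dd_{j-1}|T_j|\geq\frac{\dd_{j-1}-\dd_j}{\dd_j}|T_j|$, where the final step only uses $\dd_j\geq 1$. Taking $u_i$ as the base case (supplied by the hypothesis) and applying Lemma~\ref{lm:condition_3} successively with $j=i+1,i+2,\dots$ then gives non-profitability of the swap towards $u_j$ for all $j\geq i+1$ by induction.

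For the case $k+1\leq j\leq i-1$ I would chain Lemma~\ref{lm:condition_4} downward, now using the first hypothesis that the swap towards $u_{i-1}$ is not profitable as the base case. The size premise $|T_i|\leq\frac{\dd_{j-1}-\dd_j}{\dd_j}|T_j|$ follows, exactly as in Corollary~\ref{cor:balanced_trees_lateral_swaps_towards_internal_vertices}, from $|T_i|<\frac{1}{\dd_{i-1}}|T_{i-1}|\leq\frac{\dd_{j-1}-\dd_j}{\dd_j}|T_j|$, whose binding instance $j=i-1$ reduces to the strict decrease $\dd_{i-2}-\dd_{i-1}\geq 1$. Applying Lemma~\ref{lm:condition_4} with $j=i-1,i-2,\dots,k+2$ yields non-profitability towards $u_{i-2},\dots,u_{k+1}$, and together the two cases cover every $j>k$ with $\dd_j>0$.

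The crux, and the reason the statement needs an extra hypothesis relative to Corollary~\ref{cor:balanced_trees_lateral_swaps_towards_internal_vertices}, is the level $j=i$. There the size premise of Lemma~\ref{lm:condition_3} would read $1\geq\frac{\dd_{i-1}-\dd_i}{\dd_i}$, that is, $\dd_{i-1}\leq 2\dd_i$, which the relaxed bound $\dd_{i-1}\leq 2\dd_i+1$ may violate: in the extremal case $\dd_{i-1}=2\dd_i+1$ one gets $\frac{\dd_{i-1}-\dd_i}{\dd_i}=1+\frac{1}{\dd_i}>1$. Hence the swap towards $u_i$ can no longer be inferred from the swap towards $u_{i-1}$, and assuming it directly is precisely what restores the base of the upward induction; once $j\geq i+1$, the sharper estimate $|T_i|\geq|T_{j-1}|>\dd_{j-1}|T_j|$ absorbs the extra $+1$. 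I expect the only real care to lie in confirming that this sharper size estimate, and the downward estimate feeding Lemma~\ref{lm:condition_4}, are genuinely insensitive to the $+1$, which the inequalities above confirm.
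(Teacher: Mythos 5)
Your proposal is correct and takes essentially the same route as the paper: the identical case split, chaining Lemma~\ref{lm:condition_3} upward for $j \geq i+1$ from the assumed non-profitability at level $i$, and chaining Lemma~\ref{lm:condition_4} downward for $k+1 \leq j \leq i-1$ from the assumed non-profitability at level $i-1$, with the same verification $|T_i| \geq |T_{j-1}| > \dd_{j-1}|T_j| \geq \frac{\dd_{j-1}-\dd_j}{\dd_j}|T_j|$ of the upward size premise, and your explanation of why the extra hypothesis at level $i$ is forced by the extremal case $\dd_{i-1}=2\dd_i+1$ is exactly the role it plays in the paper. The only cosmetic deviation is that for the downward chain you reuse the bound $|T_i| < \frac{1}{\dd_{i-1}}|T_{i-1}| \leq \frac{\dd_{j-1}-\dd_j}{\dd_j}|T_j|$ from Corollary~\ref{cor:balanced_trees_lateral_swaps_towards_internal_vertices}, whereas the paper's proof writes $|T_i| \leq (\dd_{j-1}-\dd_j)|T_i| < \frac{\dd_{j-1}-\dd_j}{\dd_j}|T_j|$; both rest on the strict degree decrease and subtree-size monotonicity, so this is not a substantive difference.
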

\begin{proof}
In case $j \geq i+1$, we have that
	\[
	|T_{i}| > \dd_{j-1} |T_{j}| > \frac{\dd_{j-1}-\dd_{j}}{\dd_{j}} |T_{j}|.
	\]
	Therefore, thanks to Lemma~\ref{lm:condition_3}, it is not profitable for $v_i$ to swap towards $u_j$ for every $j \geq i+1$.
	From the other hand, for every $k+1 < j \leq i-1$, we have that
	\[
	|T_{i}|\leq (\dd_{j-1}-\dd_j)|T_{i}|< \frac{\dd_{j-1}-\dd_j}{\dd_j}|T_{j}|.
	\]
	Therefore, thanks to Lemma~\ref{lm:condition_4}, it is not profitable for $v_i$ to swap towards $u_{j-1}$.
\end{proof}

\begin{lemma}[Condition 6]\label{lm:balanced_trees_lateral_swap_towards_non_sibling_vertex_same_level_minus_one}
	Let $T$ be a balanced tree of height $h$ such that $\dd_{h-1}=1$, $\dd_{h-2}=2$, $\dd_{h-3}=4$, and $\dd_{j+1} \leq \dd_j \leq 2\dd_{j+1}+1$ for every $j \leq h-4$. Let $v_i$ and $u_{i-1}$ be such that $u_{i-1}$ is not an ancestor of $v_i$ in $T$. Then, it is not profitable for $v_i$ to swap towards $u_{i-1}$.
\end{lemma}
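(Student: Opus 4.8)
The plan is to compare the cost that $v_i$ incurs along its current path with the cost it would incur after swapping its edge towards $u_{i-1}$, exploiting the balanced structure so that all nodes at the same level share a common in-degree. Since $T$ is balanced, I would write $\dd_\ell$ for the in-degree of every level-$\ell$ node and $|T_\ell|$ for the common subtree size at level $\ell$ (with $u_\ell$ and $v_\ell$ interchangeable in these quantities), and let $k$ denote the level of the least common ancestor of $v_i$ and $u_{i-1}$. The swap replaces the portion of $v_i$'s path from $v_i$ up to level $k$ by a new portion routed through $u_{i-1}, u_{i-2}, \dots$ up to the same common ancestor, where the first hop now lands on $u_{i-1}$ whose in-degree increases by one. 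The common part of both paths from level $k$ up to the root cancels, so it suffices to show the cost of the new lower portion is at least the cost of the old lower portion.

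First I would record the cost of $v_i$'s current lower portion, namely $\sum_{\ell} \dd_\ell / |T_\ell|$ over the relevant levels below $k$ along $P_v$, and the cost after the swap, where the new first edge towards $u_{i-1}$ costs $(\dd_{i-1}+1)/|T_i|$ and each subsequent edge $(u_\ell,u_{\ell-1})$ now carries an extra load of $|T_i|$ in its denominator because $v_i$'s subtree has migrated into that branch. The key inequality to establish is that the savings from switching to the (smaller, since $\dd_{i-1} \geq \dd_i$ is not assumed but the balanced degrees are decreasing towards the leaves) target do not outweigh the penalty of the $+1$ in-degree increment plus the increased denominators along the remaining shared ancestors.

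The structural hypotheses $\dd_{h-1}=1$, $\dd_{h-2}=2$, $\dd_{h-3}=4$ and the recurrence $\dd_{j+1}\leq \dd_j \leq 2\dd_{j+1}+1$ should be used to control the ratio $|T_i|/|T_{i-1}|$ and, more importantly, to bound $(\dd_{i-1}-\dd_i)/\dd_i$, which is exactly the quantity governing the size comparisons in Lemma~\ref{lm:condition_3} and Lemma~\ref{lm:condition_4}. My approach would be to set up the telescoping difference $\cost_{T}(v_i)-\cost_{T'}(v_i)$ over the non-cancelling levels and show it is nonpositive; the fixed small degrees near the leaves ($1,2,4$) are precisely what pins down the base of the induction on the level $i$ of $v_i$, handling the shallow cases where the generic ratio bounds are too weak. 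I expect the main obstacle to be the careful bookkeeping of denominators along the new branch: every ancestor of $u_{i-1}$ below the least common ancestor sees its load grow by $|T_i|$, and I must verify that the aggregate of these increases, together with the $+1$ degree penalty at $u_{i-1}$, dominates whatever per-edge savings arise from the target levels being shifted. I would organize this by bounding each new term $\dd_\ell/(|T_\ell|+|T_i|)$ from below against the corresponding old term and summing, using the balancedness to keep the subtree sizes explicit, and the degree recurrence to ensure the telescoped sum stays on the favourable side of the inequality.
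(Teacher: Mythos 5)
Your setup is the same as the paper's: after cancelling the common part of the two paths above the least common ancestor at level $k$, the swap costs $v_i$ an extra $1/|T_{i}|$ on the first edge and saves $\dd_\ell\bigl(\frac{1}{|T_{\ell+1}|}-\frac{1}{|T_{\ell+1}|+|T_{i}|}\bigr)$ on each edge of the new branch, so non-profitability is exactly the inequality $\sum_{\ell=k}^{i-2}\frac{\dd_\ell |T_{i}|^2}{|T_{\ell+1}|(|T_{\ell+1}|+|T_{i}|)}\leq 1$. (Note in passing that your bookkeeping prose is partly inverted: the grown denominators are the \emph{savings} for $v_i$, not part of the penalty; the only penalty is the $+1$ at $u_{i-1}$, and no levels are ``shifted'', since $u_{i-1}$ sits at the same level as $v_i$'s old parent, so the two paths are level-by-level identical in a balanced tree.)

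The genuine gap is in how you propose to finish. ``Bounding each new term from below against the corresponding old term and summing'' cannot close the argument by itself: the number of levels between $k$ and $i$ grows with the height of the tree, so any uniform per-term bound yields an estimate that grows with $h$, while the target is the absolute constant $1$. What the paper supplies, and what is absent from your plan, is a geometric-decay mechanism: one proves by downward induction on $\ell$ that $\frac{\dd_\ell |T_{i}|^2}{|T_{\ell+1}|(|T_{\ell+1}|+|T_{i}|)}\leq 2^{-(i-1-\ell)}$, i.e., each level's saving is at most half of the saving one level below, so the whole sum is dominated by a geometric series of total at most $1$. The inductive step reduces to $2\dd_{j-1}|T_{j+1}|(|T_{j+1}|+|T_{i}|)\leq \dd_j |T_{j}|(|T_{j}|+|T_{i}|)$, which follows from $\dd_{j-1}\leq 2\dd_j+1$, $\dd_j\geq 2$, $|T_{j}|>\dd_j|T_{j+1}|$ and a factor-$5$ size comparison; the base case $\ell=i-2$ is a case analysis on $i=h$, $i=h-1$, and $i\leq h-2$ that uses the explicit degrees $1,2,4$ near the leaves. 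Your suggested induction runs instead ``on the level $i$ of $v_i$'', but the claim for a node at level $i$ does not reduce to the claim for nodes at other levels; the induction that works runs along the path from $v_i$ up to the common ancestor, not over the position of the swapping node. Without identifying this halving (or some other summable decay of the per-level savings), your telescoped sum has no reason to stay below $1$, so the proof as planned does not go through.
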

\begin{proof}
	Let $k$ be the distance from the root with respect to the least common ancestor of $v_i$ and $u_{i-1}$. We have to prove that
	\[
	\frac{\dd_{i-1}}{|T_{i}|}+\sum_{\ell=k}^{i-2}\frac{\dd_\ell}{|T_{\ell+1}|} \leq \frac{\dd_{i-1}+1}{|T_{i}|}+\sum_{\ell=k}^{i-2}\frac{\dd_\ell}{|T_{\ell+1}|+|T_{i}|}, \text{ i.e., } \sum_{\ell=k}^{i-2}\frac{\dd_\ell |T_{i}|^2}{|T_{\ell+1}|(|T_{\ell+1}|+|T_{i}|)} \leq 1.
	\]
	We prove the last inequality by showing that $\frac{\dd_\ell |T_{i}|^2}{|T_{\ell+1}|(|T_{\ell+1}|+|T_{i}|)} \leq \frac{1}{2^{i-1-\ell}}$ for every $\ell \leq i-2$.
	The proof is by induction on $\ell$.                                                                                                                                                                                                                 
	We prove the base case $\ell=i-2$ first. 
	The proof is by cases. 
	
	When $i=h$, we have that $\dd_{i-1}=1$, $|T_{i}|=1$, $\dd_{i-2}=2$, and $|T_{i-1}|=2$; as a consequence $\frac{\dd_{i-2} |T_{i}|^2}{|T_{i-1}|(|T_{i-1}|+|T_{i}|)} =\frac{2}{2(2+1)} < \frac{1}{2}$ and the claim follows. 
	
	When $i=h-1$, we have that $\dd_{i-1}=2$, $|T_{i}|=2$, $\dd_{i-2}=4$, and $|T_{i-1}|=5$; therefore, $\frac{\dd_{i-2} |T_{i}|^2}{|T_{i-1}|(|T_{i-1}|+|T_{i}|)}=\frac{4\cdot 2^2}{5(5+2)} < \frac{1}{2}$, and the claim follows. 
	
	When $i \leq h-2$, we have that $\dd_{i-1}\geq 4$; as a consequence, using also the facts that $|T_{i-1}|>\dd_{i-1}|T_{i}|$ and $\dd_{i-2}\leq 2\dd_{i-1}+1$, we have that
	$\frac{\dd_{i-2} |T_{i}|^2}{|T_{i-1}|(|T_{i-1}|+|T_{i}|)} < \frac{(2\dd_{i-1}+1) |T_{i}|^2}{\dd_{i-1}|T_{i}|(\dd_{i-1}+1)|T_{i}|} < \frac{2(\dd_{i-1}+1)}{4(\dd_{i-1}+1)} \leq \frac{1}{2}$, and the claim follows.
	
	To prove the inductive case, it is enough to show that $\frac{\dd_{j-1}}{|T_{j}|(|T_{j}|+|T_{i}|)} \leq \frac{\dd_{j}}{2|T_{j+1}|(|T_{j+1}|+|T_{i}|)}$ for every $j \leq i-2$. 
	Indeed, if by induction we assume $\frac{\dd_\ell |T_{i}|^2}{|T_{\ell+1}|(|T_{\ell+1}+|T_{i}|)} \leq \frac{1}{2^{i-1-\ell}}$, then 
%	\[
$	\frac{\dd_{\ell-1} |T_{i}|^2}{|T_{\ell}|(|T_{\ell}|+|T_{i}|)} \leq \frac{\dd_\ell |T_{i}|^2}{2|T_{\ell+1}|(|T_{\ell+1}|+|T_{i}|)} \leq \frac{1}{2^{i-1-(\ell-1)}},$
thus completing the proof. 
%	\] 
	
	We prove $\frac{\dd_{j-1}}{|T_{j}|(|T_{j}|+|T_{i}|)} \leq \frac{\dd_{j}}{2|T_{j+1}|(|T_{j+1}|+|T_{i}|)}$ by showing that $2\dd_{j-1} |T_{j+1}|(|T_{j+1}|+|T_{i}|) \leq \dd_{j}|T_{j}|(|T_{j}|+|T_{i}|)$. 
	We claim that the last inequality holds under the assumption that $$5 |T_{j+1}|(|T_{j+1}|+|T_{i}|)\leq |T_{j}|(|T_{j}|+|T_{i}|). $$ 
	Indeed, if we assume $5 |T_{j+1}|(|T_{j+1}|+|T_{i}|)\leq |T_{j}|(|T_{j}|+|T_{i}|) $, since $\dd_j \geq 2$ and $\dd_{j-1} \leq 2\dd_j+1$, and using the fact that $|T_{j}| > \dd_j|T_{j+1}|$, we obtain
	\begin{align*}
	2\dd_{j-1} |T_{j+1}|(|T_{j+1}|+|T_{i}|) &\leq 2(2\dd_{j}+1)|T_{j+1}|(|T_{j+1}|+|T_{i}|)\\
	& \leq 5\dd_j |T_{j+1}|(|T_{j+1}|+|T_{i}|)\leq \dd_{j} |T_{j}|(|T_{j}|+|T_{i}|).
	\end{align*}
	So, it remains to prove that $|T_{j}|(|T_{j}|+|T_{i}|) \geq 5 |T_{j+1}|(|T_{j+1}|+|T_{i}|)$. 
	The proof is by cases. 
	For the case in which $j-1 \leq h-4$, we have that $|T_{j}| \geq 4|T_{j+1}|$ and therefore, $|T_{j}|(|T_{j}|+|T_{i}|) \geq 5 |T_{j+1}|(|T_{j+1}|+|T_{i}|)$. 
	For the case in which $j-1=h-3$, i.e., $j=h-2$, we have that $i=h$, $|T_{h}|=1$, $|T_{h-1}|=2$, and $|T_{h-2}|=5$; as a consequence $|T_{j}|(|T_{j}|+|T_{i}|)=5(5+1)=30=5\cdot 2(2+1) = 5|T_{j+1}|(|T_{j+1}|+|T_{i}|)$. The claim follows.
\end{proof}

\noindent The following lemma allows us to show that also {\em extremal} balanced trees, i.e., balanced trees in which the in-degree of each vertex is set to the maximum possible value, is stable.

\begin{lemma}\label{lm:balanced_trees_lateral_swap_towards_non_sibling_vertex_same_level}
	Let $T$ be a balanced tree of height $h$ such that $\dd_{h-1}=1$, $\dd_{h-2}=2$, $\dd_{h-3}=4$, and $\dd_{j-1} \leq \dd_j \leq 2\dd_{j+1}+1$ for every $j \leq h-4$. Let $v_i$ and $u_i$ be two distinct nodes such that $v_i$ and $u_i$ are not siblings. Then, it is not profitable for $v_i$ to swap towards $u_i$.
\end{lemma}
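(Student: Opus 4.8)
The plan is to reduce the statement to Condition~6 (Lemma~\ref{lm:balanced_trees_lateral_swap_towards_non_sibling_vertex_same_level_minus_one}) by comparing the swap of $v_i$ towards $u_i$ against the swap of $v_i$ towards the \emph{parent} of $u_i$. Write $u_{i-1}$ for the parent of $u_i$, and let $k$ be the level of the least common ancestor of $v_i$ and $u_i$; since the two nodes are not siblings we have $k\le i-2$, and since $T$ is balanced every level-$i$ subtree has the same size $|T_i|$. The point is that $u_{i-1}$ is not an ancestor of $v_i$, so Condition~6 applies verbatim to the swap of $v_i$ towards $u_{i-1}$, and the sum appearing there is \emph{identical} to the one governing our swap.

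First I would write down both cost changes. Either swap detaches $T(v_i)$ (of size $|T_i|$) and reattaches it on the branch of $u_i$, so the sizes of the nodes on the target path grow by $|T_i|$ down to level $k+1$, nothing changes at level $\le k$, and only the in-degree of the new parent grows by one. Denoting by $\Delta_i$ and $\Delta_{i-1}$ the cost change of $v_i$ under the swap towards $u_i$ and towards $u_{i-1}$ respectively, the shared suffix cancels and one obtains
\begin{align*}
\Delta_i = \frac{\dd_i+1}{|T_i|}-\sum_{\ell=k+1}^{i}\frac{\dd_{\ell-1}\,|T_i|}{|T_\ell|\,(|T_\ell|+|T_i|)}, \qquad \Delta_{i-1} = \frac{1}{|T_i|}-\sum_{\ell=k+1}^{i-1}\frac{\dd_{\ell-1}\,|T_i|}{|T_\ell|\,(|T_\ell|+|T_i|)}.
\end{align*}
Subtracting, the two sums differ only in the term $\ell=i$, which equals $\frac{\dd_{i-1}}{2|T_i|}$ because $|T_i|+|T_i|=2|T_i|$; this gives the key identity $\Delta_i=\Delta_{i-1}+\frac{2\dd_i-\dd_{i-1}}{2|T_i|}$. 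Condition~6 guarantees $\Delta_{i-1}\ge 0$. Hence at every level with $\dd_{i-1}\le 2\dd_i$ — in particular at $i=h-1$ and $i=h-2$, where $\dd_{i-1}=2\dd_i$ — we immediately conclude $\Delta_i\ge\Delta_{i-1}\ge 0$, so the swap is not profitable.

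The remaining levels are exactly those with $\dd_{i-1}=2\dd_i+1$, and here lies the main obstacle: the reduction loses precisely the additive $1$. Setting $S:=\sum_{\ell=k+1}^{i-1}\frac{\dd_{\ell-1}|T_i|^2}{|T_\ell|(|T_\ell|+|T_i|)}$, so that $\Delta_{i-1}=\frac{1-S}{|T_i|}$, the identity reduces $\Delta_i\ge 0$ to $S\le\frac12$, which strengthens the bound $S\le 1$ that Condition~6 already supplies. I would establish this in two sub-cases. For $i\le h-3$ the extremal relation forces $\dd_{i-1}\ge 2\dd_{h-3}+1=9$; reusing the per-term estimate proven inside Condition~6 (each summand is at most half of the summand one step closer to the leaf, so $S$ is at most twice its top term, the one at $\ell=i-1$), and then bounding that term via $|T_{i-1}|>\dd_{i-1}|T_i|$ together with $\dd_{i-2}\le 2\dd_{i-1}+1$, yields $S<\frac{2(2\dd_{i-1}+1)}{\dd_{i-1}(\dd_{i-1}+1)}\le\frac12$ whenever $\dd_{i-1}\ge 9$. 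The only extremal level left is the leaf level $i=h$ (where $\dd_h=0$ and $\dd_{h-1}=1$, so again $\dd_{i-1}=2\dd_i+1$), for which the decay bound is too weak; there I would instead evaluate $S$ directly on the fixed bottom structure, whose two dominant summands are $\frac{2}{2\cdot 3}$ and $\frac{4}{5\cdot 6}$, and verify $S<\frac12$ by hand. Combining all cases shows $\Delta_i\ge 0$ throughout, i.e.\ swapping towards $u_i$ is never profitable.
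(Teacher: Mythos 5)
Your proof is correct and is essentially the paper's own argument in different packaging: the paper derives the inequality $\sum_{\ell=k}^{i-2}\frac{\dd_\ell|T_i|^2}{|T_{\ell+1}|(|T_{\ell+1}|+|T_i|)}\leq\frac{2\dd_i+2-\dd_{i-1}}{2}$ directly and then, exactly as you do, settles the case $\dd_{i-1}\le 2\dd_i$ by the bound $\sum\le 1$ from Condition~6 (equivalent to your $\Delta_{i-1}\ge 0$ combined with the identity $\Delta_i=\Delta_{i-1}+\frac{2\dd_i-\dd_{i-1}}{2|T_i|}$), and settles the case $\dd_{i-1}=2\dd_i+1$ with $i\le h-3$ by the same two estimates you invoke (top term at most $\frac{2\dd_{i-1}+1}{\dd_{i-1}(\dd_{i-1}+1)}\le\frac14$ since $\dd_{i-1}\ge 9$, then the factor-$2$ per-term decay reused from Condition~6). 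The only substantive difference is that you explicitly treat the leaf level $i=h$, which the paper's case analysis silently omits because leaf-to-leaf swaps are handled separately by Lemma~\ref{lm:balanced_tree_lateral_swap_of_leaf_towards_leaf}; your direct evaluation of the bottom terms indeed yields $S<\frac12$ there, so your treatment is, if anything, slightly more complete.
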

\begin{proof}
	Let $k$ be the distance from the root with respect to the least common ancestor of $v_i$ and $u_i$. We have to prove that
	\[
	\sum_{\ell=k}^{i-1}\frac{\dd_\ell}{|T_{\ell+1}|} \leq \frac{\dd_i+1}{|T_{i}|}+\sum_{\ell=k}^{i-1}\frac{\dd_\ell}{|T_{\ell+1}|+|T_{i}|}, \text{ i.e., }
	\sum_{\ell=k}^{i-2}\frac{\dd_\ell |T_{i}|^2}{|T_{\ell+1}|(|T_{\ell+1}|+|T_{i}|)} \leq \frac{2\dd_i+2-\dd_{i-1}}{2}.
	\]
	When $\dd_{i-1}=2\dd_i$ we have that $2\dd_i+2-\dd_{i-1}=1$ and therefore, in this case, we have to show that
	$\sum_{\ell=k}^{i-2}\frac{\dd_\ell |T_{i}|^2}{|T_{\ell+1}|(|T_{\ell+1}|+|T_{i}|)} \leq 1$. 
	We observe that such a proof has already been provided in Lemma~\ref{lm:balanced_trees_lateral_swap_towards_non_sibling_vertex_same_level_minus_one}. 
	Furthermore, we have that $\dd_{i-1}=2\dd_i$ when $i=h-1,h-2$. Therefore, it remains to prove the claim when $i \leq h-3$ and $\dd_{i-1}=2\dd_{i}+1$.
	
	When $i \leq h-3$ and $\dd_{i-1}\leq 2\dd_{i}+1$, we have that $2\dd_i+2-\dd_{i-1} \geq 1$. 
	Therefore, to complete the proof for the case $i \leq h-3$, it is enough to show that $\sum_{\ell=k}^{i-2}\frac{\dd_\ell |T_{i}|^2}{|T_{\ell+1}|(|T_{\ell+1}|+|T_{i}|)} \leq \frac{1}{2}$.	
	We prove that such an inequality holds by showing that $\frac{\dd_\ell |T_{i}|^2}{|T_{\ell+1}|(|T_{\ell+1}|+|T_{i}|)} \leq \frac{1}{2^{i-\ell}}$ for every $\ell \leq i-2$. The proof is by induction on~$\ell$.                                                                                                                                                           
	
	We prove the base case $\ell=i-2$ first. Since $i \leq h-3$ and $\dd_{i-1} \geq 9$, using also the facts that $|T_{i-1}|>\dd_{i-1}|T_{i}|$ and $\dd_{i-2}\leq 2\dd_{i-1}+1$, we have that
\[
	\frac{\dd_{i-2} |T_{i}|^2}{|T_{i-1}|(|T_{i-1}|+|T_{i}|)} \leq \frac{(2\dd_{i-1}+1)|T_{i}|^2}{\dd_{i-1}|T_{i}|(\dd_{i-1}+1)|T_{i}|} =\frac{2\dd_{i-1}+1}{\dd_{i-1}(\dd_{i-1}+1)} \leq \frac{1}{4}.
\]

\noindent	To prove the inductive case, it is enough to show that $\frac{\dd_{j-1}}{|T_{j}|(|T_{j}|+|T_{i}|)} \leq \frac{\dd_{j}}{2|T_{j+1}|(|T_{j+1}|+|T_{i}|)}$ for every $j \leq i-2$. Indeed, if by induction we assume $\frac{\dd_\ell |T_{i}|^2}{|T_{\ell+1}|(|T_{\ell+1}|+|T_{i}|)} \leq \frac{1}{2^{i-\ell}}$, then 
	\[
	\frac{\dd_{\ell-1} |T_{i}|^2}{|T_{\ell}|(|T_{\ell}|+|T_{i}|)} \leq \frac{\dd_\ell |T_{i}|^2}{2|T_{\ell+1}|(|T_{\ell+1}|+|T_{i}|)} \leq \frac{1}{2^{i-(\ell-1)}},
	\] 
	thus completing the proof. We observe that $\frac{\dd_{j-1}}{|T_{j}|(|T_{j}|+|T_{i}|)} \leq \frac{\dd_{j}}{2|T_{j+1}|(|T_{j+1}|+|T_{i}|)}$ has already been proved in Lemma~\ref{lm:balanced_trees_lateral_swap_towards_non_sibling_vertex_same_level_minus_one}. The claim follows.
\end{proof}

\noindent The next three lemmas, the first of which holds only for non-extremal balanced trees, rule out the case in which $v_i$ swaps her edge towards a leaf.

\begin{lemma}\label{lm:balanced_tree_lateral_swap_heavy_subtree_towards_leaf}
	Let $T$ be a balanced tree of height $h$ such that $\dd_{h-1}=1$, $\dd_{h-2}=2$, $\dd_{h-3}=4$, and $\dd_{j+1} \leq \dd_j \leq 2\dd_{j+1}$ for every $j \leq h-4$. Then, for every $i \leq h-2$, it is not profitable for $v_i$ to swap towards $u_h$.
\end{lemma}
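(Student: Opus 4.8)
The plan is to compare $v_i$'s cost before and after it reattaches below the leaf $u_h$ and to show the new cost is never smaller. Write $S:=|T_i|$ for the size of $v_i$'s subtree, let $P_u=u_h,u_{h-1},\dots,u_0=r$ be the leaf-to-root path of $u_h$, and let $k$ be the level of the least common ancestor of $v_i$ and $u_h$. Since a profitable move cannot disconnect $v_i$ from $r$, we may assume $u_h\notin T(v_i)$, so $k\le i-1$. After the swap the path of $v_i$ becomes $v_i,u_h,u_{h-1},\dots,r$; on every edge strictly above level $k$ both the in-degree of the head and the load are unchanged, so those contributions cancel against the old cost. Because $T$ is balanced, the subtree sizes on the shared upper part agree on the two paths, and non-profitability reduces to
\[
\frac{1}{S}+\sum_{j=i+1}^{h}\frac{\dd_{j-1}}{|T_j|+S}\;\ge\; D:=\sum_{j=k+1}^{i}\frac{\dd_{j-1}\,S}{|T_j|\left(|T_j|+S\right)},
\]
where the left-hand \emph{surplus} collects the new edge $v_i\to u_h$ (of cost $1/S$) together with the freshly traversed descent down to the leaf, while $D$ is the total \emph{dilution deficit} incurred on the shared levels $k+1,\dots,i$, since each such load grows from $|T_j|$ to $|T_j|+S$.

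First I would bound the deficit $D$. Its dominant contribution is the level-$i$ term, which is \emph{exactly} $\frac{\dd_{i-1}}{2S}$, because $|T_i|=S$ and hence $|T_i|+S=2S$. For the remaining terms $j\le i-1$ I would use $\dd_{j-1}\le 2\dd_j$ and $|T_j|>\dd_j|T_{j+1}|$ to obtain $\frac{\dd_{j-1}}{|T_j|}<\frac{2}{|T_{j+1}|}$, together with $|T_{j+1}|\ge|T_i|=S$ and the fact that subtree sizes at least double per level (as $\dd_j\ge 2$ for $j\le h-2$); this makes the tail a geometric series of order $\mathcal{O}\!\left(1/(\dd_{i-1}S)\right)$, giving $D\le \frac{\dd_{i-1}}{2S}+\frac{c}{\dd_{i-1}S}$ for a small constant $c$.

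Next I would lower-bound the surplus. For every $j\ge i+1$ we have $|T_j|\le|T_{i+1}|=(S-1)/\dd_i<S/\dd_i$, hence $|T_j|+S<\frac{\dd_i+1}{\dd_i}S$ and each descent term is at least $\frac{\dd_i}{\dd_i+1}\cdot\frac{\dd_{j-1}}{S}$. Summing over the descent and using $\dd_i\le 2\dd_{i+1}$ (so $\dd_i+\dd_{i+1}\ge\tfrac32\dd_i$, with further levels only helping) gives a surplus of at least $\frac1S+\frac{3\dd_i^2}{2(\dd_i+1)S}$. Substituting both estimates into the reduced inequality and clearing $S$ reduces the claim to an elementary inequality in $\dd_i$ alone (via $\dd_{i-1}\le 2\dd_i$), which holds with room to spare once $\dd_i\ge 5$; for $i\le h-4$ with $\dd_i=4$ the longer descent supplies an extra surplus term that again closes the gap.

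The main obstacle is that the inequality is genuinely tight: the leading deficit $\frac{\dd_{i-1}}{2S}\le\frac{\dd_i}{S}$ is almost exactly matched by the leading surplus, so crude factor-$2$ estimates on the near-leaf terms (where $|T_j|\ll S$ makes them close to $\dd_{j-1}/S$ rather than $\dd_{j-1}/(2S)$) destroy the bound; the level-$i$ deficit must be kept exact and the descent surplus counted almost losslessly. For this reason the general estimate only closes for the larger degrees, and I would finish by dispatching the two low-degree boundary levels $i=h-2$ (where $\dd_i=2$ and $S=5$) and $i=h-3$ (where $\dd_i=4$ and $S=21$) by direct arithmetic with the explicit sizes $|T_{h}|=1$, $|T_{h-1}|=2$, $|T_{h-2}|=5$, $|T_{h-3}|=21$; for instance, for $i=h-2$ the surplus $\frac15+\frac27+\frac16$ already exceeds the deficit $\frac{4\cdot5}{5\cdot 10}$ plus its negligible tail. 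Together these cases cover all $i\le h-2$.
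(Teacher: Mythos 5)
Your deficit--surplus setup is the right shape, and most of your estimates check out, but the reduction itself contains a concrete error in exactly the case you flag as critical. Your cancellation claim (``on every edge strictly above level $k$ both the in-degree of the head and the load are unchanged'') and the claimed exactness of the level-$i$ deficit term $\frac{\dd_{i-1}}{2S}$ both fail when $k=i-1$, i.e., when the leaf $u_h$ lies in a sibling branch of $v_i$, so that the least common ancestor is $v_i$'s own parent. After the swap, that parent loses the child $v_i$, so its in-degree drops to $\dd_{i-1}-1$; and since the parent \emph{is} the LCA, it lies on the new path, so the new path's edge into it costs $\frac{\dd_{i-1}-1}{|T_i|+S}=\frac{\dd_{i-1}-1}{2S}$ rather than $\frac{\dd_{i-1}}{2S}$. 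The true deficit in this case is $\frac{\dd_{i-1}+1}{2S}$, which exceeds your $D$ by $\frac{1}{2S}$, so proving your reduced inequality does not establish non-profitability here --- and by your own account the level-$i$ term ``must be kept exact'' because the inequality is tight. The gap is repairable: in this case the deficit sum has no tail, and against your surplus bound one needs $\frac{\dd_{i-1}+1}{2}\leq 1+\frac{3\dd_i^2}{2(\dd_i+1)}$, which via $\dd_{i-1}\leq 2\dd_i$ reduces to $\dd_i^2-\dd_i+1\geq 0$, always true; in the tightest boundary case $i=h-2$, $S=5$, the corrected deficit $\frac{5}{10}$ still sits below the surplus $\frac15+\frac27+\frac16$. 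So your route can be completed, but only after this case is treated separately.

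It is also worth contrasting your route with the paper's, which is genuinely different and sidesteps the above bookkeeping entirely. The paper never compares the deviation to the current cost directly; it compares two \emph{deviations}: swapping towards the leaf $u_h$ versus swapping towards $u_{h-2}$, the internal node two levels above it. Both deviations share the identical cost $A$ on the portion of the path from $u_{h-2}$ to the root (same loads, same in-degrees, including any decrement at the old parent), so all LCA subtleties cancel by construction. Non-profitability of the swap towards the internal node $u_{h-2}$ is already available from the lateral-swap machinery (Condition 5 and the related lemmas/corollaries), giving $\cost_T(v_i)\leq A+\frac{3}{|T_i|}$, and the whole proof then rests on the single elementary inequality $\frac{3}{|T_i|}\leq \frac{1}{|T_i|}+\frac{1}{|T_i|+1}+\frac{2}{|T_i|+2}$, valid for $|T_i|\geq 5$. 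Your self-contained computation re-derives estimates that those lemmas already encapsulate, at the price of geometric-series tail bounds and the delicate case analysis at the LCA that tripped you up.
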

\begin{proof}
	Clearly, we only need to consider the case in which $u_h$ is not a descendant of $v_i$. Let $k$ be the least common ancestor of $v_i$ and $u_{h-2}$, and denote by $A+\frac{3}{|T_{i}|}$ the cost incurred by $v_i$ when $v_i$ swaps towards $u_{h-2}$ ($A$ may be equal to 0). 
	Since by Lemma~\ref{lm:condition_5} and Lemma~\ref{lm:balanced_tree_lateral_swap_heavy_subtree_towards_leaf}, it is not profitable for $v_i$ to swap towards $u_{h-2}$, the cost of $v_i$ in $T$ is at most $A+\frac{3}{|T_{i}|}$. Furthermore, since $|T_{i}| \geq 5$, we have that  $\frac{3}{|T_{i}|} \leq \frac{2}{|T_{i}|+2}+\frac{1}{|T_{i}|+1}+\frac{1}{|T_{i}|}$. Therefore, the cost incurred by $v_i$ in $T$ is at most $A+\frac{2}{|T_{i}|+2}+\frac{1}{|T_{i}|+1}+\frac{1}{|T_{i}|}$. But this is exactly the cost incurred by $v_i$ if she swapped her edge towards $u_h$. The claim follows.
\end{proof}

\begin{lemma}\label{lm:balanced_tree_lateral_swap_of_leaf_towards_leaf}
	Let $T$ be a balanced tree of height $h$ such that $\dd_{h-1}=1$, $\dd_{h-2}=2$, $\dd_{h-3}=4$, and $\dd_{j+1} \leq \dd_j \leq 2\dd_{j+1}+1$ for every $j \leq h-4$. Let $v_h$ and $u_h$ be two distinct leaves of $T$. Then, it is not profitable for $v_h$ to swap towards $u_h$.
\end{lemma}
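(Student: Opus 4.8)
The plan is to compute the exact change in $v_h$'s cost caused by the swap and reduce the claim to a single, rather tight, numerical inequality about a fast-decaying sum. First I would fix the least common ancestor of $v_h$ and $u_h$ and let $k$ be its distance from the root. Since $\dd_{h-1}=1$, every node at level $h-1$ has a unique child, so two distinct leaves can never be siblings; hence $k \le h-2$ and this lemma genuinely covers all leaf-to-leaf swaps. Attaching $v_h$ below the leaf $u_h$ raises the in-degree of $u_h$ from $0$ to $1$ and increases $|T(u_\ell)|$ by exactly one for every level $k<\ell\le h$ on $u_h$'s branch, while everything on $v_h$'s old branch is irrelevant to $v_h$'s new path. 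Using the balanced-tree values $\cost_T(v_h)=\sum_{j=0}^{h-1}\frac{\dd_j}{|T_{j+1}|}$ and the post-swap path of $v_h$, an edge-by-edge comparison gives
\[
\cost_{T'}(v_h)-\cost_T(v_h)=1-\sum_{j=k}^{h-1}\frac{\dd_j}{|T_{j+1}|\bigl(|T_{j+1}|+1\bigr)},
\]
where the leading $1$ is the cost of the new bottom edge $(v_h,u_h)$, whose endpoint now has in-degree $1$ and whose tail subtree $\{v_h\}$ has size $1$. The swap is unprofitable exactly when the sum is at most $1$, and since the sum only grows as $k$ decreases, it suffices to treat the worst case $k=0$.

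The core is therefore to show $S:=\sum_{j=0}^{h-1}\frac{\dd_j}{|T_{j+1}|(|T_{j+1}|+1)}<1$. Writing $a_j$ for the $j$-th summand, the prescribed bottom degrees $\dd_{h-1}=1,\dd_{h-2}=2,\dd_{h-3}=4$ force the sizes $|T_h|=1,|T_{h-1}|=2,|T_{h-2}|=5,|T_{h-3}|=21$ via $|T_j|=1+\dd_j|T_{j+1}|$, so the three top terms are exact: $a_{h-1}=\tfrac12$, $a_{h-2}=\tfrac13$, $a_{h-3}=\tfrac{2}{15}$, summing to $\tfrac{29}{30}$. Hence I must show the tail $\sum_{j=0}^{h-4}a_j$ stays below $\tfrac{1}{30}$; the whole point of the lemma is that it does so with almost no slack, which is exactly why a crude decay estimate fails.

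To control the tail I would establish a sharp geometric ratio. From $|T_j|=1+\dd_j|T_{j+1}|>\dd_j|T_{j+1}|$ one gets $|T_j|(|T_j|+1)>\dd_j^{2}|T_{j+1}|^{2}$ and $|T_{j+1}|(|T_{j+1}|+1)\le 2|T_{j+1}|^{2}$, so with $\dd_{j-1}\le 2\dd_j+1$,
\[
\frac{a_{j-1}}{a_j}=\frac{\dd_{j-1}}{\dd_j}\cdot\frac{|T_{j+1}|\bigl(|T_{j+1}|+1\bigr)}{|T_j|\bigl(|T_j|+1\bigr)}<\frac{2\dd_j+1}{\dd_j}\cdot\frac{2}{\dd_j^{2}}=\frac{2(2\dd_j+1)}{\dd_j^{3}}.
\]
For every index $j\le h-3$ entering the tail we have $\dd_j\ge\dd_{h-3}=4$ (by the monotonicity $\dd_{j+1}\le\dd_j$), and the right-hand side is decreasing in $\dd_j$, so the ratio is at most $\tfrac{9}{32}$ throughout. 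Bounding $a_{h-4}\le\frac{2\dd_{h-3}+1}{|T_{h-3}|(|T_{h-3}|+1)}=\frac{9}{21\cdot 22}=\frac{3}{154}$ and summing the geometric tail yields $\sum_{j=0}^{h-4}a_j\le a_{h-4}\cdot\frac{1}{1-9/32}\le\frac{3}{154}\cdot\frac{32}{23}=\frac{48}{1771}$, so $S\le\frac{29}{30}+\frac{48}{1771}=\frac{52799}{53130}<1$, which finishes the argument.

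The main obstacle is precisely this tightness: the true value of $S$ is about $0.99$, so the ratio $\tfrac12$ that is good enough in the earlier Conditions is far too weak here, and one genuinely needs both the exact evaluation of the three lowest levels and the sharper ratio $\tfrac{9}{32}$, which is available only because every non-bottom degree is at least $4$. The remaining checks are routine: that $\dd_{j-1}\le 2\dd_j+1$ and $\dd_j\ge 4$ hold for all $j\le h-3$ (immediate from the hypotheses), and the trivial small-$h$ cases $h=3$ (empty tail) and $h=4$ (single-term tail $a_0\le\frac{3}{154}$).
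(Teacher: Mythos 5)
Your proposal is correct and follows essentially the same route as the paper: the identical reduction of the swap's profitability to the inequality $\sum_{\ell=k}^{h-1}\frac{\dd_\ell}{|T_{\ell+1}|(|T_{\ell+1}|+1)}\le 1$, exact evaluation of the bottom levels, and a geometrically decaying bound on the remaining tail driven by $\dd_{j-1}\le 2\dd_j+1$ and the subtree-size growth $|T_j|>\dd_j|T_{j+1}|$. The only difference is bookkeeping: the paper also evaluates the $\ell=h-4$ term exactly (obtaining $\frac{76}{77}$) and then applies a per-level decay ratio of $\frac12$ anchored at $\ell=h-5$ (where the direct bound $\frac{1}{882}$ is small), whereas you anchor the tail at $a_{h-4}\le\frac{3}{154}$ and compensate with the sharper ratio $\frac{9}{32}$ — so your remark that the ratio $\frac12$ is ``far too weak'' holds only for your choice of anchor, not in general.
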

\begin{proof}
	Let $k$ be the distance from the root with respect to the least common ancestor of $u_h$ and $v_h$. We have to prove that
	\[
	\sum_{\ell=k}^{h-1} \frac{\dd_\ell}{|T_{\ell+1}|} \leq 1+\sum_{\ell=k}^{h-1}\frac{\dd_\ell}{|T_{\ell+1}|+1}, \text{ i.e., }
	\sum_{\ell=k}^{h-1} \frac{\dd_\ell}{|T_{\ell+1}|(|T_{\ell+1}|+1)} \leq 1.
	\]
	We prove the last inequality by showing that (a) $\sum\limits_{\ell=h-4}^{h-1} \frac{\dd_\ell}{|T_{\ell+1}|(|T_{\ell+1}|+1)} \leq \frac{76}{77}$ and (b) $\frac{\dd_\ell}{|T_{\ell+1}|(|T_{\ell+1}|+1)} \leq \frac{1}{77\cdot 2^{h-4+\ell}}$ for every $\ell \leq h-5$.
	We prove (a) first. Using all the hypothesis, we have that 
	\[
	\sum_{\ell=h-4}^{h-1} \frac{\dd_\ell}{|T_{\ell+1}|(|T_{\ell+1}|+1)} = \frac{1}{2}+\frac{1}{3}+\frac{2}{15}+\frac{\dd_{h-4}}{462} \leq  \frac{29}{30}+\frac{9}{462} < \frac{76}{77}.
	\]
	We now prove (b). The proof is by induction on $\ell$. For the base case $\ell=h-5$, we have that 
	\[
	\frac{\dd_{h-5}}{|T_{h-4}|(|T_{h-4}|+1)} \leq \frac{2\dd_{h-4}}{|T_{h-4}|^2}< \frac{2\dd_{h-4}}{|T_{h-3}|^2\dd_{h-4}^2} = \frac{2}{21^2\cdot  4}=\frac{1}{882}<\frac{1}{144},
	\]
	and the claim follows.
	
	To prove the inductive case, it is enough to show that $\frac{\dd_{j-1}}{|T_{j}|(|T_{j}|+1)} \leq \frac{\dd_{j}}{2|T_{j+1}|(|T_{j+1}|+1)}$ for every $j \leq h-5$. Indeed, if by induction we assume $\frac{\dd_\ell}{|T_{\ell+1}|(|T_{\ell+1}|+1)} \leq \frac{1}{77\cdot 2^{h-4-\ell}}$, then 
	\[
	\frac{\dd_{\ell-1}}{|T_{\ell}|(|T_{\ell}|+1)} \leq \frac{\dd_\ell}{2|T_{\ell+1}|(|T_{\ell+1}|+1)} \leq \frac{1}{77\cdot 2^{h-4-(\ell-1)}},
	\] 
	thus completing the proof. We prove $\frac{\dd_{j-1}}{|T_{j}|(|T_{j}|+1)} \leq \frac{\dd_{j}}{2|T_{j+1}|(|T_{j+1}|+1)}$ by showing that $2\dd_{j-1}\cdot |T_{j+1}|(|T_{j+1}|+1) \leq \dd_{j}|T_{j}|(|T_{j}|+1)$. Since $\dd_j,|T_{j+1}| \geq 4$ and $|T_{j}|>4|T_{j+1}|$, we have that
	\[
	2\dd_{j-1} |T_{j+1}|(|T_{j+1}|+1) \leq 2(2\dd_{j}+1)|T_{j+1}|(|T_{j+1}|+1) \leq 5\dd_j|T_{j+1}|(|T_{j+1}|+1)\leq \dd_{j} |T_{j}|(|T_{j}|+1). \qedhere
	\]	
\end{proof}

\begin{lemma}\label{lm:balanced_tree_lateral_swap_of_light_subtree_towards_leaf}
	Let $T$ be a balanced tree of height $h$ such that $\dd_{h-1}=1$, $\dd_{h-2}=2$, $\dd_{h-3}=4$, and $\dd_{j+1} \leq \dd_j \leq 2\dd_{j+1}+1$ for every $j \leq h-4$. Let $v_{h-1}$ and $u_h$ be two nodes of $T$ such that $u_h$ is not a child of $v_{h-1}$. Then, it is not profitable for $v_{h-1}$ to swap towards $u_h$.
\end{lemma}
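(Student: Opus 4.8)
The plan is to reduce the claim to a single numerical inequality about the in-degree sequence, in the same spirit as Lemmas~\ref{lm:balanced_tree_lateral_swap_of_leaf_towards_leaf} and~\ref{lm:balanced_trees_lateral_swap_towards_non_sibling_vertex_same_level_minus_one}. Since $v_{h-1}$ has in-degree $\dd_{h-1}=1$, its subtree consists of itself and its single leaf child, so $|T_{h-1}|=2$; and since $u_h$ is not a child of $v_{h-1}$, it is not a descendant of $v_{h-1}$ either. Let $w$ be the least common ancestor of $v_{h-1}$ and $u_h$, at distance $k$ from the root; because $u_h$ is not the child of $v_{h-1}$ we have $k\le h-2$. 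When $v_{h-1}$ swaps towards $u_h$, its whole subtree (of size $2$) is reattached below $u_h$, so every subtree strictly above $w$ keeps its size and every in-degree above $w$ is unchanged. Hence the portion of $v_{h-1}$'s cost incurred on the edges from $w$ to the root is identical before and after the swap, and I only need to compare the cost incurred on the path from $v_{h-1}$ up to $w$.

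First I would write these two partial costs explicitly. In $T$ this part equals $\sum_{\ell=k}^{h-2}\frac{\dd_\ell}{|T_{\ell+1}|}$. After the swap, $u_h$ gains in-degree $1$, the subtree rooted at $u_h$ grows to size $3$, and every subtree on the path from $u_h$ to $w$ grows by $2$, while the relevant in-degrees (the parent of $u_h$ keeps $\dd_{h-1}=1$, and the $\dd_\ell$ higher up are unchanged); this part therefore equals $\frac12+\sum_{\ell=k}^{h-1}\frac{\dd_\ell}{|T_{\ell+1}|+2}$, where the leading $\frac12$ is the cost of the new edge $(v_{h-1},u_h)$. Isolating the new $\ell=h-1$ term, which equals $\frac{\dd_{h-1}}{|T_h|+2}=\frac13$, the swap turns out to be not profitable exactly when
\begin{equation*}
\sum_{\ell=k}^{h-2}\frac{2\dd_\ell}{|T_{\ell+1}|\,(|T_{\ell+1}|+2)}\le\frac12+\frac13=\frac56.
\end{equation*}

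To establish this inequality I would bound the summands $t_\ell:=\frac{2\dd_\ell}{|T_{\ell+1}|(|T_{\ell+1}|+2)}$. Using $\dd_{h-1}=1,\dd_{h-2}=2,\dd_{h-3}=4$ and the induced sizes $|T_{h-1}|=2,|T_{h-2}|=5,|T_{h-3}|=21$, the two bottom terms are exactly $t_{h-2}=\frac12$ and $t_{h-3}=\frac{8}{35}$, and from $\dd_{h-4}\le 2\dd_{h-3}+1=9$ I get $t_{h-4}\le\frac{18}{483}=\frac{6}{161}$. For the remaining terms I would prove the geometric decay $t_{\ell-1}\le\frac12 t_\ell$ for every $\ell\le h-3$, exactly as in the inductive step of Lemma~\ref{lm:balanced_tree_lateral_swap_of_leaf_towards_leaf}: combining $\dd_{\ell-1}\le 2\dd_\ell+1\le\frac52\dd_\ell$ (valid since $\dd_\ell\ge 2$) with $|T_\ell|\ge 1+4|T_{\ell+1}|$ (valid since $\dd_\ell\ge 4$) yields $5|T_{\ell+1}|(|T_{\ell+1}|+2)\le|T_\ell|(|T_\ell|+2)$, whence $t_{\ell-1}\le\frac12 t_\ell$. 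Summing the tail as a geometric series gives $\sum_{\ell=k}^{h-4}t_\ell\le 2t_{h-4}\le\frac{12}{161}$, so $\sum_{\ell=k}^{h-2}t_\ell\le\frac12+\frac{8}{35}+\frac{12}{161}<\frac56$, and the shorter sums arising when $k\in\{h-2,h-3,h-4\}$ are even smaller.

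The part I expect to require the most care is the tail bound: the naive estimate that simply iterates $t_{\ell-1}\le\frac12 t_\ell$ starting from $t_{h-3}=\frac{8}{35}$ gives only $\frac12+2\cdot\frac{8}{35}=\frac{67}{70}$, which exceeds $\frac56$. The fix is to peel off $t_{h-4}$ and bound it by its true, much smaller value $\frac{6}{161}$ using $|T_{h-3}|=21$, before invoking the geometric decay from level $h-5$ downwards; this is what brings the total safely below $\frac56$ (indeed it suffices to check $\frac{8}{35}+\frac{12}{161}\le\frac13$). A secondary point to get right is the bookkeeping of subtree sizes and in-degrees after reattaching $T(v_{h-1})$ below the leaf $u_h$, in particular that the parent of $u_h$ keeps in-degree $1$ and that the first two reattached edges contribute exactly $\frac12+\frac13$.
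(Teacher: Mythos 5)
Your proposal is correct and follows essentially the same route as the paper's proof: the same reduction of the swap comparison to $\sum_{\ell=k}^{h-2}\frac{2\dd_\ell}{|T_{\ell+1}|(|T_{\ell+1}|+2)}\le\frac{5}{6}$, the same exact evaluation of the bottom terms ($\frac12$, $\frac{8}{35}$, and the bound via $\dd_{h-4}\le 9$, $|T_{h-3}|=21$ at level $h-4$), and the same halving inequality $2\dd_{j-1}|T_{j+1}|(|T_{j+1}|+2)\le\dd_j|T_j|(|T_j|+2)$ driving a geometric tail bound. The only difference is bookkeeping: you peel off two explicit terms and sum the tail geometrically starting from $t_{h-4}\le\frac{6}{161}$, whereas the paper keeps three explicit terms (bounded by $\frac{19}{24}$) and bounds the tail termwise by $\frac{1}{24\cdot 2^{h-4-\ell}}$, arriving at the same conclusion.
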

\begin{proof}
	Let $k$ be the distance from the root with respect to the least common ancestor of $u_h$ and $v_{h-1}$. We have to prove that
	\[
	\sum_{\ell=k}^{h-2} \frac{\dd_\ell}{|T_{\ell+1}|} \leq \frac{1}{2}+\frac{1}{3}+\sum_{\ell=k}^{h-2}\frac{\dd_\ell}{|T_{\ell+1}|+2}, \text{ i.e., }
	\sum_{\ell=k}^{h-2} \frac{2\dd_\ell}{|T_{\ell+1}|(|T_{\ell+1}|+2)} \leq \frac{5}{6}.
	\]
	We prove the last inequality by showing that (a) $\sum\limits_{\ell=h-4}^{h-2} \frac{\dd_\ell}{|T_{\ell+1}|(|T_{\ell+1}|+2)} \leq \frac{19}{24}$ and (b) $\frac{\dd_\ell}{|T_{\ell+1}|(|T_{\ell+1}|+2)} \leq \frac{1}{24\cdot 2^{h-4+\ell}}$ for every $\ell \leq h-5$.\footnote{We observe that $\frac{1}{24}=\frac{5}{6}-\frac{19}{24}$.}
	We prove (a) first. Using all the hypothesis, we have that 
	\[
	\sum_{\ell=h-4}^{h-2} \frac{2\dd_\ell}{|T_{\ell+1}|(|T_{\ell+1}|+2)}= \frac{2\cdot 2}{2(2+2)}+\frac{2\cdot 4}{5(5+2)}+\frac{2\dd_{h-4}}{21(21+2)} \leq  \frac{1}{2}+\frac{8}{35}+\frac{18}{483} < \frac{19}{24}.
	\]
	We now prove (b). The proof is by induction on $\ell$. For the base case $\ell=h-5$, we have that 
	\[
	\frac{2\dd_{h-5}}{|T_{h-4}|(|T_{h-4}|+1)} \leq \frac{4\dd_{h-4}}{|T_{h-4}|^2}< \frac{4\dd_{h-4}}{|T_{h-3}|^2\dd_{h-4}^2} = \frac{4}{21^2\cdot  4}=\frac{1}{441}<\frac{1}{24\cdot 2},
	\]
	and the claim follows.
	
	To prove the inductive case, it is enough to show that $\frac{\dd_{j-1}}{|T_{j}|(|T_{j}|+2)} \leq \frac{\dd_{j}}{2|T_{j+1}|(|T_{j+1}|+2)}$ for every $j \leq h-5$. Indeed, if by induction we assume $\frac{2\dd_\ell}{|T_{\ell+1}|(|T_{\ell+1}|+2)} \leq \frac{1}{24\cdot 2^{h-4-\ell}}$, then 
	\[
	\frac{2\dd_{\ell-1}}{|T_{\ell}|(|T_{\ell}|+2)} \leq \frac{2\dd_\ell}{2|T_{\ell+1}|(|T_{\ell+1}|+2)} \leq \frac{1}{24\cdot 2^{h-4-(\ell-1)}},
	\] 
	thus completing the proof. We prove $\frac{\dd_{j-1}}{|T_{j}|(|T_{j}|+2)} \leq \frac{\dd_{j}}{2|T_{j+1}|(|T_{j+1}|+2)}$ by showing that $2\dd_{j-1} |T_{j+1}|\cdot(|T_{j+1}|+2) \leq \dd_{j}|T_{j}|(|T_{j}|+2)$. Since $\dd_j,|T_{j+1}| \geq 4$ and $|T_{j}|>4|T_{j+1}|$, we have that
	\[
	2\dd_{j-1} |T_{j+1}|(|T_{j+1}|+2) \leq 2(2\dd_{j}+1)|T_{j+1}|(|T_{j+1}|+2) \leq 5\dd_j|T_{j+1}|(|T_{j+1}|+2)\leq \dd_{j} |T_{j}|(|T_{j}|+2). \qedhere
	\]	
\end{proof}

\begin{theorem}\label{thm:some_balanced_trees_are_NE}
The balanced tree $T$ with degree sequence $(0,1,2,4,\dd_{h-4},\dots,\dd_0)$, where $\dd_{j+1} < \dd_j \leq 2\dd_{j+1}+1$ for every $j \leq h-4$, is stable.
\end{theorem}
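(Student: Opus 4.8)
The plan is to show that no agent has a profitable edge swap, using the battery of conditions (Lemmas~\ref{lm:condition_1}--\ref{lm:balanced_tree_lateral_swap_of_light_subtree_towards_leaf}) established above. First I would record that, because $T$ is balanced with $\dd_{h-1}=1,\dd_{h-2}=2,\dd_{h-3}=4$ and $\dd_{j+1}<\dd_j\le 2\dd_{j+1}+1$, the subtree sizes obey the recurrence $|T_i|=1+\dd_i|T_{i+1}|$ with $|T_h|=1,|T_{h-1}|=2,|T_{h-2}|=5,|T_{h-3}|=21$, and that consequently every hypothesis appearing in the cited lemmas is met: the degree sequence strictly decreases away from the root (so $\dd_{i-2}>\dd_{i-1}$ and $\dd_j\ge\dd_{j+1}+1$ hold), and $\dd_j\le 2\dd_{j+1}+1$ for all relevant $j$. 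I would then classify every deviation of an agent $v_i$ at level $i$ by the position of the target node relative to $v_i$: (i) a descendant of $v_i$, (ii) a proper ancestor, (iii) a non-ancestor internal node (including siblings), and (iv) a leaf. Case (i) is immediate: routing into one's own subtree destroys the path to $r$ and yields infinite cost, so it is never improving.

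For the ancestor deviations (ii) I would argue by downward induction on the level of the target ancestor. The base case is a swap to the grandparent at level $i-2$, ruled out by Condition~1 (Lemma~\ref{lm:condition_1}); here the size premise $|T_{i-1}|\ge\frac{\dd_{i-2}}{\dd_{i-2}+1-\dd_{i-1}}|T_i|$ follows from $|T_{i-1}|=1+\dd_{i-1}|T_i|$ together with the bound $\frac{\dd_{i-2}}{\dd_{i-2}+1-\dd_{i-1}}\le\frac{\dd_{i-1}+1}{2}$ (the left-hand side is decreasing in $\dd_{i-2}$ and maximized at $\dd_{i-2}=\dd_{i-1}+1$). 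For an ancestor at level $j\le i-3$ I would apply Condition~2 (Lemma~\ref{lm:condition_2}): its hypotheses $\dd_j\ge\dd_{j+1}+1$ and $|T_{j+2}|\ge 2|T_i|$ follow from the strictly decreasing degrees and the size recurrence, the hypothesis that $v_i$ gains nothing by moving to level $j+1$ is the inductive hypothesis, and the hypothesis that $u_{j+2}$ gains nothing by moving to level $j$ is exactly Condition~1 applied to $u_{j+2}$ and its grandparent.

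For the non-ancestor internal deviations (iii) I would first dispatch the two anchoring cases: a swap to a sibling is excluded by Condition~5 (Lemma~\ref{lm:condition_5}, using $\dd_j\le 2\dd_{j+1}+1$), a swap to a non-sibling at the same level $i$ is excluded by Lemma~\ref{lm:balanced_trees_lateral_swap_towards_non_sibling_vertex_same_level}, and a swap to a non-ancestor at level $i-1$ is excluded by Condition~6 (Lemma~\ref{lm:balanced_trees_lateral_swap_towards_non_sibling_vertex_same_level_minus_one}). Feeding these into Corollary~\ref{cor:extremal_balanced_trees_lateral_swaps_towards_internal_vertices} (whose hypothesis $\dd_{i+1}<\dd_i\le 2\dd_{i+1}+1$ is our standing assumption) then rules out a swap to any internal node $u_j$ with $\dd_j>0$ and $j$ below the least common ancestor. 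Finally, for the leaf deviations (iv) I would split on the level of $v_i$: a leaf moving to another leaf is handled by Lemma~\ref{lm:balanced_tree_lateral_swap_of_leaf_towards_leaf}, a level-$(h-1)$ node by Lemma~\ref{lm:balanced_tree_lateral_swap_of_light_subtree_towards_leaf}, and a node with $i\le h-2$ by the comparison in Lemma~\ref{lm:balanced_tree_lateral_swap_heavy_subtree_towards_leaf}, which bounds the cost of entering a leaf by the (already excluded) cost of entering the level-$(h-2)$ internal node through $\frac{3}{|T_i|}\le\frac{2}{|T_i|+2}+\frac{1}{|T_i|+1}+\frac{1}{|T_i|}$ for $|T_i|\ge 5$. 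Collecting all four cases shows that no agent can strictly decrease its cost, so $T$ is stable.

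The main obstacle I anticipate is the careful reconciliation of the extremal ($\dd_j=2\dd_{j+1}+1$) and non-extremal ($\dd_j\le 2\dd_{j+1}$) regimes, since several cited lemmas are phrased for one regime only; in particular, the heavy-subtree-to-leaf bound must draw its precondition ``moving to the level-$(h-2)$ node is not improving'' from the extremal internal-swap machinery (Corollary~\ref{cor:extremal_balanced_trees_lateral_swaps_towards_internal_vertices}) rather than from the non-extremal Corollary~\ref{cor:balanced_trees_lateral_swaps_towards_internal_vertices}. The remaining effort is bookkeeping: verifying each lemma's size and degree premises from the single recurrence $|T_i|=1+\dd_i|T_{i+1}|$, and checking that the induction in case (ii) is well-founded all the way down to the root.
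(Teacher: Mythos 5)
Your proposal follows the same architecture as the paper's proof --- the same classification of deviations, the same battery of lemmas, the ancestor induction anchored in Condition~1 and advanced by Condition~2, and the leaf cases split exactly as in the paper; your verification of Condition~1's size premise (via monotonicity of $\frac{\dd_{i-2}}{\dd_{i-2}+1-\dd_{i-1}}$ in $\dd_{i-2}$) is correct, and your observation about the extremal/non-extremal mismatch in Lemma~\ref{lm:balanced_tree_lateral_swap_heavy_subtree_towards_leaf} is a fair reading of a genuine looseness in the paper. However, there is one genuine gap in your case~(iii): internal nodes that are \emph{proper descendants of a sibling} of $v_i$, i.e., targets $u_j$ with $i<j<h$ lying inside the subtree rooted at a sibling $u_i$, are covered by none of the results you invoke. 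Both Corollary~\ref{cor:balanced_trees_lateral_swaps_towards_internal_vertices} and Corollary~\ref{cor:extremal_balanced_trees_lateral_swaps_towards_internal_vertices} hypothesize a node $u_{i-1}$ on the target's leaf-to-root path that is \emph{not} an ancestor of $v_i$; for any target inside a sibling's subtree, the level-$(i-1)$ node on that path is precisely $v_i$'s own parent, so the corollaries say nothing about exactly these targets. Condition~5 disposes only of the sibling itself, and Condition~6 together with Lemma~\ref{lm:balanced_trees_lateral_swap_towards_non_sibling_vertex_same_level} only of levels $i-1$ and $i$.

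The paper closes this case with a separate induction down the sibling's subtree: the base case $j=i$ is Condition~5, and the inductive step propagates non-profitability from $u_{j-1}$ to its child $u_j$ via Lemma~\ref{lm:condition_3} (the paper's text cites Lemma~\ref{lm:condition_4} at this point, but the premise it verifies, $|T_i|\geq \frac{\dd_{j-1}-\dd_j}{\dd_j}|T_j|$, is that of Lemma~\ref{lm:condition_3}). That premise holds because for $j>i$ one has $|T_i| > \dd_i|T_{i+1}| \geq \dd_{j-1}|T_j| \geq \frac{\dd_{j-1}-\dd_j}{\dd_j}|T_j|$, using that degrees and subtree sizes are monotone along root-to-leaf paths and $\dd_j\geq 1$. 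You have all the tools at hand to add this step, but as written your case analysis does not reach these deviations, so the proof is incomplete without it.
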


\begin{proof}
	Let $v_i$ be a fixed node in $T$. We prove that $v_i$ is playing her best response in $T$. The proof is by cases.

	In the first case, we prove that it is no profitable for $v_i$ to swap her edge towards any leaf of $T$. Clearly, we only need to consider leaves that are not descendants of $v_i$. Lemma~\ref{lm:balanced_tree_lateral_swap_of_leaf_towards_leaf} covers the case $i=h$. Lemma~\ref{lm:balanced_tree_lateral_swap_of_light_subtree_towards_leaf} covers the case $i=h-1$. Lemma~\ref{lm:balanced_tree_lateral_swap_heavy_subtree_towards_leaf} covers the case $i\leq h-2$.
	
	In the second case, we prove that it is not profitable for $v_i$ to swap her edge towards any internal node of $T$. Clearly, we only need to consider internal nodes that are not descendants of $v_i$. We divide the proof into three cases. 
	
	We start proving that it is not profitable for $v_i$ to swap her edge towards any of its ancestor nodes~$v_j$, with $j \leq i-2$. The proof is by induction on $j$.
	The base case is when $j=i-2$. In Lemma~\ref{lm:condition_1}, we have proved that it is not profitable for any node $v_i$ to swap her edge towards its ancestor $v_{i-2}$ under the condition $|T_{i-1}|\geq\frac{\dd_{i-2}}{\dd_{i-2}+1-\dd_{i-1}}|T_{i}|$. 
	We prove that this condition holds for every $i$. First of all, we observe that $\dd_{i-2}+1-\dd_{i-1}\geq \dd_{i-1}+1+1-\dd_{i-1}=2$. Therefore,
	\[
	|T_{i-1}| > \dd_{i-1}|T_{i}| \geq \frac{\dd_{i-2}}{2} |T_{i}| \geq \frac{\dd_{i-2}}{\dd_{i-2}+1-\dd_{i-1}}|T_{i}|.
	\]
	Hence, it is not profitable for $v_i$ to swap towards its ancestor $v_{i-2}$.
	We prove the inductive case $j \leq i-2$, i.e., $i\geq j+3$. We observe that conditions (1-4) of Lemma~\ref{lm:condition_2} are all satisfied. Therefore, it is not profitable for $v_i$ to swap towards its ancestor $v_j$.
	
    Let $u_i$ be a sibling of $v_i$ in $T$ and let $u_j$, with $i \leq j < h$ be any descendant of $u_i$ in $T$ ($u_j$ may also be equal to $u_i$). We prove by induction on $j$ that it is not profitable for $v_i$ to swap towards $u_j$. 
    The base case $j=i$ has already been proved in Lemma~\ref{lm:condition_5}. For the inductive case $i<j$, we simply have to prove that the condition $|T_{i}|\geq \frac{\dd_{j-1}-\dd_j}{\dd_j}|T_{j}|$ of  Lemma~\ref{lm:condition_4} is satisfied. Such a condition holds; indeed, since $\dd_j \geq 1$ and $\dd_{j-1}\leq 2\dd_j$, we have that
\[
|T_{i}|\geq \dd_i|T_{i+1}| \geq \dd_{j-1} |T_{j}| \geq \frac{\dd_{j-1}-\dd_j}{\dd_j}|T_{j+1}|.
\]

	Let $u_{i-1}$ be a node of $T$ that is not an ancestor of $v_i$ in $T$, and let $k$ be the least common ancestor of $v_i$ and $u_{i-1}$. By Lemma~\ref{lm:balanced_trees_lateral_swap_towards_non_sibling_vertex_same_level_minus_one}, it is not profitable for $v_i$ to swap towards $u_{i-1}$. Furthermore, if $u_i$ is a child of $u_{i-1}$, then, by Lemma~\ref{lm:balanced_trees_lateral_swap_towards_non_sibling_vertex_same_level}, it is not profitable for $v_i$ to swap towards $u_{i-1}$. Finally, since all the conditions of Corollary~\ref{cor:balanced_trees_lateral_swaps_towards_internal_vertices} and Corollary~\ref{cor:extremal_balanced_trees_lateral_swaps_towards_internal_vertices} are satisfied, we have that it is not profitable for $v_i$ to swap towards $u_j$ for every $j > k$.
	
	The claim follows. 
\end{proof}

\noindent From Theorem~\ref{thm:some_balanced_trees_are_NE} we derive the following corollary.

\begin{corollary}\label{cor:NE_exists_for_inf_many_n}
The \SNCG with $n$ agents admits a NE for infinitely many values of $n \in \mathbb{N}$.
\end{corollary}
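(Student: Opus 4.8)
The plan is to invoke Theorem~\ref{thm:some_balanced_trees_are_NE} as a black box and exhibit, for arbitrarily large trees, an admissible degree sequence; each such sequence then yields a stable balanced tree on a different number of agents, and the corresponding values of $n$ form an infinite set. Concretely, the theorem guarantees stability for every balanced tree whose degree sequence has the shape $(0,1,2,4,\dd_{h-4},\dots,\dd_0)$ with $\dd_{j+1} < \dd_j \leq 2\dd_{j+1}+1$ for all $j \leq h-4$, so it suffices to produce one such sequence for each height $h$.

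First I would fix the forced prefix $(0,1,2,4)$ and, for every height $h \geq 3$, complete the sequence by the minimal strictly increasing rule $\dd_j := \dd_{j+1}+1$ for $j \leq h-4$ (for $h=3$ the completion is empty and the base tree $(0,1,2,4)$ is used directly). This choice satisfies the hypotheses of the theorem for free: $\dd_{j+1} < \dd_{j+1}+1 = \dd_j$ gives strict monotonicity, while $\dd_j = \dd_{j+1}+1 \leq 2\dd_{j+1}+1$ holds because $\dd_{j+1} \geq 0$. Hence for each $h \geq 3$ we obtain an admissible sequence and, by Theorem~\ref{thm:some_balanced_trees_are_NE}, a stable balanced tree $T_h$.

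It then remains to check that these trees realize infinitely many agent counts. In a balanced tree the number of nodes at level $i$ equals $\prod_{j=0}^{i-1}\dd_j$, so the number of agents is
\[
n(h) \;=\; \sum_{i=1}^{h}\prod_{j=0}^{i-1}\dd_j \;\geq\; \prod_{j=0}^{h-1}\dd_j .
\]
For the minimal-increment sequence every $\dd_j$ with $j \leq h-3$ is at least $4$ while $\dd_{h-2}=2$, so the number of leaves alone is at least $2\cdot 4^{\,h-2}$, which tends to infinity with $h$. Hence $n(h) \to \infty$, so $\{\,n(h) : h \geq 3\,\}$ is an infinite set of integers, each admitting a NE, and the corollary follows.

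Because all deviation arguments are already absorbed into Theorem~\ref{thm:some_balanced_trees_are_NE}, the only work here is the elementary verification that the constructed sequences are admissible and that the resulting agent counts grow without bound; neither step presents a genuine obstacle, so I expect the proof to be a short corollary rather than a place where new difficulty arises.
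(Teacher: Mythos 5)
Your proposal is correct and follows essentially the same route as the paper: the paper derives the corollary directly from Theorem~\ref{thm:some_balanced_trees_are_NE} (observing that stable balanced trees exist for arbitrarily large heights, hence for infinitely many $n$), and your argument is exactly this, with the minimal-increment degree sequence and the leaf-count bound supplying the details the paper leaves implicit. The admissibility check and the growth bound $n(h) \geq 2 \cdot 4^{h-2}$ are both valid.
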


\noindent By Corollary~\ref{cor:NE_exists_for_inf_many_n}, we observe that NE exists for all $n$ that admit an existence of a balanced tree. Intuitively, a minor modification of a balanced tree, e.g., removing a subset of leaf nodes, keep the tree stable. Moreover, for $n\geq 19$ we have found several non-isomorphic equilibrium trees in each case. The number of non-isomorphic equilibria grows with $n$, which indicates that for growing $n$ also the number of possibilities how to combine suitable equilibrium trees into larger equilibrium trees grows.
Therefore, we conjecture the existence of stable trees for all values of $n$ except for $n=16$ and $n=18$. We believe that this conjecture can be proven by a dynamic programming approach that exploits the different possibilities of how equilibrium sub-trees can be combined into larger equilibrium trees.

\begin{conjecture}\label{conj:existence}
	For any $n \in \mathbb{N}$, with $n\neq 16$ and $n\neq 18$ a pure NE exists in the \SNCG. 
\end{conjecture}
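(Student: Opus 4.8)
The plan is to prove the conjecture by strong induction on $n$, turning the composability result of Lemma~\ref{thm:subtreeNE} into a constructive \emph{gluing} principle. By that lemma every stable tree decomposes at the root into stable subtrees, so the converse direction I would establish is: if one attaches pairwise stable subtrees $T(c_1),\dots,T(c_d)$ to a fresh root $r$, with $d=\indeg(r)\ge 2$ and sizes $n_1\ge\dots\ge n_d$ summing to $n-1$, then the combined tree is stable provided a finite list of size-and-degree inequalities among the branches holds. The point is that any improving move in the combined tree falls into exactly four classes: a swap staying inside one branch $T(c_i)$, a swap towards the root, a lateral swap into another branch $T(c_j)$, and a sibling rearrangement of the $c_i$ themselves. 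The first class is inert because, by the argument already used in the proof of Lemma~\ref{thm:subtreeNE}, an in-branch swap has the same cost change in the full tree as in the stable subgame of $T(c_i)$; the remaining three classes are precisely what Conditions~1--4 (Lemmas~\ref{lm:condition_1}--\ref{lm:condition_4}), the sibling bound of Lemma~\ref{thm:deg_up}, and the leaf-swap lemmas were designed to rule out.

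First I would isolate a \emph{gluing lemma} stating the explicit hypotheses under which attachment preserves stability: essentially monotone branch sizes, the sibling constraint $\indeg(r)\le \indeg(c_j)\bigl(1+|T(c_i)|/|T(c_j)|\bigr)+1$ from Lemma~\ref{thm:deg_up}, and the ratio conditions $|T(c_i)|\ge \frac{\dd_{j-1}-\dd_j}{\dd_j}|T(c_j)|$ (and their reverses) that feed Conditions~3 and~4. Conditions~1--4 are already stated for arbitrary trees, so the only genuinely new analytic work is to re-derive the geometric-decay estimates of the balanced-tree leaf- and sibling-swap lemmas (the $1/2^{i-\ell}$ telescoping bounds) under the weaker, non-balanced hypotheses produced by the gluing. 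Second, I would assemble a sufficiently rich \emph{library} of attachable subtrees: starting from the extremal and non-extremal balanced trees of Theorem~\ref{thm:some_balanced_trees_are_NE} and trimming leaves one at a time, I would show, by an incremental version of the same swap conditions, that deletion of a carefully chosen leaf preserves stability as long as the degree window $\dd_{j+1}<\dd_j\le 2\dd_{j+1}+1$ is maintained. This should yield stable trees of \emph{every} size in a dense range below each balanced size, together with controllable root in-degree and largest-branch size.

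With this library in hand, the induction step reduces to a purely combinatorial feasibility question: for a target $n\notin\{16,18\}$, choose a root degree $d$ and a partition $n-1=n_1+\dots+n_d$ into library sizes so that the gluing inequalities hold. I would encode the admissible tuples (size, root in-degree, largest-branch size) as states of a dynamic program and argue that the library's density makes the partition always realizable for large $n$; the finitely many small cases, including the verification that $16$ and $18$ are the \emph{only} sizes with no feasible partition, are dispatched by the explicit trees of Figure~\ref{fig:sample_trees} and the exhaustive check recorded in the remark for $1\le n\le 100$.

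The hard part will be the coupling between the analytic and the combinatorial layers. The swap conditions demand controlled ratios $|T(c_i)|/|T(c_j)|$ and $\indeg(c_i)/\indeg(c_j)$ across heterogeneous branches, yet the size-partition freedom that lets us hit an arbitrary $n$ tends to force unbalanced branches whose ratios drift out of the range where the $1/2^{i-\ell}$ decay closes. Making these two requirements compatible---in particular proving that leaf trimming can adjust a branch's size without ever pushing a degree out of the window $\dd_{j+1}<\dd_j\le 2\dd_{j+1}+1$ or violating a neighbour's ratio constraint---is the crux, and I expect it to require a more uniform, parametrized family of gluing lemmas than the case-by-case conditions proved for the balanced trees above.
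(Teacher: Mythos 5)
The statement you are trying to prove is a \emph{conjecture} that the paper explicitly leaves open: the paper proves non-existence for $n=16$ and $n=18$ (Theorem~\ref{thm:non_existence_stable_tree_for_n_equal_16} and the brute-force remark), proves existence only for the sparse family of balanced trees of Theorem~\ref{thm:some_balanced_trees_are_NE}, and then merely remarks that it \emph{believes} a dynamic-programming combination of equilibrium subtrees could settle the general case. So there is no paper proof to compare against, and your proposal --- which follows exactly that suggested route --- is a research program, not a proof. Its central step, the ``gluing lemma,'' is precisely what is missing and what is genuinely hard. Note that the converse of Lemma~\ref{thm:subtreeNE} is false: attaching stable subtrees to a fresh root need not produce a stable tree, and the paper's own $n=16$ analysis is a proof of exactly this failure (the tree of Figure~\ref{img:n=16} consists of three stable subtrees of sizes $5,5,6$ hanging off the root, yet agent $9$ has an improving move; every other partition of $15$ into stable branches fails similarly). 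Consequently, any finite list of size-and-degree inequalities that guarantees stability of a glued tree must be restrictive enough to be infeasible at $n=16$ and $n=18$ yet feasible for every other $n$; you never exhibit such a list, let alone prove both directions of this feasibility dichotomy. The analytic tools you invoke do not supply it: Conditions~3--6 and the leaf-swap lemmas (Lemmas~\ref{lm:condition_5}--\ref{lm:balanced_tree_lateral_swap_of_light_subtree_towards_leaf}) control lateral swaps only under the balancedness hypotheses $\dd_{j+1} < \dd_j \leq 2\dd_{j+1}+1$ at \emph{every} level together with the rigid bottom profile $(0,1,2,4)$; for heterogeneous glued branches the telescoping $1/2^{i-\ell}$ decay on which those proofs rest is simply not available, as you yourself concede in your final paragraph.

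The second unproven pillar, the ``library'' of stable trees of every size obtained by trimming leaves from balanced trees, is equally problematic. Removing a single leaf destroys balancedness, so none of the stability lemmas of Section~\ref{sec:balanced_trees} applies to the trimmed tree; the paper's remark that minor modifications ``keep the tree stable'' is explicitly labelled intuition, not a result. Worse, the sizes of the balanced trees in the family are extremely sparse ($2, 5, 21, 190, 3611, 140830, 11125571, \dots$, growing roughly quadratically in the exponent), so covering ``every size in a dense range below each balanced size'' would require deleting a constant fraction of all nodes --- far beyond any perturbative argument --- while simultaneously keeping every degree inside the window $\dd_{j+1} < \dd_j \leq 2\dd_{j+1}+1$ and every branch-size ratio inside the regime where Conditions~3 and~4 close. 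Until you (i) state and prove a gluing lemma whose hypotheses can be verified for non-balanced branches, (ii) prove a trimming lemma that preserves stability across the huge size gaps, and (iii) carry out the feasibility induction that provably excludes only $n\in\{16,18\}$, the argument is a plausible outline of the open problem rather than a solution to it.
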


\section{Quality of Equilibrium Trees}\label{sec:quality_of_NE}
In this section we provide results on the quality of stable networks. In particular, we prove a constant upper bound on the PoA and give lower bounds on the PoA and PoS. Furthermore, we prove an upper bound on the PoS for certain balanced trees.
We first observe that any network in which at least one node has in-degree 2 is not a social optimum. Hence, a Hamilton path is the social optimum. 
\begin{theorem}\label{thm:opt_is_a_path}
	Any Hamiltonian path having the root $r$ as one endnode is a social optimum.
\end{theorem}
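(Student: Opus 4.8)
The plan is to work directly with the closed-form expression for the social cost derived above, namely $SC(T)=\sum_{v\in V}(\indeg(v))^2$, and to reduce the whole optimization to a one-line inequality on the in-degree sequence. The only structural fact I need is that any spanning tree rooted at $r$ on the $n+1$ nodes of $V$ has exactly $n$ edges, so that $\sum_{v\in V}\indeg(v)=n$ no matter what the tree looks like; this holds because each of the $n$ non-root agents contributes exactly one activated out-edge.

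First I would establish a universal lower bound $SC(T)\ge n$ valid for \emph{every} spanning tree $T$ rooted at $r$. This follows from the fact that every in-degree is a nonnegative integer, and for any nonnegative integer $x$ we have $x^2\ge x$ (since $x^2-x=x(x-1)\ge 0$). Summing this over all nodes yields
\[
SC(T)=\sum_{v\in V}(\indeg(v))^2 \;\ge\; \sum_{v\in V}\indeg(v)=n.
\]

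Next I would exhibit a tree that meets this bound. In a Hamiltonian path that has $r$ as one endpoint, every node except the opposite endpoint has in-degree exactly $1$, and that opposite endpoint (the unique leaf) has in-degree $0$. Hence all in-degrees lie in $\{0,1\}$, the inequality $x^2\ge x$ is tight term by term, and $SC=\sum_{v\in V}\indeg(v)=n$. Combining this with the lower bound shows that any such Hamiltonian path attains the minimum possible social cost and is therefore a social optimum.

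I expect no serious obstacle here; the argument is essentially the observation that $\sum x_i^2$ with fixed integer sum is minimized by spreading the mass into $0$/$1$ values. The only points deserving care are (i) verifying that the in-degree sum equals the edge count $n$ for an arbitrary rooted spanning tree, which is immediate, and (ii) recording that equality in $x^2\ge x$ holds precisely when $x\in\{0,1\}$. Point (ii) both justifies the preliminary remark that any tree containing a node of in-degree $\ge 2$ is strictly suboptimal and, conversely, identifies the social optima as exactly the rooted spanning trees of maximum in-degree $1$, that is, the Hamiltonian paths rooted at $r$.
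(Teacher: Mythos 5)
Your proof is correct and takes essentially the same approach as the paper: both establish the universal lower bound $SC(T)\ge n$ via a trivial inequality (the paper bounds each of the $n$ edges' cost below by $1$, you bound each node's contribution via $x^2\ge \indeg$ using $x^2\ge x$ for nonnegative integers) and then observe that a Hamiltonian path ending at $r$ attains cost exactly $n$. Your remark that equality forces all in-degrees into $\{0,1\}$ also recovers the paper's observation that any network containing a node of in-degree at least $2$ is strictly suboptimal.
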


\begin{proof}
	Since any agent buys exactly one edge and any edge $\left(u,v\right)$ has cost of at least $1$, since $\indeg_T(v) \geq 1$ because of the edge itself, the social cost, i.e., the overall sum of the costs incurred by all the agents, of any solution is at least $n$. Any Hamiltonian path having the root $r$ as one endnode is a network whose social cost is exactly equal to $n$ since every edge has cost of exactly $1$. As a consequence, such a Hamiltonian path is a social optimum. Any other network in which at least one node has in-degree 2 is not a social optimum as the cost of such a network is at least $n+2$ since the cost of two edges is at least $2$ and the cost of the other $n-2$ edges is at least~$1$.
\end{proof}

\subsection{Price of Anarchy}\label{sec:PoA}
In every network $T$ for all $v \in V$, $\indeg_T(v) \leq n$, since there are exactly $n$ edges. Hence, the cost of an agent is upper bounded by $n$ and the star graph yields a trivial upper bound of $n$ for the PoA. However, we prove next a constant upper bound on the PoA.

\begin{theorem}\label{thm:PoA}
	The PoA is at most 8.62.
\end{theorem}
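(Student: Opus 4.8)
The plan is to reduce the claim to a counting inequality and then to charge the squared in-degrees downward through the tree. By Theorem~\ref{thm:opt_is_a_path} the social optimum is a Hamiltonian path of cost exactly $n$, and the identity $SC(T)=\sum_{v\in V}(\indeg(v))^2$ from the model section rewrites $\PoA$ as $\sup_n \frac{1}{n}\sum_{v\in V}(\indeg(v))^2$, the supremum being taken over stable trees $T$ on $n+1$ nodes. Hence it suffices to prove $\sum_{v}(\indeg(v))^2 \le c\,n$ for some constant $c\le 8.62$. Since $\sum_v \indeg(v)=n$, what must really be controlled is the in-degree-weighted average in-degree, i.e.\ the fact that most edges point to low-degree nodes.

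The key structural input is Corollary~\ref{cor:min_deg_of_child} (itself a consequence of Lemma~\ref{thm:deg_up}): a node $v$ of in-degree $d=\indeg(v)$ has at least $d-1$ children of in-degree at least $(d-1)/2$. Summing over the children of $v$ therefore yields $\sum_{c\text{ child of }v}\indeg(c)\ge \frac{(d-1)^2}{2}$, which lets me bound the expensive quantity $(\indeg(v))^2$ by (twice) the \emph{total in-degree of the children of $v$} plus a linear correction: from $(\indeg(v))^2=(d-1)^2+(2d-1)$ we get $(\indeg(v))^2\le 2\sum_{c\text{ child of }v}\indeg(c)+2\,\indeg(v)$, an inequality that also holds trivially for $d\in\{0,1\}$.

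I would then sum this inequality over all nodes and telescope. Since every non-root node is the child of exactly one node, $\sum_{v}\sum_{c\text{ child of }v}\indeg(c)=\sum_{w\ne r}\indeg(w)=n-\indeg(r)\le n$, while $\sum_v\indeg(v)=n$; hence $SC(T)\le 2(n-\indeg(r))+2n\le 4n$. This already gives $\PoA\le 4$, which a fortiori proves the stated bound of $8.62$. The same estimate can be packaged as an induction over subtrees with hypothesis $SC(T(v))\le 4(|T(v)|-1)-2\,\indeg(v)$: using $SC(T(v))=(\indeg(v))^2+\sum_{c}SC(T(c))$ together with $\sum_c|T(c)|=|T(v)|-1$, the inductive step reduces exactly to $d^2-2d\le (d-1)^2$, i.e.\ $0\le 1$, and Lemma~\ref{thm:subtreeNE} guarantees that each $T(c)$ is stable so that the hypothesis applies.

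The main obstacle is precisely the treatment of high-in-degree nodes: a node of in-degree $d$ contributes $d^2$ while seemingly ``saving'' only a constant per child, so no bound can close on the degree alone; the resolution is Corollary~\ref{cor:min_deg_of_child}, which forces a degree-$d$ node to spend a $\Theta(d^2)$ budget of in-degree among its children, letting the quadratic cost be amortized against the children's linear contribution. To instead recover the exact constant advertised through the Riemann zeta function, one refines the counting by grouping nodes according to in-degree and exploiting that iterating Corollary~\ref{cor:min_deg_of_child} forces the subtree of a degree-$d$ node to have super-polynomial size in $d$; the resulting per-degree bounds express $\frac{1}{n}\sum_v (\indeg(v))^2$ as a rapidly converging series dominated by a value of $\zeta$, from which $8.62$ is read off. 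I expect the charging argument above to be the cleaner route, and I would present it first, using the $\zeta$-based refinement only to match the precise constant.
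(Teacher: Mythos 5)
Your proof is correct, and it takes a genuinely different---and in fact sharper---route than the paper's. The paper bounds the \emph{maximum individual agent's cost}: it argues the maximum is attained at a leaf, follows a carefully chosen root-to-leaf path (always descending into the smallest subtree whose root has in-degree at least half the parent's minus one), and splits that path's cost by in-degree ranges---edges into in-degree-$1$ nodes cost at most $\frac{11}{6}$, edges into nodes of in-degree $2$ through $10$ cost less than $6.01$ (via the subtree-size recursion $|t_k|\geq (k-1)\,|t_{\lceil (k-1)/2\rceil}|+2$), and edges into nodes of in-degree at least $11$ cost less than $0.2$ (via the tail of $\zeta(2)$)---arriving at a per-agent bound of $8.62$ and hence the PoA bound. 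You instead bound the \emph{aggregate} $SC(T)=\sum_{v}(\indeg(v))^2$ directly by amortization: Corollary~\ref{cor:min_deg_of_child} gives $\sum_{c \text{ child of } v}\indeg(c)\geq (\indeg(v)-1)^2/2$, whence $(\indeg(v))^2\leq 2\sum_{c \text{ child of } v}\indeg(c)+2\,\indeg(v)$ at every node, and summing over $v$ (each non-root node is a child exactly once) telescopes to $SC(T)\leq 2(n-\indeg(r))+2n\leq 4n$; since the optimum is $n$ by Theorem~\ref{thm:opt_is_a_path}, this yields a PoA of at most $4$, which gives the stated $8.62$ a fortiori. Your inductive repackaging ($SC(T(v))\leq 4(|T(v)|-1)-2\,\indeg(v)$, with the step reducing to $d^2-2d\leq (d-1)^2$) is also sound, and you can even drop the appeal to Lemma~\ref{thm:subtreeNE} there, since Corollary~\ref{cor:min_deg_of_child} already applies to every node of the stable tree $T$ itself. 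As for what each approach buys: yours is more elementary (no path decomposition, no zeta function, no subtree-size recursion) and gives a strictly better constant; the paper's argument, on the other hand, produces a uniform \emph{per-agent} cost bound, which is reused later in the fairness analysis (Theorem~\ref{FR_UB_general} explicitly invokes the fact that every agent in a stable tree pays at most $8.62$)---information your aggregate bound does not supply. Your closing sketch about recovering the precise constant via a zeta-based refinement is vaguer than the rest, but it is also unnecessary, since your main argument already over-delivers.
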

\begin{proof}
	Consider a stable network $T=(V,E)$.
	By Theorem~\ref{thm:opt_is_a_path},  the social optimum is a path of cost $n$.
	Hence, it is enough to show that in $T$ the maximum cost of an agent is upper bounded by a constant.  
	We clearly have that the cost incurred by a non-leaf agent is strictly smaller than the cost incurred by any of its descendants. Therefore, the maximum costs is achieved by a leaf agent. 
	
	Consider two leafs $u$ and $v$ in $T$ such that $u$ pays the maximum cost. Let $P_v$ be the node-to-root path starting from the parent of $v$. Since $T$ is stable, $cost_{T}(u)< 1 + 1/2 + \sum_{(i,j)\in P_v} \frac{\indeg(j)}{|T(i)|} =1/2+cost_{T}(v)$. 
	Therefore, we only have to show that there exists a leaf agent $v$ with a constant cost value.
	
	We now prove that such a leaf agent always exists. By Corollary~\ref{cor:min_deg_of_child}, each node of in-degree~$d$ has at least $d-1$ children of in-degree at least $\lceil(d-1)/2\rceil$.  
	Consider a root-to-leaf path $P=(r=v_0,\ldots, v_h=v)$ where each next hop goes always towards the smallest appended subtree where the root has an in-degree of at least half of the node's in-degree minus one, i.e., for any $v_{i}\in P$, $v_{i+1}=\text{argmin}{\{|T(w)|:\ (w,v_i)\in E \text{ and }\indeg(w)\geq (\indeg(v_i)-1)/2\}}$. 
	Then for every $0\leq i\leq h-1$,  $|T(v_i)|\geq (\indeg(v_i)-1)|T(v_{i+1})|+2$.
	
	Denote by $|t_k|$ the size of the minimum stable tree with a root of in-degree $k$. Then by Corollary~\ref{cor:min_deg_of_child} and Lemma~\ref{lem:deg_parent_of_leaf} it holds that
	\begin{equation}\label{eq:t_i}
	|t_0|\geq 1,| t_{1}|\geq 2, t_k\geq (k-1)\cdot |t_{\lceil(k-1)/2\rceil}|+2.
	\end{equation}
	We show via induction that for any $k\geq 11$, $|t_k|\geq (2k+1)k^2$. Indeed, it holds that
	$
	|t_{k+1}|\geq k\cdot |t_{\lceil k/2\rceil}|+2 > (k+1)k^3/2^2\geq (2(k+1)+1)(k+1)^2,
	$ where the last inequality holds for all $k\geq 11$.
	
	The overall cost incurred by the leaf $v$ is at most the costs incurred by $v$ for all edges  $(v_i, v_{i-1})\in P$ where in-degree of $v_{i-1}$ is less than the cost incurred by $v$ for all other edges in $P$ plus $2$. 
	
	By Lemma~\ref{thm:same_degree}, each leaf-to-root path has at most three nodes of in-degree $1$, which implies that $v$ pays at most $p_1:=\frac{11}{6}$ for all edges ending in a node with in-degree equals $1$. 
	
	By Lemma~\ref{thm:same_degree}, the in-degrees of the nodes in the leaf-to-root path $P$ strictly increase with at least every second hop. This implies that for $i\leq h-4-(11-1)\cdot 2=h-24$ it is guaranteed that $\indeg(v_i)\geq 11$. Hence, starting from the first node having in-degree at least $11$ in $P$, agent $v$ pays
	\begin{align*}
		p_2:=\sum\limits_{i=1}^{h-24}\frac{\indeg(v_{i-1})}{|T(v_{i})|}
		\leq \sum\limits_{i=1}^{h-24}\frac{2\indeg(v_{i})+1}{|t_{\indeg(v_{i})}|}
		\leq \sum\limits_{i=1}^{h-24}\frac{1}{(\indeg(v_{i}))^2}
		\leq 2\sum\limits_{i=11}^{\infty}\frac{1}{i^2} <2\left(\zeta(2)-\sum\limits_{i=1}^{10}\frac{1}{i^2}\right),
	\end{align*}
	where $\zeta(s)$ is the Riemann zeta function. Hence, $p_2 < 0.2$.
	
	Finally, we need to evaluate the cost of the path $P$ for all nodes $v_i$ with the in-degree $2\leq\indeg(v_i)\leq 10$. 
	Since for every $0\leq i\leq h-1$,  $|T(v_i)|\geq (\indeg(v_i)-1)|T(v_{i+1})|+2$, each edge $(v_{i}, v_{i+1})$ in the path $P$ costs at most $$\frac{2\indeg(v_i)+1}{|T(v_i)|}\leq \frac{2\indeg(v_i)+1}{(\indeg(v_i)-1)|T(v_{i+1})|}=\frac{2}{|T(v_{i+1})|}+\frac{3}{(\indeg(v_i)-1)|T(v_{i+1})|}.$$
	By applying the inequality (\ref{eq:t_i}) and since the in-degrees of the nodes in $P$ increase at most with every second level, it holds that the total cost of the subpath is at most $p_3:=2\sum\limits_{i=2}^{9}\frac{2}{t_i}+\frac{2}{t_1}+\frac{2}{t_{10}}+\sum\limits_{i=2}^{10}\left(\frac{3}{(i-1)t_{i-1}}+\frac{3}{(i-1)t_i}\right)< 3.12 + 2.975<6.01$.
	Therefore, the total cost of the path $P$ payed by an agent $v$ is strictly less than $p_1+p_2+p_3<8.12$. This implies that the PoA is at most 8.62. \qedhere
\end{proof}

\noindent We now prove a lower bound to the PoA using the extremal stable balanced trees of Theorem~\ref{thm:some_balanced_trees_are_NE}. (See Figure~\ref{fig:9_4_2_1_apx}.) For the rest of this section, let ${\mathcal T}_h$ denote the extremal balanced tree of height $h \geq 1$ and degree sequence $\dd_h=0$, $\dd_{h-1}=1$, $\dd_{h-2}=2$ (if $h\geq 2$), $\dd_{h-3}=4$ (if $h \geq 3$), and $\dd_{i}=2\dd_{i+1}+1$ for every $i \leq h-4$. We will denote by $sc_h$ and $n_h$ the social cost and the number of nodes (root included) of ${\mathcal T}_h$.

\begin{theorem}\label{thm:lower_bound_PoA}
	The PoA is at least $2.4317$.
\end{theorem}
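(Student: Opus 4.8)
The plan is to exhibit the extremal stable balanced trees $\mathcal{T}_h$ as an explicit family of equilibria and compare their social cost to that of the social optimum. Since, by the discussion preceding this statement, each $\mathcal{T}_h$ is stable (it is the maximal case $\dd_i = 2\dd_{i+1}+1$ covered by Theorem~\ref{thm:some_balanced_trees_are_NE}), and since $\mathcal{T}_h$ has $n_h-1$ agents whose optimum is a Hamiltonian path of cost $n_h-1$ by Theorem~\ref{thm:opt_is_a_path}, we obtain for every $h$
\[
\PoA \;\ge\; \frac{SC(worstNE_{\,n_h-1})}{OPT_{\,n_h-1}} \;\ge\; \frac{sc_h}{n_h-1},
\]
so it suffices to evaluate the right-hand side for large $h$.

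To compute $sc_h$ and $n_h$ I would index levels from the leaves upward. Let $e_k$ be the in-degree of a node at distance $k$ from the leaf level, so $e_0=0$, $e_1=1$, $e_2=2$, $e_3=4$, and $e_k = 2e_{k-1}+1$ for $k\ge 4$, which gives the closed form $e_k = 5\cdot 2^{k-3}-1$ for $k\ge 3$. Writing $m_k$ and $\sigma_k$ for the number of nodes and the social cost $\sum_v(\indeg(v))^2$ of a height-$k$ subtree of $\mathcal{T}_h$, the balanced structure yields the recurrences
\[
m_k = 1 + e_k\,m_{k-1}, \qquad \sigma_k = e_k^2 + e_k\,\sigma_{k-1},
\]
with $m_0=1$, $\sigma_0=0$, and $n_h=m_h$, $sc_h=\sigma_h$.

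Both recurrences share the homogeneous multiplier $e_k$, so dividing through by $P_k := \prod_{j=1}^{k} e_j$ telescopes them into
\[
\frac{m_k}{P_k} = 1 + \sum_{j=1}^{k}\frac{1}{P_j}, \qquad \frac{\sigma_k}{P_k} = \sum_{j=1}^{k}\frac{e_j}{P_{j-1}}.
\]
Because $P_j$ grows super-geometrically, both series converge, hence
\[
\lim_{h\to\infty}\frac{sc_h}{n_h-1} \;=\; \frac{\sum_{j\ge 1} e_j/P_{j-1}}{\,1+\sum_{j\ge 1} 1/P_j\,} \;=:\; L.
\]
Evaluating these rapidly converging sums gives $L \approx 2.43174$, which establishes $\PoA \ge 2.4317$.

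The only delicate point is promoting this numerical value to a rigorous strict inequality: $L$ exceeds $2.4317$ by only about $5\cdot 10^{-5}$, so a naive approximation is not enough. The clean way to resolve this is to avoid the limit altogether and instead evaluate the exact integer ratio $sc_h/(n_h-1)$ at a single moderate height, e.g.\ $h=7$, where the values of $m_7$ and $\sigma_7$ produced by the recurrences already satisfy $sc_7/(n_7-1) > 2.4317$; alternatively one bounds the two series tails by explicit geometric estimates after truncating a handful of terms. I expect the tail/error control to be the main (and essentially only) obstacle, everything else being a direct unwinding of the recurrences.
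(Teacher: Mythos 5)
Your proposal is correct and takes essentially the same approach as the paper: both exhibit the extremal stable balanced trees ${\mathcal T}_h$ (stable by Theorem~\ref{thm:some_balanced_trees_are_NE}), compare against the Hamiltonian-path optimum of Theorem~\ref{thm:opt_is_a_path}, and make the bound rigorous by the exact integer evaluation at height $h=7$, where your recurrences reproduce exactly the paper's values and give $sc_7/(n_7-1)=27054340/11125570>2.4317$. The paper additionally extends the inequality to all $h\geq 8$ via a counting argument with copies of ${\mathcal T}_7$, but since the PoA is a supremum over $n$, the single height $h=7$ that your argument ultimately rests on already suffices.
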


\begin{proof}
	From Theorem~\ref{thm:opt_is_a_path}, we know that the social cost of a social optimum is equal to $n$. We prove the claimed lower bound by showing that, for every $h \geq 7$, the social cost of ${\mathcal T}_h$ is at least $2.4317 (n_h-1)$. By Theorem~\ref{thm:some_balanced_trees_are_NE}, we have that ${\mathcal T}_h$ is stable for every $h$. 
	
	First, we compute the exact values of $sc_7$ and $n_7$. Since $|T_{i-1}|= \dd_{i-1}|T_i|+1$ and $|T_7|=1$, we have that $|T_6|=2$, $|T_5|=5$, $|T_4|=21$, $|T_3|=190$, $|T_2|=3611$, $|T_1|= 140830$, and $|T_0|=11125571$. Hence, $n_7=11125571$. Since $sc_0=0$ and $sc_{i-1}=\dd_{h-i+1}s_i+\dd_{h-i+1}^2$, we have that $sc_1=1$, $sc_2=6$, $sc_3=40$, $sc_4=441$, $sc_5=8740$, $sc_6=342381$, and $sc_7=27054340$. Therefore, \[
	sc_7 = 27054340 > 2.4317 \cdot 11125570 = 2.4317 (n_7-1).
	\]
	
	\noindent Next, we prove that $sc_h > 2.4317 (n_h-1)$ for every $h \geq 8$. Let $z$ be the number of nodes of ${\mathcal T}_h$ that are at distance $h-7$ from the root $r$. It holds that $z$ is equal to the number of trees ${\mathcal T}_7$ that are contained in ${\mathcal T}_h$. Furthermore, the overall sum of the number of nodes of ${\mathcal T}_h$ that are at distance of at most $h-7$ from $r$ is upper bounded by $2z$. This implies that $sc_h \geq sc_7 \cdot z = 27054340 \cdot z$, while $n_h \leq n_7 \cdot z + 2z = 11125571 \cdot z + 2z= 11125573 \cdot z$. Therefore,  
	\[
	sc_h \geq 27054340 \cdot z > 2.4317 \cdot 11125573 \cdot z \geq 2.4317 \cdot n_h > 2.4317 (n_h-1).
	\]
	This completes the proof.
\end{proof}

\noindent Next, we prove an upper bound to the average agent's cost in ${\mathcal T}_h$ and provide an interesting conjecture. We define $a_h\coloneqq sc_h/(n_h-1)$ as the average agent's cost in ${\mathcal T}_h$.

\begin{lemma}\label{lem:avg_cost_in_BT}
	For every $h \geq 1$, $a_h \leq 2.4318$.
\end{lemma}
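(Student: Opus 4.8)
The plan is to exploit the self-similar recursive structure of the extremal balanced tree $\mathcal{T}_h$. The first observation is that each child of the root of $\mathcal{T}_h$ is the root of a copy of $\mathcal{T}_{h-1}$: peeling off the root and relabelling levels shifts the prescribed degree sequence $(0,1,2,4,\dd_{h-4},\dots)$ into exactly the degree sequence defining the height-$(h-1)$ extremal tree, as one checks level by level from the leaves (the top four levels match the fixed values $0,1,2,4$ and the recursion $\dd_i=2\dd_{i+1}+1$ is inherited). Writing $D_h$ for the root in-degree of $\mathcal{T}_h$, so that $D_1=1$, $D_2=2$, $D_3=4$ and $D_h=2D_{h-1}+1$ for $h\geq4$, this self-similarity yields the clean recurrences
\begin{equation*}
n_h = 1 + D_h\, n_{h-1}, \qquad sc_h = D_h^2 + D_h\, sc_{h-1},
\end{equation*}
because the root contributes $D_h^2=(\indeg r)^2$ to the social cost while the $D_h$ child-copies contribute $D_h\,sc_{h-1}$. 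These reproduce the tabulated values $n_7=11125571$ and $sc_7=27054340$ from the proof of Theorem~\ref{thm:lower_bound_PoA}.

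Next I would pass to a one-step recurrence for the averages. Since $n_h-1=D_h n_{h-1}$, dividing the two recurrences gives $a_h=\frac{D_h+sc_{h-1}}{n_{h-1}}$, and substituting $sc_{h-1}=a_{h-1}(n_{h-1}-1)$ turns this into
\begin{equation*}
a_h = a_{h-1} + \frac{D_h - a_{h-1}}{n_{h-1}}.
\end{equation*}
As $D_h$ grows like $2^h$ while every $a_{h-1}<3$, each increment is positive, so $(a_h)_h$ is strictly increasing and its supremum equals its limit; it therefore suffices to bound the limit. The increments are moreover extremely small: using $D_h=2D_{h-1}+1\leq 3D_{h-1}$ together with $n_{h-1}>D_{h-1}n_{h-2}$, I obtain for $h\geq8$
\begin{equation*}
a_h-a_{h-1} < \frac{D_h}{n_{h-1}} < \frac{3D_{h-1}}{D_{h-1}\,n_{h-2}} = \frac{3}{n_{h-2}}.
\end{equation*}

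Finally I would telescope from the exactly computable value $a_7 = 27054340/11125570 < 2.431727$. Summing the increment bound yields
\begin{equation*}
\lim_{h\to\infty} a_h \;\leq\; a_7 + \sum_{h\geq8}\frac{3}{n_{h-2}} \;=\; a_7 + 3\sum_{j\geq6}\frac{1}{n_j},
\end{equation*}
and since the $n_j$ grow doubly exponentially this tail is dominated by its leading term $3/n_6=3/140830$, so the whole tail stays below $2.2\cdot 10^{-5}$. Hence $\lim_h a_h < 2.431727 + 0.000022 < 2.4318$, and by monotonicity $a_h \leq 2.4318$ for every $h\geq 1$, the earlier values $a_1,\dots,a_6$ all lying below $a_7$. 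The conceptual content — the self-similar peeling giving the recurrences, and the telescoping of a rapidly decaying increment — is routine; the only delicate point is numerical, namely pinning $a_7$ down precisely enough and checking that $a_7$ plus the convergent tail stays strictly under the stated constant, which the estimates above already secure with a comfortable margin.
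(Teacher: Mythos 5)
Your proposal is correct and follows essentially the same route as the paper: both exploit the recursive structure $n_h = D_h n_{h-1}+1$, $sc_h = D_h sc_{h-1} + D_h^2$, anchor at the exactly computed $a_7 = 27054340/11125570 < 2.43173$, bound each increment by $3/n_{h-2}$ using $D_h \leq 3D_{h-1}$ and $n_{h-1} > D_{h-1}n_{h-2}$, and sum the rapidly decaying tail (the paper dominates it by the geometric series $\sum_{h\geq 7} \tfrac{1}{20000\cdot 2^{h-6}} = 0.00005$, you sum $3\sum_{j\geq 6} 1/n_j$ directly — a cosmetic difference).
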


\begin{proof}
	From the proof of Theorem~\ref{thm:lower_bound_PoA} we have that 
	\begin{align*}
		a_1=&1/(2-1)=1 \\
		a_2=&6/(5-1)=1.5\\
		a_3=&40/(21-1)=2\\
		a_4=&441/(190-1)\leq 2.34\\
		a_5=&8740/(3611-1)\leq 2.43\\
		a_6=&342381/140830 \leq 2.4312\\
		a_7=&27054340/(11125571-1)\leq 2.43173.
	\end{align*} We now prove by induction that $a_{h+1} \leq a_h+0.00005 \cdot \frac{1}{2^{h-7}}=a_h + \frac{1}{20,000\cdot 2^{h-6}}$ for every $h \geq 7$. Thus, showing that $a_h \leq 2.4318$ for every $h$. Since $\dd_0 \leq 3\dd_1$ and $n_h > \dd_1 n_{h-1}$, we have that 
	\[
	a_{h+1} \leq \frac{sc_{h+1}}{n_{h+1}-1} = \frac{\dd_0 sc_h + \dd_0^2}{\dd_0 n_h+1-1}=\frac{sc_h+\dd_0}{n_h} \leq a_h+\frac{\dd_0}{n_h} < a_h+\frac{3\dd_1}{\dd_1 n_{h-1}}\leq a_h + \frac{3}{n_{h-1}}.
	\] 
	We now complete the proof by showing via induction on $h$ that $\frac{3}{n_{h-1}}\leq \frac{1}{20000 \cdot 2^{h-6}}$ for every $h \geq 7$. For the base case $h=7$, $n_6=140830$ and therefore, $\frac{3}{140830} \leq \frac{1}{20000 \cdot 2}$. Now, if we assume that $\frac{3}{n_{h-1}}\leq \frac{1}{20000 \cdot 2^{h-6}}$, since $n_h > \dd_1 n_{h-1}$ and $\dd_1 \geq 2$, we have that
	\[
	\frac{3}{n_h} < \frac{3}{\dd_1 n_{h-1}} \leq \frac{1}{\dd_1 \cdot 20000 \cdot 2^{h-6}} < \frac{1}{20000 \cdot 2^{h+1-6}}.
	\]
	This completes the proof.
\end{proof}

\subsection{Price of Stability}\label{sec:PoS}
\noindent We now turn our focus to the PoS and prove a lower bound.
\begin{theorem}\label{thm:LB_PoS}
	The PoS is at least $\frac{7}{5}-\varepsilon$, for $\varepsilon\in \Theta(1/n)$.
\end{theorem}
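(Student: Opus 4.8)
The plan is to establish a \emph{universal} lower bound on the social cost of every equilibrium, which suffices because $\mathrm{PoS}=\sup_n SC(bestNE_n)/OPT_n$ and $bestNE_n$ is itself a stable network. By Theorem~\ref{thm:opt_is_a_path} we have $OPT_n=n$, so it is enough to prove that every stable tree $T$ on $n$ agents satisfies $SC(T)\ge \tfrac{7}{5}n-\tfrac{6}{5}$; dividing by $OPT_n=n$ then yields $SC(bestNE_n)/OPT_n\ge \tfrac{7}{5}-\tfrac{6}{5n}$, i.e.\ the claimed bound with $\varepsilon=\tfrac{6}{5n}\in\Theta(1/n)$, and letting $n\to\infty$ along the infinitely many values admitting equilibria (Corollary~\ref{cor:NE_exists_for_inf_many_n}) gives $\mathrm{PoS}\ge\tfrac{7}{5}$. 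Throughout I write $n_d$ for the number of nodes of in-degree $d$, so that $SC(T)=\sum_{d\ge 0}d^2 n_d$ by the closed form for the social cost, the edge count gives $\sum_{d\ge0}d\,n_d=n$, and comparing the number of nodes with the number of edges yields the identity $n_0=1+\sum_{d\ge2}(d-1)n_d$.

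The combinatorial heart of the argument is the inequality $n_1\le 3n_0$, bounding the number of in-degree-$1$ nodes by three times the number of leaves. First, an in-degree-$1$ node cannot sit directly above a node of in-degree at least $2$: the in-degree sequence along any leaf-to-root path is monotonically non-decreasing towards the root (Corollary~\ref{corollary:degree_seguence_monotonic}, with the comparison nearest the root recovered directly from Lemma~\ref{lemma:degree_formula}), and the single child of an in-degree-$1$ node is exactly the next vertex on such a path, so it has in-degree $\le 1$. Following children downwards one therefore only meets in-degree-$\le 1$ nodes until reaching a leaf, so every in-degree-$1$ node lies on a chain of in-degree-$1$ nodes terminating in a leaf. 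Second, each such chain has at most three nodes: a chain of four would produce four consecutive equal in-degrees along a leaf-to-root path inside a subtree of size $>4$, contradicting Lemma~\ref{thm:same_degree}. Since an in-degree-$1$ node has a unique child, each chain determines a distinct leaf; the chains are thus in bijection with the leaves and $n_1=\sum_{\text{leaves}}(\text{chain length})\le 3n_0$.

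With this in hand the bound follows from a short optimization. Substituting $n_1\le 3n_0=3+3\sum_{d\ge2}(d-1)n_d$ (which is used through $-\tfrac{2}{5}n_1\ge-\tfrac{6}{5}-\tfrac{6}{5}\sum_{d\ge2}(d-1)n_d$) gives
\begin{align*}
SC(T)-\tfrac{7}{5}n = -\tfrac{2}{5}n_1 + \sum_{d\ge 2} d\Bigl(d-\tfrac{7}{5}\Bigr)n_d \ \ge\ -\tfrac{6}{5} + \sum_{d\ge2}\Bigl(d^2-\tfrac{13}{5}d+\tfrac{6}{5}\Bigr)n_d .
\end{align*}
The quadratic $d^2-\tfrac{13}{5}d+\tfrac{6}{5}$ has roots $\tfrac{3}{5}$ and $2$, hence is non-negative for every integer $d\ge2$ (and vanishes precisely at $d=2$), so $SC(T)\ge\tfrac{7}{5}n-\tfrac{6}{5}$ and the theorem follows. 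Equality in the counting bound forces every branching node to have in-degree exactly $2$ and every chain to have length $3$, realized by the degree profile $n_0=2,\ n_1=6,\ n_2=1$ (the stable tree with degree sequence $(0,1,1,1,2)$ on $n=8$ agents), which certifies that $\tfrac{7}{5}$ is the correct constant for this line of reasoning. The main obstacle is the rigorous proof of $n_1\le 3n_0$: one must check that the monotonicity and chain-length lemmas apply at the relevant indices — in particular the vertices nearest the root, where Corollary~\ref{corollary:degree_seguence_monotonic} is stated only for interior positions and the needed inequality must be extracted from Lemma~\ref{lemma:degree_formula} — and one must dispose of the degenerate all-chain case, i.e.\ the path, which is unstable for $n\ge 4$.
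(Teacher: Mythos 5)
Your proposal is correct in substance and rests on the same combinatorial core as the paper's proof, but the surrounding machinery is genuinely different and, frankly, cleaner. Both arguments hinge on the fact that maximal runs of in-degree-$1$ nodes sit directly above leaves and have length at most $3$ (your $n_1\le 3n_0$, the paper's $k_1\le 3k_0$), extracted from Lemma~\ref{thm:same_degree} plus degree monotonicity. The paper then proceeds by tree surgery: it repeatedly moves children of nodes of in-degree $>2$ under nodes of in-degree $\le 1$, producing a tree $T'$ of maximum in-degree $2$ with $SC(T')\le SC(T)$, and lower-bounds $SC(T')/n$ by an explicit count assuming $T'$ is ``as balanced as possible''. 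That surgery has its own weaknesses: $T'$ is no longer stable, yet Lemma~\ref{thm:same_degree} (a stability statement) is applied to it, and the surgery can in fact turn a leaf into an in-degree-$1$ node, contrary to what the paper claims. Your route avoids the surgery altogether: you work directly with the degree profile of the stable tree, use the identities $\sum_d d\,n_d=n$ and $n_0=1+\sum_{d\ge2}(d-1)n_d$, and let the quadratic $d^2-\tfrac{13}{5}d+\tfrac{6}{5}\ge 0$ (for integers $d\ge 2$) absorb all high-degree nodes. This yields the explicit bound $SC(T)\ge \tfrac75 n-\tfrac65$ for every stable tree, identifies the tight profile at $n=8$, and gives the theorem with $\varepsilon=\tfrac{6}{5n}$.

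One loose end. Your chain argument needs ``the child of an in-degree-$1$ node has in-degree $\le 1$'', and the cited results cover every consecutive pair $(\dd_i,\dd_{i+1})$ with $i\ge 1$ but not the pair $(\dd_0,\dd_1)$ at the root: Lemma~\ref{lemma:degree_formula} is stated only for $1<i<k$, so it recovers the comparison $(\dd_1,\dd_2)$, not $(\dd_0,\dd_1)$. Consequently your ``dispose of the path'' remark handles only the case where the root's in-degree-$1$ chain runs all the way to a leaf, not the case of a root of in-degree $1$ sitting above a node of in-degree $\ge 2$; such a root would be an in-degree-$1$ node on no leaf-terminating chain. This is harmless for two independent reasons, either of which closes the gap: (i) that configuration is never stable --- if the root $r$ has single child $c$ with $\indeg(c)\ge 2$, then any child $w$ of $c$ currently pays $\indeg(c)/|T(w)|+1/|T(c)| > 2/|T(w)|$ and strictly improves by connecting to $r$ directly; and (ii) even without excluding it, the root contributes at most one uncounted in-degree-$1$ node, so $n_1\le 3n_0+1$, which still gives $SC(T)\ge\tfrac75 n-\tfrac85$ and hence the claimed bound $\tfrac75-\varepsilon$ with $\varepsilon\in\Theta(1/n)$. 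With that one-line patch your proof is complete.
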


\begin{proof}
	We know that the social cost of a tree $T$ is the sum of squared in-degrees of all nodes including the root. Consider the following procedure: for each node $v$ in $T$ which has in-degree larger than~$2$, swap one of its children  $u$ to a node $v'$ of in-degree 1 or 0 closest to the root such that it does not disconnect the tree. The resulting tree $T'$ has  all nodes of in-degree at most 2, and social cost $SC(T')\leq SC(T)$. Indeed, each step changes the social cost by $(\indeg(v)-1)^2-\indeg(v)^2 + (\indeg(v')+1)^2 - \indeg(v')^2=2\indeg(v')-2\indeg(v)+2<0$, since $\indeg(v')\leq 1$, i.e., the social costs decrease. Let $h$ be the height of the maximal subtree in $T'$ such that all its nodes are of degree 2. We can assume that $T'$ is as much balanced as possible, i.e., there are no nodes with the same in-degree which differ by more than one level since otherwise we can swap nodes of the higher level to the nodes of the lower level such the number of nodes of each in-degree remains the same.
	
	The number of nodes with in-degree equals $2$ in $T'$ is $\sum\limits_{i=0}^{h-1}2^i + k_2$, where $k\geq 0$ is the number of nodes with in-degree equals $2$ at distance $h+1$ from the root. Denote $k_1$ and $k_0$ as the number of nodes with in-degree equals $1$ and leaf nodes, respectively, in~$T'$. The procedure above does not create new nodes of in-degree equals $1$ or leaf nodes in a tree. Also, by Lemma~\ref{thm:same_degree}, any sequence of nodes with in-degree equals $1$ starting from a leaf node in $T'$ contains at most $4$ nodes. It implies that $k_1\leq 3k_0\leq 3\cdot 2^h$.  
	Since the social cost of the optimal network is equal to the number of nodes, cf. the proof of Theorem~\ref{thm:PoA}, the PoS is lower bounded by
	\begin{align*}
		PoS&\geq\frac{SC(T')}{n}=\frac{2^2\cdot \left(\sum\limits_{i=0}^{h-1}2^i + k_2\right) + k_1}{\sum\limits_{i=0}^{h-1}2^i+ k_2+k_1+k_0}\geq \frac{2^2\cdot\sum\limits_{i=0}^{h-1}2^i + k_1}{\sum\limits_{i=0}^{h-1}2^i + k_1+k_0}\\
		&\geq \frac{2^2\left(2^h-1\right)+k_1}{2^h-1+k_1+k_0}\geq \frac{2^{h+2}+k_1-4}{2^h-1+k_1+2^h}\geq \frac{2^{h+2}+3\cdot 2^h-4}{2^{h+1}+3\cdot 2^h-1}>\frac{7}{5}-\varepsilon,
	\end{align*}
	where $\varepsilon\in \Theta(2^{-h})=\Theta(n^{-1})$.
\end{proof}

\noindent Next, we investigate the PoS in certain balanced trees and prove an upper bound which is strictly better than the upper bound on the PoA.
\begin{theorem}\label{thm:PoS_BT_UB}
	For all $n\in\mathbb{N}$ such that there is a balanced tree $T$ of size $n$ with the in-degree sequence  $(0,1,2,4,\dd_{h-4},\ldots,\dd_0)$, where $\dd_i\leq 2\dd_{i+1}+1$ for $i\leq h-4$, the PoS is at most 2.83.
\end{theorem}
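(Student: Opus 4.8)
The plan is to combine the two facts already established: by Theorem~\ref{thm:opt_is_a_path} the social optimum for $n$ agents is a Hamiltonian path of cost exactly $n$, and by Theorem~\ref{thm:some_balanced_trees_are_NE} the balanced tree $T$ with degree sequence $(0,1,2,4,\dd_{h-4},\dots,\dd_0)$ is a NE. Hence $bestNE_n$ has social cost at most $SC(T)$ and $PoS \le SC(T)/n$, so it suffices to prove $SC(T)/n \le 2.83$. Since $SC(T)=\sum_{v}(\indeg(v))^2=\sum_{(u,v)\in E}\indeg(v)$ and $n=|E|$, the ratio $SC(T)/n$ is exactly the average head-in-degree over all edges, which I would track subtree by subtree.

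Writing $\rho_i$ for the social cost of a level-$i$ subtree divided by its number of edges, the identities $|T_i|=1+\dd_i|T_{i+1}|$ and $sc(i)=\dd_i^2+\dd_i\,sc(i+1)$ yield the recursion $\rho_i=\rho_{i+1}+\frac{\dd_i-\rho_{i+1}}{|T_{i+1}|}$, with $SC(T)/n=\rho_0$. Because every nonempty subtree has social cost at least its number of edges, $\rho_{i+1}\ge 1$, so each increment is at most $(\dd_i-1)/|T_{i+1}|$; in particular the $\rho_i$ grow towards the root, and $\rho_0$ is the largest such ratio. First I would pin down the fixed bottom $(\dd_{h-3},\dd_{h-2},\dd_{h-1})=(4,2,1)$, for which a direct computation gives $|T_{h-1}|=2$, $|T_{h-2}|=5$, $|T_{h-3}|=21$ and $\rho_{h-3}=40/20=2$ exactly. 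The single increment at level $h-4$ then contributes at most $(\dd_{h-4}-2)/21\le 7/21=1/3$, using $\dd_{h-4}\le 2\cdot 4+1=9$, so $\rho_{h-4}\le 7/3$.

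For the remaining increments $i\le h-5$ I would bound $|T_{i+1}|\ge\prod_{j=i+1}^{h-1}\dd_j$ together with $\dd_i-1\le 2\dd_{i+1}$, which turns the $i$-th increment into at most $2/\prod_{j=i+2}^{h-1}\dd_j$. This product equals $\dd_{h-3}\dd_{h-2}\dd_{h-1}=8$ at $i=h-5$ and gains a factor $\dd_{i+1}\ge 5$ at each further step up the tree (the degrees strictly decrease towards the leaves), so the tail $\sum_{i\le h-5}$ is dominated by a geometric series of first term $2/8$ and ratio at most $1/5$, hence is smaller than $\tfrac{1}{4}\cdot\tfrac{5}{4}<1/3$. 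Summing, $\rho_0\le \rho_{h-4}+\tfrac13\le \tfrac73+\tfrac13=\tfrac83<2.83$, which proves the claim (the very short trees $h\le 4$ being handled by the exact values $\rho_{h-3}=2$, $\rho_{h-4}\le 7/3$).

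I expect the main obstacle to be the bookkeeping of the tail estimate. The bound $|T_{i+1}|\ge\prod_{j\ge i+1}\dd_j$ is lossy precisely near the bottom, where the degrees are small, so the first few increments (levels $h-3,\dots,h-5$) must be evaluated with the actual subtree sizes rather than the product bound, and one must verify that this split leaves enough slack under $2.83$ simultaneously for every admissible choice of $\dd_{h-4},\dots,\dd_0$ and every height $h$. Some care is also needed to confirm that $T$ really is one of the stable trees of Theorem~\ref{thm:some_balanced_trees_are_NE}, i.e.\ that its degrees strictly decrease towards the leaves, since it is exactly this strict decrease that drives the geometric growth of the products used in the tail.
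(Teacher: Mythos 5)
Your proposal is correct and follows essentially the same route as the paper's proof: both bound the PoS by $SC(T)/n$ for the stable balanced tree of Theorem~\ref{thm:some_balanced_trees_are_NE}, unroll the cost-to-size ratio of nested balanced subtrees level by level from the fixed $(0,1,2,4)$ bottom (whose ratio is exactly $2$), bound each per-level increment via $\dd_i\le 2\dd_{i+1}+1$ together with a lower bound on the subtree size, and sum a rapidly convergent series. If anything, your tail estimate via $|T_{i+1}|\ge\prod_{j\ge i+1}\dd_j$, which needs only the strict increase of the degrees towards the root, is more robust than the paper's intermediate size bound $|BT_{h-i}(\dd_i)|>2^{(h-i)(h-i-1)/2}$, which implicitly presumes the degrees roughly double at every level rather than merely increase.
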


\begin{proof}
	For those values $n$ such that there exist a balanced tree $T$ of size $n$ with the in-degree sequence  $(0,1,2,4,\dd_{h-4},\ldots,\dd_0)$, where $\dd_i\leq 2\dd_{i+1}+1$ for $i\leq h-4$, we can provide an upper bound for the PoS, which is strictly better than the upper bound for the PoA. Clearly, for all other values of~$n$ where there exist an equilibrium, the PoS is at most the PoA value, which is at most $8.62$.
	
	Consider a balanced tree $BT_h(\dd_0)$ rooted at a node of in-degree equal to $\dd_0$ of height $h$. Its social cost  $SC(BT_h(\dd_0))=\sum\limits_{i=0}^{h-1}\left(\dd_{i}^2\cdot\prod\limits_{j=0}^{i-1}\dd_{j}\right)=SC(BT_{h-1}(\dd_{1}))\cdot \dd_0 + \dd_0^2$, while the social cost of the corresponding \OPT, i.e., a path graph $P$, is $SC(P)=|T_h(\dd_0)|=\sum\limits_{i=0}^{h-1}\dd_{i}= |BT_{h-1}(\dd_{1})|\cdot \dd_0 + \dd_0$. It holds that
	
	\begin{align*}
		PoS &=\frac{SC(BT_h(\dd_0))}{|BT_h(\dd_0)|}=\frac{SC(BT_{h-1}(\dd_{1}))+\dd_0}{|BT_{h-1}(\dd_{1})|+1}<\frac{SC(BT_{h-1}(\dd_{1}))}{|BT_{h-1}(\dd_{1})|}+\frac{\dd_0}{|BT_{h-1}(\dd_{1})|}\\
		&<\ldots<\frac{SC(BT_3(\dd_{h-3}))}{|BT_3(\dd_{h-3})|}+\sum\limits_{i=0}^{h-4}\frac{\dd_{i}}{|BT_{h-i-1}(\dd_{i+1})|} \leq 2+\sum\limits_{i=0}^{h-4}\frac{2\dd_{i+1}+1}{|BT_{h-i-1}(\dd_{i+1})|}\\
		&= 2+\sum\limits_{i=0}^{h-4}\frac{2}{|BT_{h-i-2}(\dd_{i+2})|+1}+\sum\limits_{i=0}^{h-4}\frac{1}{|BT_{h-i-1}(\dd_{i+1})|}\\
		&=2+\sum\limits_{i=2}^{h-3}\frac{3}{|BT_{h-i}(\dd_{i})|}+\frac{1}{|BT_{h-1}(\dd_{1})|}+\frac{2}{|BT_{2}(\dd_{h-2})|}\\
		&=2.4+\sum\limits_{i=2}^{h-3}\frac{3}{|BT_{h-i}(\dd_{i})|}+\frac{1}{|BT_{h-1}(\dd_{1})|}.
	\end{align*}
	
	\noindent The size of the balanced tree with the root of in-degree $\dd_0$ of height $h$ is equal to $|BT_h(\dd_0)|=\sum\limits_{i=1}^{h-1}\prod\limits_{j=0}^i \dd_j>\prod\limits_{j=0}^{h-1} \dd_{j} > 2^{h(h-1)/2}$. Then $|BT_{h-i}(\dd_{i})|>2^{(h-i)(h-i-1)/2}$ and we get
	\begin{align*}
		PoS < 2.4+\sum\limits_{i=2}^{h-3}\frac{3}{2^{(h-i)(h-i-1)/2}}+\frac{1}{|BT_3(2)|} < 2.4+0.43+0.2=2.83. \hspace*{5cm} \qedhere
	\end{align*}
\end{proof}

\subsection{Fairness measure}\label{sec:fairness}
We investigate the Fairness Ratio which considers the cost distribution among the agents. The $\FR$ in the social optimum turns out to be equal to $nH_n$, whereas the $\FR$ in stable trees is $o(n)$ and hence stable trees admit a more fair cost-sharing. 
%We start with the statement about the social optimum.

\begin{theorem}\label{thm:FR_OPT}
The Fairness Ratio for $\OPT_n$ is  $nH_n$, where $H_n = \sum\limits_{i=1}^n\frac{1}{i}$ is the $n$-th harmonic number.
\end{theorem}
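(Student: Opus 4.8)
The plan is to invoke Theorem~\ref{thm:opt_is_a_path}, which tells us that $\OPT_n$ is (up to isomorphism) a Hamiltonian path having the root $r$ as an endnode, and then to compute the cost of every agent in such a path explicitly. I would write the path as $u_n, u_{n-1}, \dots, u_1, u_0 = r$, where each agent $u_m$ (for $1 \le m \le n$) activates the edge $(u_m, u_{m-1})$ towards the root. Since the path is simple, every node carries exactly one incoming edge except the leaf $u_n$; hence $\indeg(u_{m-1}) = 1$ for all $1 \le m \le n$, while $\indeg(u_n) = 0$.

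First I would determine the two quantities entering the cost formula along the path. The subtree rooted at $u_m$ consists precisely of $u_m, u_{m+1}, \dots, u_n$, so $|T(u_m)| = n - m + 1$. Substituting this and the in-degrees into the cost function, the agent $u_j$ at distance $j$ from the root pays
\[
\cost_T(u_j) = \sum_{m=1}^{j} \frac{\indeg(u_{m-1})}{|T(u_m)|} = \sum_{m=1}^{j} \frac{1}{n-m+1} = \sum_{k=n-j+1}^{n} \frac{1}{k} = H_n - H_{n-j},
\]
with the convention $H_0 = 0$.

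Next I would identify the extremal agents. Since $\cost_T(u_j) - \cost_T(u_{j-1}) = \frac{1}{n-j+1} > 0$, the per-agent cost is strictly increasing in the distance $j$ from the root. Consequently the minimum cost is attained at the agent $u_1$ adjacent to the root, with $\cost_T(u_1) = \frac{1}{n}$, and the maximum cost is attained at the leaf $u_n$, with $\cost_T(u_n) = H_n$. The Fairness Ratio is therefore
\[
\FR(\OPT_n) = \frac{\max_{v_i \in V} \cost_T(i)}{\min_{v_i \in V} \cost_T(i)} = \frac{H_n}{1/n} = n H_n,
\]
as claimed.

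The argument is essentially a direct computation, so I do not anticipate a genuine obstacle; the only points requiring care are fixing the edge orientations correctly (so that the subtree sizes along the path form the sequence $n, n-1, \dots, 1$ rather than its reverse, which would swap the roles of the maximum- and minimum-cost agents) and verifying the monotonicity of $\cost_T(u_j)$ in $j$, which pins down exactly which agents realize the extremes.
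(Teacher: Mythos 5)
Your proposal is correct and follows essentially the same route as the paper: both identify the social optimum as the Hamiltonian path ending at $r$, observe that agent cost is monotonically increasing with distance from the root, and take the ratio of the leaf's cost $H_n$ to the cost $\frac{1}{n}$ of the agent adjacent to the root. Your write-up is merely more explicit about the in-degrees, subtree sizes, and the closed form $H_n - H_{n-j}$, which the paper leaves implicit.
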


\begin{proof}
Consider a  path $P = u_0, u_1, \ldots, u_n$ where $r = u_0$, and $u_i$ is a node at distance $n-i$ from the root. 
Clearly $\text{cost}_P(u_{i+1}) \geq \text{cost}_P(u_{i})$ for $i \in \{1, \ldots, n-1\}$. 
Hence, for the Fairness Ratio on the path we need to consider the costs incurred by $u_0$ and $u_{n}$.
\[\FR(P)=\frac{cost_P(u_n)}{cost_P(u_1)}=\frac{\sum_{i=1}^n\frac{1}{i}}{1/n} = n\cdot H_n\leq n(\ln n+1)\qedhere\]
\end{proof}

\noindent We now turn our focus to the analysis of the class of all stable trees. Based on the lower and upper bounds for the in-degree of the root, we prove that $\FR$ is in $o(n)$. 

\begin{theorem}\label{FR_UB_general}
	The Fairness Ratio for any NE tree is at most $$\frac{8.62(n-2)\cdot \ln\ln(4\sqrt{n/5})}{\ln(4\sqrt{n/5})} - 2^{13}\cdot\left(1-\frac{2\ln\ln(4\sqrt{n/5})}{\ln(4\sqrt{n/5})}\right)\cdot \left(\frac{\ln(4\sqrt{n/5})}{\ln\ln(4\sqrt{n/5})}\right)^{\log\left(\sqrt{\frac{\ln(4\sqrt{n/5})}{\ln\ln(4\sqrt{n/5})}}\right)-5.5},$$ which is at most $8.62\cdot\frac{(n-2)\cdot \ln\ln(4\sqrt{n/5})}{\ln(4\sqrt{n/5})}$.
\end{theorem}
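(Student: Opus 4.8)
The plan is to bound the numerator and the denominator of $\FR(T)=\frac{\max_{v_i\in V}\cost_T(i)}{\min_{v_i\in V}\cost_T(i)}$ separately, using the structural machinery developed above. For the numerator no extra work is needed: the proof of Theorem~\ref{thm:PoA} already shows that every agent in a stable tree pays strictly less than $8.62$, so $\max_{v_i\in V}\cost_T(i)<8.62$. The whole difficulty therefore lies in establishing a good lower bound on the cheapest agent's cost.

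First I would argue that the minimum is attained by a child of the root. For any agent $i$, consider its leaf-to-root path: $\cost_T(i)$ is a sum of nonnegative terms, one of which is the contribution $\frac{\indeg(r)}{|T(c)|}$ of the root edge, where $c$ is the unique child of $r$ that is an ancestor of $v_i$. Since this term is exactly $\cost_T(c)$ (the node $c$ has only one edge on its path), we get $\cost_T(i)\ge \cost_T(c)=\frac{\indeg(r)}{|T(c)|}$, and hence $\min_{v_i\in V}\cost_T(i)=\frac{\indeg(r)}{\max_{c}|T(c)|}$, the maximum ranging over the children $c$ of $r$. It remains to lower bound $\indeg(r)$ and to upper bound the largest subtree hanging off the root.

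For the in-degree of the root I would invoke Theorem~\ref{thm:LB_deg_of_the_root}, giving $\indeg(r)\ge L:=\frac{\ln(4\sqrt{n/5})}{\ln\ln(4\sqrt{n/5})}$. To bound the largest child-subtree I would use Corollary~\ref{cor:min_deg_of_child}: the root has at least $\indeg(r)-1$ children of in-degree at least $(\indeg(r)-1)/2$, and by Lemma~\ref{thm:subtreeNE} each such subtree is itself stable, hence of size at least the minimum size $|t_{\lceil(\indeg(r)-1)/2\rceil}|$ of a stable tree whose root has that in-degree. Since the children's subtrees partition the $n$ non-root nodes, at least $\indeg(r)-2$ of the non-maximal children are heavy, so $\max_c|T(c)|\le n-(\indeg(r)-2)\,|t_{\lceil(\indeg(r)-1)/2\rceil}|$ up to additive $O(1)$ corrections (these, together with the $+2$ in the size recursion, are what turn the leading $n$ into $n-2$). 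Combining the three estimates yields $\FR(T)\le \frac{8.62\big(n-(\indeg(r)-2)\,|t_{\lceil(\indeg(r)-1)/2\rceil}|\big)}{\indeg(r)}$, whose right-hand side is nonincreasing in $\indeg(r)$ over the relevant range, so we may substitute $\indeg(r)=L$ to obtain the claimed expression.

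The last and most delicate step is an explicit lower bound for the minimum stable-tree size $|t_k|$. Iterating the recursion $|t_k|\ge (k-1)\,|t_{\lceil(k-1)/2\rceil}|+2$ from the proof of Theorem~\ref{thm:PoA} over the $\Theta(\log k)$ levels until the in-degree drops to $1$ gives a super-polynomial bound of the form $|t_k|\ge k^{\Theta(\log k)}$, i.e.\ growth $2^{(\log k)^2/2}$; carried out with $k\approx L/2$ this yields exactly the factor $L^{\frac12\log L-5.5}$ and the constant $2^{13}$ in the statement. Substituting everything produces the first (exact) bound, and dropping the subtracted nonnegative term then gives the clean estimate $8.62\cdot\frac{(n-2)\ln\ln(4\sqrt{n/5})}{\ln(4\sqrt{n/5})}$, whence the $o(n)$ conclusion since $L\to\infty$. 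I expect the main obstacle to be precisely this $|t_k|$ estimate: the recursion must be unrolled carefully enough that the leading $2^{(\log L)^2/2}$ growth survives the halving of the in-degree at each level (the $2^{-\binom{\log k}{2}}$ losses) and lands the stated constants rather than being swamped by them.
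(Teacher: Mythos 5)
Your proposal is correct and follows essentially the same route as the paper's proof: bound the maximum cost by $8.62$ via the PoA argument, observe that the minimum is $\indeg(r)$ divided by the largest root-child subtree, lower-bound $\indeg(r)$ by Theorem~\ref{thm:LB_deg_of_the_root}, and upper-bound that subtree by $n-(\indeg(r)-2)\,t_{\min}-O(1)$ using Corollary~\ref{cor:min_deg_of_child}. Your unrolling of the halving recursion $|t_k|\ge (k-1)|t_{\lceil (k-1)/2\rceil}|+2$ to get $t_{\min}\gtrsim 2^{(\log d_0)^2/2}$ is exactly how the paper obtains its bound (it cites the computation $|T(v_i)|>2^{(k-1)(k-2)/2}$ from the proof of Theorem~\ref{thm:UB_deg_of_the_root}, which rests on the same corollary), and it indeed yields the stated constants $2^{13}$ and exponent $\log\left(\sqrt{d_0}\right)-5.5$.
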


\begin{proof}
As shown in the proof of Theorem~\ref{thm:PoA}, the cost of any agent in a stable tree is upper bounded by $8.62$. 
Clearly the minimal cost in a stable tree is paid by a node adjacent to the root. By Theorem~\ref{thm:LB_deg_of_the_root}, the in-degree of the root $d_0$ is at least $\frac{\ln(4\sqrt{n/5})}{\ln\ln(4\sqrt{n/5})}$, 
while, by Corollary~\ref{cor:min_deg_of_child},  the maximum size of a tree adjacent to the root is at most $n-(d_0-2)t_{\min}-2$, where $t_{\min}$ is the minimum size of the tree rooted at a child of the root of in-degree at least $(d_0-1)/2$.

We will evaluate $t_{min}$. By Lemma~\ref{thm:same_degree}, $\indeg(v_i)\geq 2$ for any node $v_i$ at distance $i$ from the root, for $i\leq h-4$. 
Then by Corollary~\ref{cor:min_deg_of_child}, the size of the subtree $T(v_i)$ with the root at node $v_i$ is at least $\sum\limits_{\ell=0}^{k-1}\left(x_\ell\prod\limits_{j=0}^{\ell-1}(x_j-1)\right)$ for $k=\lfloor \log((\indeg(v_i)+1)/3)\rfloor$, where the sequence $\{x_\ell\}_{\ell=0}^k$ such that $x_\ell\geq (x_{\ell+1}-1)/2$ and $x_0 \coloneqq v_i$. 
Then, from the proof of Theorem~\ref{thm:UB_deg_of_the_root}, $|T(v_i)| > 2^{\sum_{j=0}^{k-3}(k-j-2)}=2^{\frac{(k-1)(k-2)}{2}}$.
Since $\indeg(v_1)\geq (d_0-1)/2$ and thus $k > \log((d_0+1)/6)$, we have:
\begin{align*}
t_{\min} &> 2^{\frac{(\log((d_0+1)/6)-1)(\log((d_0+1)/6)-2)}{2}} 
> \left(2^{(\log(d_0+1)-4)}\right)^{(\log(d_0+1)-5)/2}=\left((d_0+1)\cdot 2^{-4}\right)^{(\log(d_0+1)-5)/2}\\
&= (d_0+1)^{\log(d_0+1)/2}\cdot(d_0+1)^{-5/2}\cdot 2^{-2(\log(d_0+1)-5)}
=(d_0+1)^{\log(d_0+1)/2-5/2-2}\cdot 2^{10} \\
& > 2^{10}\cdot d_0^{\log(d_0)/2-4.5} 
\end{align*}

\noindent Hence, the $\FR$ for any stable tree is at most 
\begin{align*}
&\frac{8.62\left(n-2-2^{10}(d_0-2) d_0^{\log({\sqrt{d_0}})-4.5}\right)}{d_0}
=\frac{8.62(n-2) \ln\ln(4\sqrt{n/5})}{\ln(4\sqrt{n/5})} - 8.62\left(1-\frac{2}{d_0}\right)\cdot d_0^{\log({\sqrt{d_0}})-5.5}\\
&< \frac{8.62(n-2)\cdot \ln\ln(4\sqrt{n/5})}{\ln(4\sqrt{n/5})} - 2^{13}\cdot\left(1-\frac{2\ln\ln(4\sqrt{n/5})}{\ln(4\sqrt{n/5})}\right)\cdot \left(\frac{\ln(4\sqrt{n/5})}{\ln\ln(4\sqrt{n/5})}\right)^{\log\left(\sqrt{\frac{\ln(4\sqrt{n/5})}{\ln\ln(4\sqrt{n/5})}}\right)-5.5}\\
&< \frac{8.62(n-2)\cdot \ln\ln(4\sqrt{n/5})}{\ln(4\sqrt{n/5})}.\qedhere
\end{align*}

\end{proof}

\begin{theorem}\label{thm:FR_LB}
The Fairness Ratio for a stable tree is at least $n\cdot 2^{-2\sqrt{2\log(n)}}$.
\end{theorem}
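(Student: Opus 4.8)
The plan is to establish the bound constructively, by exhibiting a single family of stable trees whose Fairness Ratio already reaches the claimed value; the natural candidates are the extremal balanced trees ${\mathcal T}_h$ that were introduced for Theorem~\ref{thm:lower_bound_PoA}. Each ${\mathcal T}_h$ is stable by Theorem~\ref{thm:some_balanced_trees_are_NE}, so $\FR({\mathcal T}_h)$ is a legitimate lower bound on the Fairness Ratio achievable by a stable tree on $n=n_h$ nodes, and it suffices to prove the inequality for this (infinite) subsequence of values of $n$. The remaining work is to bound the extreme agent costs in ${\mathcal T}_h$ and to re-express them through $n$.

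First I would identify the cheapest agent. The cost of the agent at distance $i$ from the root is $\sum_{\ell=1}^{i}\frac{\dd_{\ell-1}}{|T_\ell|}$, a sum of positive terms that strictly increases with $i$; hence the minimum cost is incurred by a child of the root and equals $\frac{\dd_0}{|T_1|}$. Since $|T_0|=\dd_0|T_1|+1=n$, this minimum is precisely $\frac{\dd_0^2}{n-1}$. For the most expensive agent I only need a constant lower bound: a leaf $u_h$ pays $\frac{\dd_{h-1}}{|T_h|}=\frac{1}{1}=1$ on its first edge and $\frac{\dd_{h-2}}{|T_{h-1}|}=\frac{2}{2}=1$ on its second edge, so $\max_i\cost_{{\mathcal T}_h}(i)\ge 2$.

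Next I would turn $\dd_0$ into a function of $n$. The defining recurrence ($\dd_{h-3}=4$ and $\dd_i=2\dd_{i+1}+1$ for $i\le h-4$) solves to $\dd_0=5\cdot 2^{h-3}-1<2^h$, while the size estimate $n=|{\mathcal T}_h|>2^{h(h-1)/2}$ from the proof of Theorem~\ref{thm:PoS_BT_UB} (which follows quickly since $\dd_j\ge 2^{h-1-j}$ for every $j\le h-1$, whence $\prod_{j=0}^{h-1}\dd_j\ge 2^{h(h-1)/2}$) gives $h(h-1)<2\log n$ and hence $h<\tfrac{1}{2}\bigl(1+\sqrt{1+8\log n}\bigr)=\sqrt{2\log n}+O(1)$. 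Consequently $\dd_0^2\le c\cdot 2^{2\sqrt{2\log n}}$ for a small explicit constant $c<2$ (the factor $\tfrac{5}{8}$ in $\dd_0$ together with the $O(1)$ slack in $h$ keeps $c$ comfortably below the threshold).

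Combining the three estimates yields
\[
\FR({\mathcal T}_h)=\frac{\max_i\cost_{{\mathcal T}_h}(i)}{\min_i\cost_{{\mathcal T}_h}(i)}\ge\frac{2}{\dd_0^2/(n-1)}=\frac{2(n-1)}{\dd_0^2}\ge\frac{2(n-1)}{c}\,2^{-2\sqrt{2\log n}}\ge n\cdot 2^{-2\sqrt{2\log n}},
\]
the last step being valid for all large enough $h$ because $c<2$. The two cost computations are routine; the delicate point, which I expect to be the main obstacle, is the constant bookkeeping needed to make the leading factor come out as exactly $n$ rather than $\Omega(n)$: one must verify that $c<2$, i.e., that squaring $\dd_0$ does not consume the slack provided by the additive $O(1)$ in the bound on $h$ and by the factor $\tfrac{5}{8}$ in $\dd_0$. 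All the structural ingredients — stability of ${\mathcal T}_h$, the degree recurrence, and the size lower bound — are already available from the earlier results.
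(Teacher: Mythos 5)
There is a genuine gap here, and it is one of interpretation rather than of computation: the theorem is a \emph{universal} statement---every stable tree on $n$ agents has Fairness Ratio at least $n\cdot 2^{-2\sqrt{2\log n}}$---whereas your argument establishes only the \emph{existential} statement that the particular family of extremal balanced trees ${\mathcal T}_h$ attains this bound, and even then only for the doubly-exponentially sparse sizes $n = n_h$ and only ``for all large enough $h$''. The sentence ``it suffices to prove the inequality for this (infinite) subsequence of values of $n$'' is exactly where the proposal departs from the claim: a lower bound on the Fairness Ratio of one family of stable trees says nothing about the Fairness Ratio of other stable trees (recall that for $n\geq 19$ equilibria are not unique, and their number grows with $n$), nor about values of $n$ not of the form $n_h$. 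The paper's use of the theorem also forces the universal reading: together with Theorem~\ref{FR_UB_general} it brackets the Fairness Ratio of \emph{all} equilibrium trees between $n^{1-o(1)}$ and $o(n)$, which is what supports the narrative that every equilibrium shares cost more fairly than \OPT, yet no equilibrium is close to perfectly fair.

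The paper's own proof is universal and in fact shorter than your construction. In an arbitrary stable tree: (i) the maximum cost is incurred by a leaf (as argued in the proof of Theorem~\ref{thm:PoA}), and any leaf pays at least $1$ since by Lemma~\ref{lem:deg_parent_of_leaf} its parent has in-degree $1$; (ii) since an agent's cost strictly exceeds its parent's cost, the minimum cost is incurred by a child of the root and equals $\dd_0/t_{\max}$, where $t_{\max}\geq n/\dd_0$ is the size of the largest subtree hanging off the root, so the minimum cost is at most $\dd_0^2/n$; (iii) hence $\FR \geq n/\dd_0^2$, and the root-degree upper bound of Theorem~\ref{thm:UB_deg_of_the_root}---which holds for \emph{every} stable tree, not only for balanced ones---yields $\dd_0^2 \leq 2^{2\sqrt{2\log n}}$ and the claim. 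Your internal computations for ${\mathcal T}_h$ (the recurrence $\dd_0 = 5\cdot 2^{h-3}-1$, the size bound $n > 2^{h(h-1)/2}$, and the resulting constant $c<2$) are essentially sound, but they prove a strictly weaker statement; the ingredient your argument actually needed was a degree bound valid for all stable trees, which is precisely what Theorem~\ref{thm:UB_deg_of_the_root} provides.
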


\begin{proof}
By Theorem~\ref{thm:PoA}, the maximum cost incurred by any agent is the cost incurred by a leaf, while, by Lemma~\ref{lem:deg_parent_of_leaf}, any leaf is a child of a node with in-degree equals $1$, and thus, any leaf pays at least $1$.

The minimum cost in a stable tree is incurred by a node which is a child of the root. Hence, the minimum cost is equal to $\frac{d_0}{t_{\max}}$, where $t_{\max}$ is the size of the largest subtree rooted at a child of the root. Clearly, $t_{\max}\geq\frac{n}{d_0}$. Thus, $\frac{d_0}{t_{\max}}\leq \frac{d_0^2}{n}$. 
As a consequence, we get that $\FR \geq \frac{n}{d_0^2}$, which is at least $\frac{n}{2^{2\sqrt{2\log(n)}}}$ by Theorem~\ref{thm:UB_deg_of_the_root}.
\end{proof}

\noindent Finally, we investigate the class of stable balanced trees and prove a more precise upper bound.
\begin{theorem}\label{FR:BT}
	The Fairness Ratio for a stable balanced tree with the in-degree sequence\\ 
	$(0,1,2,4,\dd_{h-4},\ldots,\dd_0)$, where $\dd_i\leq 2\dd_{i+1}+1$ for $i\leq h-4$,  is at most $\frac{2.4318n\cdot \left(\ln\ln(4\sqrt{n/5})\right)^2}{\left(\ln(4\sqrt{n/5})\right)^2}$.
\end{theorem}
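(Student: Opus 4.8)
The plan is to estimate the two extreme agent costs separately and then insert the root-degree lower bound of Theorem~\ref{thm:LB_deg_of_the_root}. As observed in the proof of Theorem~\ref{thm:PoA}, along every leaf-to-root path the cost is non-increasing towards the root, so the minimum cost is paid by a child of the root and the maximum cost is paid by a leaf. Accordingly I would write $\FR(T)=\max_{v_i}cost_T(i)/\min_{v_i}cost_T(i)$ and control numerator and denominator in turn.

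First I would pin down the minimum cost exactly, which is where being balanced helps. In a balanced tree the $\dd_0$ children of the root root subtrees of identical size, and since these subtrees partition the $n$ non-root nodes, each has $n/\dd_0$ nodes. Hence every child of the root pays precisely $\dd_0/(n/\dd_0)=\dd_0^2/n$, so $\min_{v_i}cost_T(i)=\dd_0^2/n$ and $1/\min_{v_i}cost_T(i)=n/\dd_0^2$. This is exactly the step that upgrades the first-power $\ln\ln/\ln$ factor of Theorem~\ref{FR_UB_general} to a squared factor: a general stable tree only guarantees a largest top subtree of size $\Theta(n)$, and thus min cost $\approx \dd_0/n$, whereas balancedness forces every top subtree to have size $n/\dd_0$ and thus min cost $=\dd_0^2/n$.

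Second, I would bound the maximum (leaf) cost by an absolute constant. The cost of a leaf equals $\sum_{j=1}^{h}\dd_{j-1}/|T_j|$, and since $|T_{j-1}|=\dd_{j-1}|T_j|+1$ while the hypothesis gives $\dd_{j-1}\le 2\dd_j+1$, the subtree sizes grow at least geometrically up the path; consequently the summands decay geometrically and the series is bounded by a constant. This is precisely the convergent (zeta-type) mechanism used in the proof of Theorem~\ref{thm:PoA} and behind the average-cost estimate $a_h\le 2.4318$ of Lemma~\ref{lem:avg_cost_in_BT}. The fixed bottom terms $1,1,\tfrac45,\dots$ coming from the tail $(0,1,2,4,\dots)$ of the degree sequence dominate this sum, and the displayed constant $2.4318$ is inherited from Lemma~\ref{lem:avg_cost_in_BT}; any discrepancy between the leaf cost and this constant is absorbed by the slack in the next step, since the true root degree is typically far above its worst-case lower bound.

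Finally I would combine the two estimates: $\FR(T)=(\max_{v_i}cost_T(i))\cdot n/\dd_0^2$, and substituting $\dd_0\ge \ln(4\sqrt{n/5})/\ln\ln(4\sqrt{n/5})$ from Theorem~\ref{thm:LB_deg_of_the_root} turns $n/\dd_0^2$ into $n(\ln\ln(4\sqrt{n/5}))^2/(\ln(4\sqrt{n/5}))^2$, yielding the claimed bound. I expect the second step to be the main obstacle: one must extract a clean constant for the leaf cost and reconcile it with the $2.4318$ of Lemma~\ref{lem:avg_cost_in_BT}, so the delicate work lies in organizing the geometric estimate of the bottom terms and checking that the gap between $\dd_0$ and its lower bound leaves enough room, rather than in the final substitution, which is routine.
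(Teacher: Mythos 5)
Your decomposition is exactly the paper's: the minimum cost is paid by a child of the root and equals $\dd_0^2/n$ by balancedness, the maximum cost is bounded by the constant $2.4318$, and the claimed bound follows by substituting the root-degree lower bound of Theorem~\ref{thm:LB_deg_of_the_root}. The first and third steps are correct and coincide with the paper's proof.

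The gap is the second step, and you have in fact located it yourself: the maximum (leaf) cost in these trees is \emph{not} at most $2.4318$, so neither your geometric-series estimate nor an appeal to Lemma~\ref{lem:avg_cost_in_BT} can close it. Lemma~\ref{lem:avg_cost_in_BT} bounds the \emph{average} cost $sc_h/(n_h-1)$, and the leaf cost strictly exceeds the average. Concretely, in the extremal balanced tree ${\mathcal T}_7$ with degree sequence $(0,1,2,4,9,19,39,79)$ and subtree sizes $1,2,5,21,190,3611,140830$, a leaf pays
\[
1+1+\frac{4}{5}+\frac{9}{21}+\frac{19}{190}+\frac{39}{3611}+\frac{79}{140830}\approx 3.34 ,
\]
and already the three bottom terms you yourself list sum to $1+1+\tfrac45=2.8>2.4318$, so no organization of the geometric tail can rescue the constant. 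Your fallback --- that the discrepancy is ``absorbed by the slack'' because the true $\dd_0$ far exceeds the bound of Theorem~\ref{thm:LB_deg_of_the_root} --- is a plausible idea, but as written it is the sketch of a different proof, not a completed step: one would have to quantify how much larger $\dd_0$ is than $\ln\bigl(4\sqrt{n/5}\bigr)/\ln\ln\bigl(4\sqrt{n/5}\bigr)$ for every admissible degree sequence and check that the factor gained is at least $3.34/2.4318$. For what it is worth, the paper's own proof has precisely the same hole: it cites Lemma~\ref{lem:avg_cost_in_BT} as if it bounded the cost of \emph{any} agent rather than the average. So your reconstruction is faithful to the paper's argument, but neither it nor the paper, as stated, actually establishes the constant $2.4318$; a clean fix is to replace it by a valid upper bound on the leaf cost (obtained from your geometric estimate, roughly $3.34$) or to carry out the slack argument explicitly.
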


\begin{proof}
By Lemma~\ref{lem:avg_cost_in_BT}, the cost of any agent in a balanced stable tree is at most 2.4318. 
By Theorem~\ref{thm:LB_deg_of_the_root}, the in-degree of the root is at least $\frac{\ln(4\sqrt{n/5})}{\ln\ln(4\sqrt{n/5})}$, while the size of a tree adjacent to the root $r$ is $n/\indeg(r)$.
Then the \FR for any stable balanced tree is at most $\frac{2.4318n\cdot \left(\ln\ln(4\sqrt{n/5})\right)^2}{\left(\ln(4\sqrt{n/5})\right)^2}$.
\end{proof}

\section{Extensions for Future Work: The Path Version and Coalitions}\label{sec:extension}
A natural extension of our model is to allow for a richer strategy space. Instead of selecting a single outgoing edge, agents could strategically select a complete path towards the root $r$. This version, called the \emph{path-\SNCG}, is closer to the fair single-source connection game by Anshelevich et al.~\cite{ADTW08,ADKTWR}. See Appendix~\ref{apx_path_version} for a formal definition of the path-\SNCG.

We give some preliminary results relating the equilibria of the \SNCG to the equilibria of the path-\SNCG. Our results indicate that studying the path-\SNCG, in particular its PoA and PoS, is a promising next step.
We start with showing that also in the path-\SNCG all equilibria must be trees.

\begin{lemma}\label{lemma:path_ne_trees}
Any equilibrium network in the path-\SNCG is a tree. 
\end{lemma}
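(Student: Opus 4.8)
The plan is to prove that, in any equilibrium, every non-root node has out-degree exactly $1$ in the induced network $G$; together with reachability of $r$ this forces $G$ to be an in-arborescence rooted at $r$. First I would record two easy facts. Since agent $i$'s strategy is a simple $v_i$-$r$ path, its first edge leaves $v_i$, so every non-root node has out-degree at least $1$. Moreover every agent can guarantee finite cost by playing the single-edge path $(v_i,r)$, so in any equilibrium every agent's path reaches $r$; hence $G$ is connected and every node reaches $r$. Consequently, if no node has out-degree $\geq 2$, then each of the $n$ non-root nodes has a unique outgoing edge, $r$ has none, and reachability rules out directed cycles, so $G$ is a tree.

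Then I would suppose for contradiction that some node $v$ has two outgoing edges $(v,a)$ and $(v,b)$ with $a\neq b$. Pick an agent $i$ whose path uses $(v,a)$ and an agent $j$ whose path uses $(v,b)$; both paths visit $v$. Let $\pi_a$ (resp.\ $\pi_b$) be the tail of $i$'s (resp.\ $j$'s) path from $v$ to $r$, and write $c_a=\sum_{(x,y)\in\pi_a}\indeg(y)/x_{(x,y)}$ and $c_b=\sum_{(x,y)\in\pi_b}\indeg(y)/x_{(x,y)}$ for the corresponding tail costs in $G$, where $x_{(x,y)}$ is the number of agents routing through edge $(x,y)$. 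I would consider the deviation in which $i$ keeps its prefix up to $v$ and then follows $\pi_b$ to $r$ (shortcutting to a simple path if the concatenation repeats a vertex, which only lowers the cost).

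The crux is a monotonicity estimate. Rerouting $i$ along $\pi_b$ adds $i$ only to edges that already exist in $G$, so the user count of every edge of $\pi_b$ does not decrease and strictly increases on $(v,b)$ (from $x_{(v,b)}$ to $x_{(v,b)}+1$); conversely, removing $i$ from $\pi_a$ can only delete edges, which can only decrease in-degrees throughout $G$. Writing the new network as $G'$, each term of $i$'s new tail cost satisfies $\indeg_{G'}(y)/x'_{(x,y)}\leq \indeg(y)/x_{(x,y)}$, with a strict drop on $(v,b)$ because $\indeg(b)\geq 1$; hence $i$'s new tail cost is strictly less than $c_b$. Since the prefix in-degrees can likewise only decrease, the prefix cost does not increase, so equilibrium of $i$ gives $c_a<c_b$. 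The symmetric deviation of $j$ along $\pi_a$ yields $c_b<c_a$, a contradiction. Therefore no node has out-degree $\geq 2$, and $G$ is a tree.

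The step I expect to be the main obstacle is controlling the coupling between the removed tail $\pi_a$ and the added tail $\pi_b$: beyond $v$ the two tails may share vertices and edges and may merge, and deleting edges of $\pi_a$ can alter in-degrees both on $\pi_b$ and on $i$'s prefix. The resolution I would push through is purely order-theoretic: edge deletion is monotone non-increasing on all in-degrees, and adding a single user is monotone non-decreasing on all edge loads, so both effects move the deviation's cost in the favorable direction and preserve the strict inequality contributed by edge $(v,b)$.
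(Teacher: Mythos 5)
Your proof is correct and follows essentially the same route as the paper's: at a node where two paths to $r$ diverge, an agent on one branch deviates onto the other branch and strictly gains from the additional cost-sharing, yielding a contradiction with equilibrium. The paper phrases this with a WLOG choice of the cheaper branch and a divergence node chosen closest to the root, whereas you run the deviation symmetrically (deriving $c_a < c_b$ and $c_b < c_a$) and explicitly handle the re-merging and in-degree-change subtleties via monotonicity and shortcutting --- a more careful write-up of the same underlying argument.
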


\begin{proof}
	We prove the statement via contradiction. Assume there is an agent $a$ who has two paths $P$ and $P^\prime$ to the root $r$. If there are more than one such agent, let $a$ be the last agent in $P$ which also belongs to $P^\prime$, thus, the one closest to the root. Without loss of generality, let $\cost(P) \leq \cost(P^\prime)$. Let~$a^\prime$ be an agent who chooses $P^\prime$ and let $\cost^\prime(P)$ be the cost of $P$ if $a^\prime$ picks path $P$ instead of path~$P^\prime$. It holds that $\cost^\prime(P) < \cost(P)$ since there is an additional agents who pays for the edges in~$P$. Hence, replacing $P^\prime$ by $P$ is an improvement for $a^\prime$ and therefore every agent has a unique path to the root~$r$.
\end{proof}

\noindent Now we show that the \SNCG can be considered as a special case of the path-\SNCG since all equilibrium trees of the \SNCG are equilibria in the path-\SNCG but not vice versa. 
 
\begin{theorem}\label{thm:path_version_eq_neighb_version_comm_sink}
 The set of NE in the path-\SNCG is a superset of the set of NE in the \SNCG.
 \end{theorem}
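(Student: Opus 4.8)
The plan is to prove the containment $\mathrm{NE}(\text{\SNCG}) \subseteq \mathrm{NE}(\text{path-\SNCG})$ by showing that every equilibrium $T$ of the \SNCG, read as a strategy profile of the path-\SNCG in which each agent $i$ plays its unique $v_i$-$r$ path of $T$, is stable in the path-\SNCG. In this profile the induced network, all edge loads, and all in-degrees coincide with those of $T$, so every agent incurs exactly its \SNCG-cost $\cost_T(i)$; it therefore suffices to rule out profitable path deviations. The structural difference driving the argument is the re-routing dynamics: when $i$ changes its single edge in the \SNCG the whole subtree $T(v_i)$ follows and shares the cost of the newly used edges, whereas when $i$ re-routes in the path-\SNCG all other agents, including $i$'s descendants, keep their paths, so $i$ travels \emph{alone} and pays in full for every edge not already carried by someone else.

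First I would reduce an arbitrary deviation path $Q = v_i, w, \dots, r$ to the canonical form $Q^\ast$ consisting of the first edge $(v_i,w)$ followed by the tree path of $w$ to $r$. The intuition is that, since the deviating agent travels alone, every edge of $Q$ not already in $T$ is paid in full by $i$, while any tree edge it joins is heavily discounted by the subtree already using it; hence, once $i$ has reached a node of $T$, it can do no better than to continue along the tree. I would make this precise as $\cost(Q^\ast) \le \cost(Q)$, so that it suffices to refute canonical deviations.

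For a canonical deviation $Q^\ast$ I would compare it edge by edge with the \SNCG move in which $i$ re-activates its edge as $(v_i,w)$ and lets its subtree follow; this move produces, for $i$, a path of exactly the same shape as $Q^\ast$. The comparison rests on two monotonicities. Every edge traversed by the lone agent carries a load no larger than in the subtree-move (on the first edge $1$ versus $|T(v_i)|$, and on each later edge the original load versus that load plus $|T(v_i)|-1$), and a head in-degree no smaller (keeping the old edge in place preserves an incoming edge at $s_i$ that the subtree-move destroys). Thus the charge to $i$ dominates the \SNCG charge term by term, giving $\cost(Q^\ast) \ge \cost_{T'}(i) \ge \cost_T(i)$, where $T'$ is the \SNCG tree after the subtree-move and the last inequality is exactly the statement that $i$ has no improving single-edge move in the equilibrium $T$. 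Chaining these bounds yields $\cost(Q) \ge \cost(Q^\ast) \ge \cost_T(i)$, so no deviation is improving and $T$ is stable in the path-\SNCG.

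I expect the reduction step to be the main obstacle. Establishing $\cost(Q^\ast)\le\cost(Q)$ amounts to proving that a single agent routing to $r$ against the fixed background of the other agents' tree paths cannot gain by leaving the tree after it first reaches a tree node, and the delicate point is that straying off the tree lets $i$ reach a differently loaded edge only at the price of a full in-degree per new edge. I would handle this by a case analysis on where $Q$ first meets $i$'s subtree and on whether $w$ is a proper ancestor of $v_i$, invoking the monotonicity of subtree sizes and of in-degrees along root paths from Lemma~\ref{lemma:degree_formula} and Corollary~\ref{corollary:degree_seguence_monotonic}. By contrast, once the reduction is in place the edge-by-edge comparison for canonical deviations is routine, and the remaining fact that the profile is even a tree is already guaranteed by Lemma~\ref{lemma:path_ne_trees}.
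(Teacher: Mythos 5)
Your overall strategy---convert a profitable path deviation into a profitable single-edge swap in the \SNCG and then invoke the \SNCG equilibrium condition---is the same as the paper's, and your edge-by-edge domination in the last step (loads $1$ vs.\ $|T(v_i)|$ on the first edge and $|T(x)|+1$ vs.\ $|T(x)|+|T(v_i)|$ afterwards, with in-degrees no smaller for the lone deviator) is correct. The genuine gap is your reduction step, and it is not a technicality but the crux. The paper avoids it entirely by anchoring the comparison not at the \emph{first} edge of the deviation path $Q$ but at the alone-paid edge $(u,v)$ of $Q$ that is \emph{closest to the root}: past $(u,v)$ every edge of $Q$ is shared, hence a tree edge, hence (out-degrees being one) the suffix of $Q$ is exactly the tree path from $v$; the prefix of $Q$ before $(u,v)$ is simply discarded as nonnegative cost, and $\indeg(v)$ plus the shared tree-path cost from $v$ already dominates the cost of the \SNCG swap of $i$'s edge to $v$. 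Anchoring at the first edge, as you do, forces you to prove $\cost(Q^\ast)\le\cost(Q)$, i.e., that a lone deviator never gains by leaving the tree after its first hop. That claim is false for general trees---in a Hamiltonian path rooted at $r$, a lone traveler at depth $k$ pays roughly $\ln k$ following the tree but only about $2$ by jumping straight to $r$ via one alone-paid edge---so any proof of it must exploit the equilibrium structure of $T$ globally; the tools you cite (Lemma~\ref{lemma:degree_formula}, Corollary~\ref{corollary:degree_seguence_monotonic}) are local degree-monotonicity facts along a single root path and cannot bound the tree-path cost from $w$ against off-tree shortcuts into other branches. In effect, the reduction you defer is as hard as the theorem itself.

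There is a second concrete failure of the first-edge anchoring: deviations whose first edge points to a node $w$ inside $v_i$'s own subtree. Then your canonical path $Q^\ast$ does not exist (the tree path from $w$ runs back through $v_i$, so it is not a simple $v_i$-$r$ path), and the corresponding \SNCG swap has infinite cost since it creates a cycle, so the chain $\cost(Q)\ge\cost(Q^\ast)\ge\cost_{T'}(i)\ge\cost_T(i)$ collapses---the \SNCG equilibrium hypothesis says nothing about such targets, even though in the path-\SNCG such a deviation has finite cost and must still be ruled out. The paper's anchor handles this automatically: since the suffix of $Q$ after $(u,v)$ is the tree path from $v$ and $Q$ cannot revisit $v_i$, the node $v$ necessarily lies outside $v_i$'s subtree, so the swap to $v$ is a legitimate finite-cost \SNCG move. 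Finally, note that the paper's proof also exhibits an explicit path-\SNCG equilibrium that is not an \SNCG equilibrium (Figure~\ref{fig:pathEQ_samples}); you prove only the containment, which matches the literal word ``superset'' but omits that part of the paper's argument.
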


\begin{proof}
	To prove the claim, we show that any NE in the \SNCG is a NE in the path-\SNCG. Then we provide an example of a NE in the path-\SNCG that is not in equilibrium for the \SNCG.
	
	Consider an arbitrary NE $T_N$ of the \SNCG and assume towards a contradiction that there is an agent~$a$ who still has an improvement in $T_N$ in the path-\SNCG. Thus, agent $a$ switches to a better path~$P'$. In this case, agent $a$ has to pay at least one edge, say edge $e'=(u,v)$, of $P'$ by herself. This is true because of Lemma \ref{lemma:path_ne_trees}. If there is more than one such edge, then let $(u,v)$ be the edge which is closest to the root $r$. In that case, switching agent $a$'s edge to $v$ is also an improvement for agent $a$ in the \SNCG, since agent $a$ pays $indeg(v)$ for $(u,v)$ in the path-\SNCG, whereas in the \SNCG she will pay at most $\frac{indeg(v)}{|T(a)|} \leq indeg(v)$ for the new edge $(a,v)$. 
	
	To show that there exist NE in the path-\SNCG which are not in NE for the \SNCG, consider the tree shown in Figure~\ref{fig:pathEQ_samples}~(left). Let $r$ be the common sink for all agents. Since agent $g$ can swap her edge towards $h$ and decrease her costs from $\frac{14}{9}$ to $\frac{50}{33}$ the depicted tree is not in equilibrium for the \SNCG.
	
	To show that the tree in Figure~\ref{fig:pathEQ_samples}~(middle) is indeed in equilibrium in the path-\SNCG, we will show that no agent can unilaterally improve her strategy. Due to Lemma \ref{lemma:path_ne_trees}, when an agent $x$ switches to another path $P^\prime$, $x$ has to pay for at least one edge by herself. Note that due to symmetry, many agents can be treated equally. 
	\begin{itemize}
		\item Agent $a$ has costs of $\frac29$. Since any deviation from the current strategy would lead to costs larger than $1$, since agent $a$ needs to pay for at least one edge by herself, no improvement is possible. The same is true for agent $j$.
		\item Agent $b$ has costs of $\frac{13}{18}$. Again, any deviation from the current strategy leads to costs larger than~$1$, hence, no improvement is possible. The same holds for agents $c$, $h$ and $k$.
		\item Agent $d$ has costs of $\frac{19}{18}$. Since the current costs are smaller than $2$, agent $d$ can only pay for a single edge with costs $1$ by herself. This is only true for an edge to the leaf nodes $h$, $i$, $p$ and $q$. However, any deviation from the current strategy which contains an edge to a leaf node has costs larger than $\frac{11}{6}$ and hence, no improvement is possible. The same holds for the agents $e$, $l$ and $m$.
		\item Agent $f$ has costs of $\frac{14}{9}$. Since the current costs are smaller than $2$, agent $d$ can only pay for a single edge with costs $1$ by herself. Again, this is only true for an edge to one of the leaf nodes. Any deviation from the current strategy which contains an edge to a leaf node has costs larger than $\frac{11}{6}$ and hence, no improvement is possible. The same holds for the agents $g$, $n$ and $o$.
		\item Agent $h$ has costs of $\frac{23}{9}$. Since the current costs are smaller than $3$, agent $h$ can only pay for edges with costs of maximum $2$ by herself. Choosing a path over a node with current degree $1$ costs $2$. Hence, agent $h$ is not able to pay for another edge by herself. However, this leads to costs of at least $\frac35$ and therefore this is not an improving strategy change. In addition, any deviation from the current strategy which contains an edge to a leaf node has costs of at least $\frac{161}{60}$ and hence, no improvement is possible. The same is true for the agents $i$, $p$ and $q$. \qedhere
	\end{itemize}	
\end{proof}

	\begin{figure}
	\centering
	\begin{subfigure}{0.3\textwidth}
		\centering
		\includegraphics[width=0.4\linewidth]{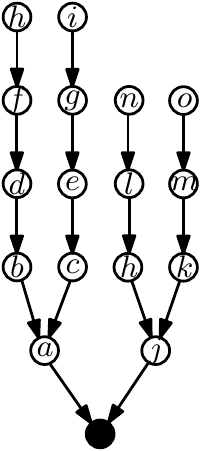}
	\end{subfigure}
	\begin{subfigure}{0.3\textwidth}
		\centering
		\includegraphics[width=0.4\linewidth]{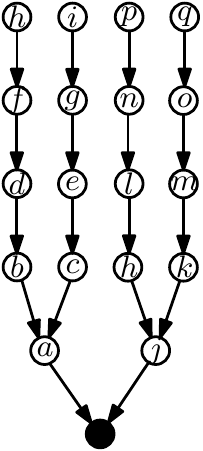}
	\end{subfigure}
	\begin{subfigure}{0.3\textwidth}
		\centering
		\includegraphics[width=0.6\linewidth]{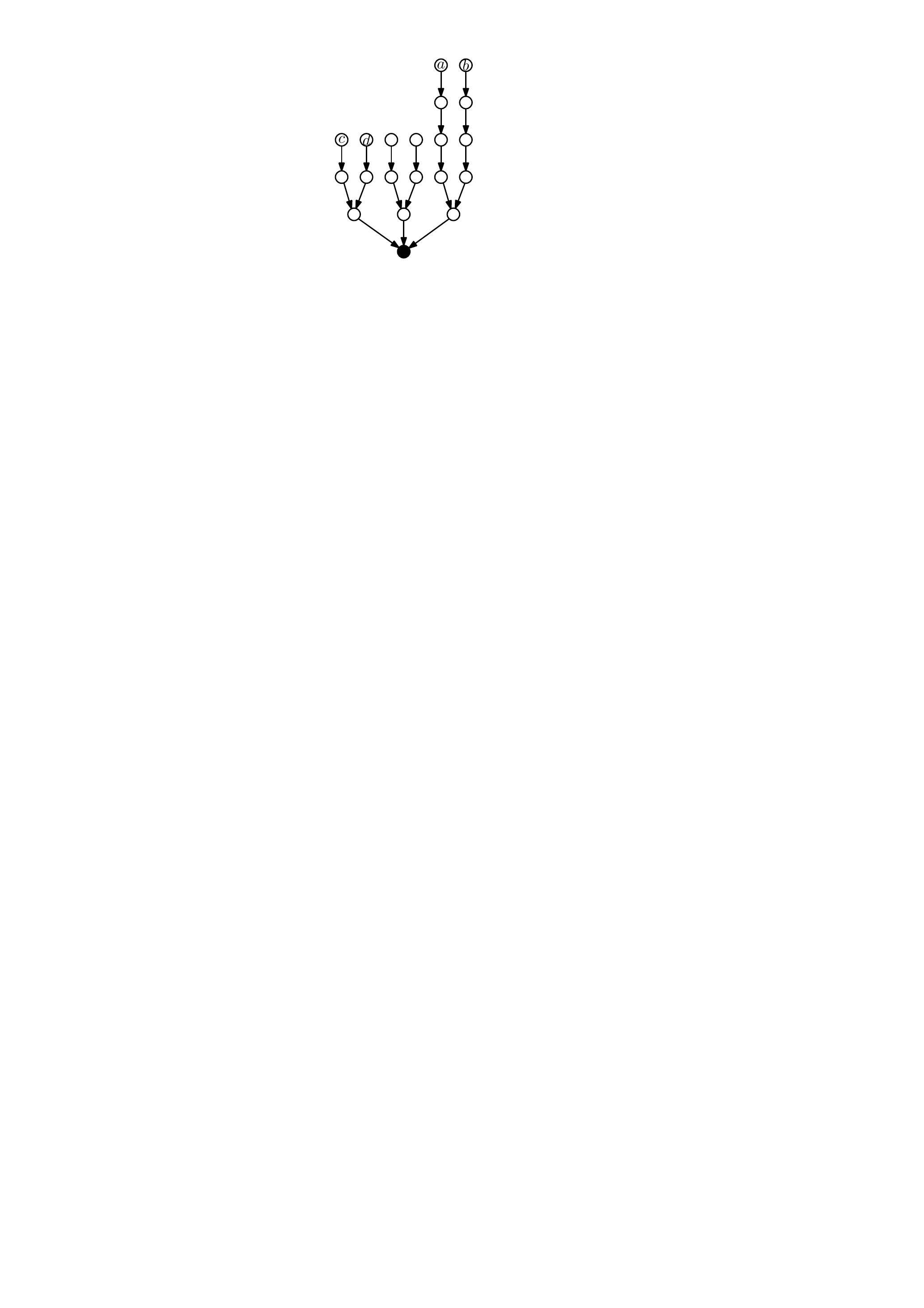}
	\end{subfigure}
	\caption{Left: A path-\SNCG NE that is not a \SNCG NE for $n = 16$. Middle: A path-\SNCG NE that is not a \SNCG NE for $n = 18$. Right: A \SNCG NE that is not a strong path-\SNCG NE. }
	\label{fig:pathEQ_samples} 
\end{figure}

\noindent We showed for the \SNCG that for $n = 16$ and $n = 18$ there exists no stable network. We contrast this negative result with a NE existence proof for the path-\SNCG for the corresponding values. Figure~\ref{fig:pathEQ_samples} (left and middle) show equilibrium trees for the path-\SNCG for $n = 16$ and $n = 18$, respectively.

\begin{theorem}\label{path_existence}
For $n = 16$ and $n = 18$ there exists a stable network for the path-\SNCG.
\end{theorem}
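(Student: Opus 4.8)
The plan is to prove the statement constructively: for each of the two values $n=16$ and $n=18$ I would exhibit one concrete tree and verify that it is a NE of the path-\SNCG. The two candidate networks are precisely the ones drawn in Figure~\ref{fig:pathEQ_samples}, the left tree for $n=16$ and the middle tree for $n=18$. The case $n=18$ has in fact already been fully carried out inside the proof of Theorem~\ref{thm:path_version_eq_neighb_version_comm_sink}, where the middle tree was checked agent by agent; hence the only remaining work is the analogous verification of the left tree for $n=16$.

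The engine of the verification is Lemma~\ref{lemma:path_ne_trees}. Because every path-\SNCG equilibrium (and every candidate deviation target) is a tree, an agent $x$ who abandons her current path for a new path $P'$ must fully finance, on her own, at least one edge of $P'$, namely the edge of $P'$ closest to the root that no other agent is already using. Financing an edge into a node of current in-degree $k$ costs $k+1$, so any deviation costs at least $1$, and more generally an agent whose current cost is below $m$ can only afford to finance edges whose total cost is below $m$. This immediately bounds which deviations are worth examining. First I would tabulate, for every node of the $n=16$ tree, its current cost $\sum_{(u,v)\in P_x}\indeg(v)/|T(u)|$, and then group the agents into the orbits of the tree's automorphism group so that only one representative per orbit needs to be treated.

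The comparison then splits according to the size of the current cost, exactly as in the $n=18$ analysis. An agent whose cost is strictly below $1$ has no improving move at all, since any deviation forces her to pay at least $1$ for a single edge. An agent whose cost lies in $[1,2)$ can finance only one unit-cost edge, i.e.\ an edge into a leaf, so I would enumerate the few resulting target paths and confirm that each is no cheaper than her current cost; agents with larger current cost are handled with a proportionally larger budget of edges they are allowed to finance. Running through the representatives of all orbits establishes that no agent can improve, so the left tree is a NE for $n=16$, which together with the already-verified $n=18$ case completes the proof.

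The main obstacle is not conceptual but combinatorial: the argument is a finite yet fiddly case analysis, and the real risk is overlooking a candidate deviation. The safeguard is to work orbit by orbit under the tree's symmetry, which keeps the number of representatives small, and to lean on the uniform lower bound \emph{``at least one fully financed edge''} to discard the overwhelming majority of deviations without computing their exact cost.
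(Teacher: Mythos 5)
Your proposal follows essentially the same route as the paper's own proof: you exhibit the two trees from Figure~\ref{fig:pathEQ_samples}, reuse the $n=18$ verification already done in the proof of Theorem~\ref{thm:path_version_eq_neighb_version_comm_sink}, and verify the $n=16$ tree via Lemma~\ref{lemma:path_ne_trees}'s ``at least one fully financed edge'' budget argument combined with symmetry grouping and cost-bucket case splitting --- exactly the paper's method. The only difference is that you leave the explicit per-agent cost tabulation for $n=16$ as an outline rather than carrying it out, but the plan is sound and would complete the proof as the paper does.
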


\begin{proof}
	In the proof of Lemma~\ref{thm:path_version_eq_neighb_version_comm_sink} we already showed that there exists an equilibrium for the path-\SNCG for $n = 18$. Hence, it remains to show that there also exists an equilibrium for $n = 16$. Consider the tree shown in Figure~\ref{fig:pathEQ_samples}~(left). We will show that no agent can unilaterally improve her strategy. Due to Lemma \ref{lemma:path_ne_trees}, when an agent $x$ switches to another path $P^\prime$, $x$ has to pay for at least one edge by herself. Note that due to symmetry, many agents can be treated equally. 
	\begin{itemize}
		\item Agent $a$ has costs of $\frac29$. Since any deviation from the current strategy would lead to costs larger than $1$, since agent $a$ needs to pay for at least one edge by herself, no improvement is possible. The same is true for agents $b$ and $c$ who have costs of $\frac{13}{18}$, agent $j$ who has costs of $\frac27$ and agents $h$ and $k$ who have costs of $\frac{20}{21}$.
		\item Agent $d$ has costs of $\frac{19}{18}$. Since the current costs are smaller than $2$, agent $d$ can only pay for a single edge with costs $1$ by herself. This is only true for an edge to the leaf nodes $h$, $i$, $n$ or $o$. However, any deviation from the current strategy which contains an edge to a leaf node hast costs larger than $\frac32$. Hence, no improvement is possible. The same holds for agent $e$.
		\item Agent $f$ has costs of $\frac{14}{9}$. Since the current costs are smaller than $2$, agent $f$ can only pay for a single edge with costs $1$ by herself. This is only true for an edge to the leaf nodes $h$, $i$, $n$ or $o$. However, any deviation from the current strategy which contains an edge to a leaf node hast costs larger than $\frac{11}{6}$. Hence, no improvement is possible. The same holds for agent $g$.
		\item Agent $h$ has costs of $\frac{23}{9}$. Since the current costs are smaller than $3$, agent $h$ can only pay for edges with costs of maximum $2$ by herself. Choosing a path over a node with current degree $1$ costs $2$. Hence, agent $h$ is not able to pay for another edge by herself. However, this leads to costs of at least $\frac{118}{45}$ and therefore this is not an improving strategy change. In addition, any deviation from the current strategy which contains an edge to a leaf node hast costs of at least $\frac{31}{12}$ and hence, no improvement is possible. The same is true for agent~$i$.
		\item Agent $l$ has costs of $\frac{61}{42}$. Since the current costs are smaller than $2$, agent $l$ can only pay for a single edge with costs $1$ by herself. This is only true for an edge to the leaf nodes $h$, $i$, $n$ or $o$. However, any deviation from the current strategy which contains an edge to a leaf node hast costs larger than $\frac{55}{21}$. Hence, no improvement is possible. The same holds for agent $m$.
		\item Agent $n$ has costs of $\frac{103}{42}$. Since the current costs are smaller than $3$, agent $n$ can only pay for edges with costs of maximum $2$ by herself. Choosing a path over a node with current degree $1$ costs $2$. Hence, agent $n$ is not able to pay for another edge by herself.  However, this leads to costs of at least $\frac{13}{5}$ and therefore this is not an improving strategy change. In addition, any deviation from the current strategy which contains an edge to a leaf node hast costs of at least $\frac{55}{21}$ and hence, no improvement is possible. The same is true for agent~$o$. \qedhere
	\end{itemize}
\end{proof}

\noindent Together with Theorem~\ref{path_existence} and since any NE in the \SNCG is a NE in the path-\SNCG, we go along with Conjecture~\ref{conj:existence} and believe that for all values of $n$ stable trees exist for the path-\SNCG.

\begin{conjecture}
For any $n \in \mathbb{N}$ a pure NE exists in the path-\SNCG.
\end{conjecture}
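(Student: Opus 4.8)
The plan is to deduce the statement from the corresponding result for the neighbor version, the \SNCG. By Theorem~\ref{thm:path_version_eq_neighb_version_comm_sink}, every pure NE of the \SNCG is also a pure NE of the path-\SNCG, and by Theorem~\ref{path_existence} a stable network for the path-\SNCG already exists for the two exceptional sizes $n=16$ and $n=18$. Hence it suffices to establish Conjecture~\ref{conj:existence}, i.e., that the \SNCG admits a pure NE for every $n \notin \{16,18\}$; existence for the path-\SNCG then follows for all $n$ at once. I would therefore spend the bulk of the argument on Conjecture~\ref{conj:existence} and only invoke the superset relation at the very end.

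To prove Conjecture~\ref{conj:existence} I would use an inductive (dynamic-programming) construction that assembles large stable trees from smaller stable ones. The key structural fact is Lemma~\ref{thm:subtreeNE}: every subtree of a stable tree is itself stable in its subgame, so stability is a compositional property. The plan is to maintain a catalogue of stable trees together with their sizes, degree sequences, and root in-degrees, and to show that any target size $n \neq 16,18$ can be realized by fixing a root $r$ and attaching previously constructed stable subtrees as the children of $r$. The admissibility of such an attachment is governed precisely by the conditions proved in Section~\ref{sec:balanced_trees}: Conditions~1 and~2 (Lemmas~\ref{lm:condition_1} and~\ref{lm:condition_2}) forbid profitable upward swaps, Conditions~3--6 together with Corollaries~\ref{cor:balanced_trees_lateral_swaps_towards_internal_vertices} and~\ref{cor:extremal_balanced_trees_lateral_swaps_towards_internal_vertices} forbid lateral swaps into internal nodes and leaves of sibling subtrees, and the sibling degree relation of Lemma~\ref{thm:deg_up} controls swaps between the children of $r$. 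The stable balanced trees of Theorem~\ref{thm:some_balanced_trees_are_NE}, together with the computationally verified small equilibria, would serve as the generators seeding the induction.

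The main obstacle is twofold and lies entirely in the combinatorics of the assembly. First, the cross-subtree (lateral) swaps are global: stability must be re-checked for every node in one child-subtree against every node in every other child-subtree, and the controlling inequalities in Conditions~3--6 are size- and degree-ratio constraints that must all hold simultaneously. I would handle this by restricting the catalogue to subtrees whose root in-degrees and sizes satisfy the monotone bounds $\dd_{i+1} < \dd_i \leq 2\dd_{i+1}+1$ and the size ratios forced by $|T(v)| \geq (\indeg(v)-1)\cdot|T(w)|+2$ for a child $w$, so that the lateral conditions hold uniformly. Second, and harder, is the \emph{covering} problem: one must show that the sizes reachable under these attachment rules are exactly $\mathbb{N}\setminus\{16,18\}$. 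Here I would prove a closure statement --- that once all sizes in a suitable initial interval are realizable as stable trees respecting the invariants, every larger size is realizable by appending an appropriately sized stable subtree while preserving the invariants --- and seed it with an explicit base range (the unique equilibria for small $n$ plus the balanced generators), mirroring the brute-force evidence that equilibria exist for all $1\leq n\leq 100$ except $16$ and $18$.

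A secondary, more self-contained route worth exploring is to bypass Conjecture~\ref{conj:existence} and construct path-\SNCG equilibria directly for every $n$, exploiting the lever used throughout the path-version verifications: by Lemma~\ref{lemma:path_ne_trees} any deviating agent must pay for at least one edge entirely on her own, so a tree in which every agent's cost is kept below the cost of any such private edge plus its unavoidable shared remainder is automatically a path-\SNCG NE. This would sidestep the need to rule out upward swaps one level at a time, but designing a single family of trees that simultaneously realizes every size and keeps all agent costs below the relevant threshold looks at least as delicate as the assembly argument above, so I would treat it as a fallback rather than the primary plan.
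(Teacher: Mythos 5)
This statement is a \emph{conjecture} in the paper --- the authors do not prove it. Their entire justification is precisely the reduction you place at the top of your proposal: Theorem~\ref{thm:path_version_eq_neighb_version_comm_sink} (\SNCG equilibria are path-\SNCG equilibria) plus Theorem~\ref{path_existence} (explicit path-\SNCG equilibria for $n=16$ and $n=18$) would yield existence for all $n$, \emph{provided} Conjecture~\ref{conj:existence} holds. That reduction is sound, and you state it correctly. But everything after it in your proposal is a research plan, not a proof, and the plan has a concrete flaw at its center: you claim that Lemma~\ref{thm:subtreeNE} makes stability ``compositional,'' so that a catalogue of stable subtrees can be assembled under a new root. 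Lemma~\ref{thm:subtreeNE} only goes one way --- subtrees of a stable tree are stable --- and the converse, which your assembly needs, is false. The paper's proof of Theorem~\ref{thm:non_existence_stable_tree_for_n_equal_16} is exactly a demonstration of this failure: every candidate for $n=16$ is built from subtrees that are individually stable (the unique equilibria for $n\in\{4,\dots,9\}$), yet every such assembly is destroyed by a cross-subtree deviation. So the direction of the lemma you lean on cannot carry the induction.

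The two remaining ingredients are also unsupported. The swap-exclusion machinery of Section~\ref{sec:balanced_trees} (Conditions 1--6 and Corollaries~\ref{cor:balanced_trees_lateral_swaps_towards_internal_vertices}, \ref{cor:extremal_balanced_trees_lateral_swaps_towards_internal_vertices}) is explicitly tailored to \emph{balanced} trees with degree sequence $(0,1,2,4,\dd_{h-4},\dots,\dd_0)$ and the ratio constraints $\dd_{i+1}<\dd_i\leq 2\dd_{i+1}+1$; several of the lemmas (e.g.\ Lemma~\ref{lm:condition_5}, Lemma~\ref{lm:balanced_trees_lateral_swap_towards_non_sibling_vertex_same_level_minus_one}) have the balancedness hypothesis baked into their statements, so they do not apply ``uniformly'' to heterogeneous assemblies of catalogue subtrees of different heights and sizes. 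And the covering step --- showing the reachable sizes are exactly $\mathbb{N}\setminus\{16,18\}$ --- is asserted as a ``closure statement'' without any argument; note that balanced trees realize only a sparse set of sizes, so the gaps must be filled by non-balanced equilibria, for which the paper (and your proposal) has no stability certificate beyond brute-force computation up to $n=100$. In short: the reduction is fine, but the statement remains open because Conjecture~\ref{conj:existence} remains open, and your sketch does not close it. Your fallback route (directly constructing path-\SNCG equilibria using the fact from Lemma~\ref{lemma:path_ne_trees} that a deviator must privately pay for at least one edge) is a genuinely different and potentially easier direction, since the path version demonstrably admits equilibria where the \SNCG does not --- but it, too, is left entirely unexecuted.
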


\noindent An agent $a$ in the \SNCG benefits from the fact that if $a$ changes her strategy and switches her edge towards another node the costs of the new edge is also shared among all of $a$'s ancestors.
It seems natural to consider a strategy change in the \SNCG as a coalitional strategy change in the path-\SNCG by the coalition consisting of agent $a$ and all her ancestors. So NE in the \SNCG could be in strong NE~\cite{AFM09} for the path\SNCG. However, we show that this is not true, see Figure~\ref{fig:pathEQ_samples} (right).

\begin{theorem}\label{thm_neighEQ_not_strongpathEQ}
There is a NE in the \SNCG which is not in strong NE for the path-\SNCG. 
\end{theorem}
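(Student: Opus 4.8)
The plan is to prove the statement by an explicit witness, namely the tree $T$ depicted in Figure~\ref{fig:pathEQ_samples} (right). Recall that a profile fails to be a strong NE as soon as some coalition $C$ admits a joint deviation after which \emph{every} member of $C$ strictly decreases her cost. Accordingly, the argument splits into two parts: first I would show that $T$ is a NE of the \SNCG (hence, by Theorem~\ref{thm:path_version_eq_neighb_version_comm_sink}, also a NE of the path-\SNCG), and then I would exhibit such a coalition in the path-\SNCG, which rules out $T$ being a strong NE.

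For the first part, I would verify that no agent has a profitable single-edge swap in $T$ under the \SNCG cost function $\sum_{(u,v)\in P_i}\frac{\indeg(v)}{|T(u)|}$. Exactly as in the stability checks carried out in the proofs of Theorem~\ref{thm:path_version_eq_neighb_version_comm_sink} and Theorem~\ref{path_existence}, I would group agents by the symmetry of $T$ and, for each group, compare the current cost with the best cost attainable by redirecting the single outgoing edge to any other node. Once every group is shown to have no improving move, $T$ is a \SNCG NE, and Theorem~\ref{thm:path_version_eq_neighb_version_comm_sink} immediately promotes it to a path-\SNCG NE.

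For the second part, I would identify a coalition $C$ — the natural candidate being an agent $a$ together with the agents of its subtree $T(a)$, as suggested by the discussion preceding the theorem — and specify a joint deviation in which every member of $C$ simultaneously switches to a prescribed path. The key point is that in the path-\SNCG the coalition members can reattach their common substructure at a new location while additionally sharing the upward edges of that location's path with the agents already routing through it; this is an advantage that no single agent can realize on her own, which is precisely why $T$ survives as a NE. I would then recompute, with all deviations applied at once, the cost incurred by each member of $C$ and verify that it is strictly smaller than its cost in $T$.

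The main obstacle is the cost bookkeeping in this last step: when the whole coalition moves simultaneously, both the loads $|T(u)|$ on the shared edges and the in-degrees $\indeg(v)$ of the affected endpoints change in an interdependent way, so the new cost of each member must be evaluated in the \emph{final} deviated profile rather than through a sequence of single moves. I would therefore first fix the post-deviation tree, read off the new subtree sizes and in-degrees along each member's new path, and only then compare the two cost vectors, confirming a strict decrease for every agent in $C$.
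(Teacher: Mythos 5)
Your overall skeleton (explicit witness tree, verify it is a \SNCG NE and hence by Theorem~\ref{thm:path_version_eq_neighb_version_comm_sink} a path-\SNCG NE, then exhibit an improving coalition) matches the paper, and the first part is fine. The gap is in your choice of coalition. You propose $C=\{a\}\cup T(a)$, which ``reattaches its common substructure at a new location'' --- but this is exactly the type of coalitional deviation that provably cannot work, and the discussion preceding the theorem makes the opposite point from the one you read into it. If $a$ together with all agents of $T(a)$ reattach their subtree as a block below some node $v$, then for every member $i$ of the coalition the cost of the internal part of her path (inside $T(a)$) is unchanged, the new edge $(a,v)$ costs $(\indeg(v)+1)/|T(a)|$ to each member, and every edge above $v$ is shared by its old users plus all of $T(a)$ --- these are precisely the shares that $a$ herself would pay in the \SNCG after the single edge swap to $v$. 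Hence every coalition member's cost changes by exactly the same amount as $a$'s cost would change under that \SNCG swap, and since the witness tree is a \SNCG NE this change is nonnegative for every possible reattachment point $v$. So no member of such a coalition can strictly improve, and your second step cannot be completed; the purpose of the paragraph preceding the theorem is precisely to explain that this canonical correspondence is why one might (wrongly) expect every \SNCG NE to be a strong NE in the path-\SNCG, which the theorem then refutes.

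The paper's proof uses a structurally different coalition: two agents $a$ and $b$ lying in different branches (neither a descendant of the other) deviate simultaneously in a ``crossed'' fashion, with $a$ rerouting over node $c$ and $b$ rerouting over node $d$. Neither move is improving on its own (the tree is even a path-\SNCG NE), but each agent's deviation creates the additional sharing that makes the other's deviation profitable, and in the final joint profile both costs drop from $\frac{8}{3}$ to $\frac{109}{42}$. To repair your argument you must switch to a coalition of this kind --- one whose joint move does not reduce to a single-agent move of the \SNCG; your closing remark about evaluating all costs in the final deviated profile is correct and is exactly the bookkeeping needed for that crossed deviation.
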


\begin{proof}
	Consider the tree depicted in Figure~\ref{fig:pathEQ_samples}~(right). We have already seen that the tree is in NE for the \SNCG, cf. Figure~\ref{fig:sample_trees}. However, it is not in strong NE for the path-\SNCG, since agents $a$ and $b$ can form a coalition and jointly change their strategy, such that agent $a$ chooses the path over node $c$ and agent $b$ over node $d$. With this agents $a$ and $b$ can both decrease their cost from $\frac83$ to $\frac{109}{42}$.
\end{proof}

\section{Conclusion}
We have studied a tree formation game to investigate how selfish agents self-organize to connect to a common target in the presence of dynamic edge costs that are sensitive to node degrees. This mimics settings in which nodes can charge prices for offering their routing service and where these prices are guided by supply and demand, i.e., more popular nodes with higher in-degree can charge higher prices to make up for their increased internal coordination cost.

Our main findings are that our game admits equilibrium trees with intriguing properties like low height, low maximum degree, almost optimal cost, and a somewhat fair distribution of the total cost among the agents. The set of equilibrium trees seems to be combinatorially rich, and characterizing stable trees that are not balanced seems an exciting and challenging problem for future research. It would also be interesting to study the degree distribution in stable trees and to evaluate possible connections with power-law degree distributions which are ubiquitous in real-world networks.    

We note in passing that our model can easily be generalized to settings with more than one target node as long as every possible incident edge may be activated. In this case, several disjoint trees, one for each target node, will be formed. Things change if target nodes and agent nodes may be co-located, and exploring this variant might be interesting.  

\section*{Acknowledgment}
We thank Warut Suksompong for many interesting discussions. Moreover, we are grateful to our anonymous reviewers for their valuable suggestions. This work has been partly supported by COST Action CA16228 European Network for Game Theory (GAMENET).

%\newpage

\bibliographystyle{abbrv}
\bibliography{sourcesinkcreation}

\appendix

\section{Formal Definition of the Path Version}\label{apx_path_version}
The path version of the \SNCG, called path-\SNCG, is defined by a complete directed \emph{host-graph} $H = (V,E)$ with $n$ nodes and $k$ target sink pairs $(s_1,t_1),\dots,(s_k,t_k)$, where pair $(s_i,t_i)$ models that a selfish agent $i$ wants to connect $s_i \in V$ and $t_i \in V$. For this, each agent $i$ strategically selects a path $P_i \subseteq E$ which connects $s_i$ and $t_i$. Note that we treat paths simply as sets of edges. 
The $k$-dimensional vector of all chosen paths $\mathbf{p}=(P_1,\dots,P_k)$ then induces the subgraph $G(\mathbf{p})$ of $H$, which is defined as follows: $G(\mathbf{p}) = (V(\mathbf{p}),E(\mathbf{p}))$, where 
$$V(\mathbf{p}) = \bigcup_{i=1}^k \big\{u \mid \exists (u,v) \in P_i\big\} \quad \text{ and } \quad E(\mathbf{p}) = \bigcup_{i=1}^k P_i,$$ that is, the induced subgraph $G(\mathbf{p})$ consists of all edges which are contained in at least one strategy and the corresponding incident nodes of $H$. We call $G(\mathbf{p})$ the \emph{created network} by strategy vector $\mathbf{p}$.

The cost of agent $i$ depends on the structure of the created network $G(\mathbf{p})$. 
We assume that any edge in a strategically chosen path has a price which depends on the in-degree of its ancestor in $G(\mathbf{p})$ and on the number of other agents using that edge. 
Let $U(u,v)$ denote the set of users of edge $(u,v)$ in $G(\mathbf{p})$ which is defined as follows: $U(u,v) = \{i \mid (u,v) \in P_i\}$. 
Then, the cost of agent $i$ in $G(\mathbf{p})$ is 
\begin{equation*}
	\cost_{G(\mathbf{p})}(i) := \begin{cases}
		\sum_{(u,v)\in P_i}\frac{indeg_{G(\mathbf{p})}(v)}{|U(u,v)|},	&	\text{if $s_i$ and $t_i$ are connected,}\\
		\infty,													&	\text{otherwise.}
	\end{cases}
\end{equation*}
The cost function can be interpreted as the total cost of all edges in a path chosen by an agent, 
where the price of each edge is proportional to a load of its endpoint, i.e., its in-degree, fairly shared among users of the edge.

The \emph{social cost} of strategy vector $\mathbf{p}$, $SC(G(\mathbf{p}))$ for short, is simply the sum of the costs of all agents. That is, $SC(G(\mathbf{p})) = \sum_{i=1}^k \cost_{G(\mathbf{p})}(i)$.

We say that strategy vector $\mathbf{p}$ is in \emph{pure Nash equilibrium} if no agent can unilaterally change her strategy and thereby strictly improve her costs. For a pure Nash equilibrium $\mathbf{p}$ we call the corresponding created network $G(\mathbf{p})$ \emph{stable}.

\end{document}